\DeclarePairedDelimiter{\ceil}{\lceil}{\rceil}
\newif\ifpaper
\renewcommand\footnotetextcopyrightpermission[1]{} % removes footnote with conference information in first column
\newcommand{\todo}[1]{{\color{red} TODO: #1}}
\newcommand{\eat}[1]{}
\newcommand{\am}[1]{{\color{blue} \emph{[[AM: #1]]}}}
\newcommand{\change}[1]{{\color{black} #1}}
\newcommand{\conf}[1]{{\color{black} #1}}
\newcommand{\full}[1]{{\color{black} #1}}
\newcommand{\census}{{\tt Adult}\xspace}
\newcommand{\location}{{\tt NYTaxi}\xspace}
\newcommand{\tabledomain}{\mathcal{D}}
\newcommand{\attr}{A}
\newcommand{\attrlist}{attr(R)}
\newcommand{\rowdomain}{dom(R)}
\newcommand{\workload}{{\bf W}}
\newcommand{\wq}{{\bf w}}
\newcommand{\data}{{\bf x}}
\newcommand{\transform}{\mathcal{T}}
\newcommand{\strategy}{{\bf A}}
\newcommand{\noisev}{{\bm \eta}}
\newcommand{\ans}{{\bf y}}
\newcommand{\trans}{\mathbb{T}}
\newcommand{\simf}{\mathbb{S}}
\newcommand{\simt}{\Theta}
\newcommand{\cleanermodel}{\mathcal{C}}
\newcommand{\system}{$A$PEx\xspace}
\newcommand{\lcc}{LCC\xspace}
\newcommand{\wcq}{WCQ\xspace}
\newcommand{\icq}{ICQ\xspace}
\newcommand{\tcq}{TCQ\xspace}
\newcommand{\transFunc}{\textsc{translate}}
\newcommand{\runFunc}{\textsc{run}}
\newcommand{\relaxprivacy}{\textsc{RelaxPrivacy}}
\newcommand{\upperbound}{\epsilon^{u}}
\newcommand{\lowerbound}{\epsilon^{l}}
\newcommand{\squishlist}{
   \begin{list}{$\bullet$}
    {
      \setlength{\itemsep}{0pt}
      \setlength{\parsep}{3pt}
      \setlength{\topsep}{3pt}
      \setlength{\partopsep}{0pt}
      \setlength{\leftmargin}{1.5em}
      \setlength{\labelwidth}{1em}
      \setlength{\labelsep}{0.5em} } }
\newcommand{\squishend}{
    \end{list}  }
\newcommand{\squishenum}{
   
   \begin{list}{scount}{ \usecounter{scount}}
    {
      \setlength{\itemsep}{0pt}
      \setlength{\parsep}{3pt}
      \setlength{\topsep}{3pt}
      \setlength{\partopsep}{0pt}
      \setlength{\leftmargin}{1.5em}
      \setlength{\labelwidth}{1em}
      \setlength{\labelsep}{0.5em} } }
\newcommand{\stitle}[1]{\smallskip \noindent{\bf #1}}
\begin{document}
\title{\system: Accuracy-Aware Differentially \\ Private Data Exploration}

\author{Chang Ge}
\affiliation{%
  \institution{University of Waterloo}
}
\email{c4ge@uwaterloo.ca}

\author{Xi He}
\affiliation{%
  \institution{University of Waterloo}
}
\email{xi.he@uwaterloo.ca}

\author{Ihab F. Ilyas}
\affiliation{%
  \institution{University of Waterloo}
}
\email{ilyas@uwaterloo.ca}

\author{Ashwin Machanavajjhala}
\affiliation{%
  \institution{Duke University}
}
\email{ashwin@cs.duke.edu}

% The default list of authors is too long for headers.
\renewcommand{\shortauthors}{Ge, He, Ilyas and Machanavajjhala}

%!TEX root=./main.tex
\begin{abstract}
Organizations are increasingly interested in allowing external data scientists to explore their sensitive datasets. Due to the popularity of differential privacy, data owners want the data exploration to ensure provable privacy guarantees. However, current systems for answering queries with differential privacy place an inordinate burden on the data analysts to understand differential privacy, manage their privacy budget, and even implement new algorithms for noisy query answering. Moreover, current systems do not provide any guarantees to the data analyst on the quality they care about, namely accuracy of query answers.

We present \system, a novel system that allows data analysts to pose adaptively chosen sequences of queries along with required \textit{accuracy bounds}. By translating queries and accuracy bounds into differentially private algorithms with the least privacy loss, \system returns query answers to the data analyst that meet the accuracy bounds, and proves to the data owner that the entire data exploration process is differentially private. Our comprehensive experimental study on real datasets demonstrates that \system can answer a variety of queries accurately with moderate to small privacy loss, and can support data exploration for entity resolution with high accuracy under reasonable privacy settings.
\end{abstract}

\keywords{Differential Privacy; Data Exploration}

\maketitle

%!TEX root=./main.tex

\section{Introduction}
\label{sec:intro}

Data exploration has been gaining more importance as large and complex
datasets are collected by organizations to enable data scientists to
augment and to enrich their analysis. These datasets are  usually a mix
of public data and private sensitive data, and exploring them  involves operations such
as profiling, summarization, building histograms on various
attributes, and top-k queries to better understand the underlying semantics.
Besides the large body of work on interactive exploration on large data repositories~\cite{DBLP:conf/sigmod/2018hilda},
interacting with large datasets can be an important step towards many mundane jobs.
For example, in entity matching tools (such as Magellan~\cite{DBLP:journals/pvldb/KondaDCDABLPZNP16},
and Data Tamer~\cite{DBLP:conf/cidr/StonebrakerBIBCZPX13}), a sequence of exploration queries are needed to approximate the data distribution of various fields to pick the right
matching model parameters,  features, etc. In data integration systems,
exploring the actual data instance of the sources plays a crucial
role in matching the different schemas.

In our interaction with many large enterprises, exploring private data
sets to enable large-scale data integration and analytics projects is often
a challenging task; while incorporating these data sources in the
analysis carries a high value, data owners often do not trust the analysts and hence, require privacy guarantees. For instance, Facebook recently announced that they would allow academics to analyze their data to ``study the role of social media in elections and democracy", and one of the key concerns outlined was maintaining the privacy of their customers' data \cite{king:blogpost, king:newmodel}. 

Differential privacy~\cite{Dwork06differentialprivacy, Dwork:2014:AFD:2693052.2693053} has emerged as a popular privacy notion since (1) it is a persuasive mathematical guarantee that individual records are hidden even when aggregate statistics are revealed, (2) it ensures privacy even in the presence of side information about the data, and (3) it allows one to bound the information leakage by a total \textit{privacy budget} across multiple data releases.  Differential privacy has seen adoption in a number of real products
at the US Census Bureau~\cite{haney17:census,machanavajjhala08onthemap,Vilhuber17Proceedings},
Google~\cite{Erlingsson14Rappor}, Apple~\cite{Greenberg16Apples} and Uber~\cite{DBLP:journals/corr/JohnsonNS17}. Hence, it is natural for data owners to aspire for systems that permit differentially private data exploration.

%\stitle{The Challenge:}
\subsection{The Challenge and The Problem}
While there exist general purpose differentially private query answering systems, they are not really meant to support interactive querying, and they fall short in two key respects. First, these systems place an inordinate burden  on the data analyst to understand differential privacy and differentially private algorithms. For instance, PINQ \cite{McSherry:2009:PIQ:1559845.1559850} and  wPINQ \cite{Proserpio:2014:CDS:2732296.2732300} allow users to write differentially private programs and ensure that every program expressed satisfies differential privacy. However, to achieve high accuracy, %the analyst typically would need to reason about how the system adds noise to the queries and which queries can be answered accurately based on the privacy literature.
the analyst has to be familiar with the privacy literature to understand how the system adds noise and to identify if the desired accuracy can be achieved in the first place.
$\epsilon$ktelo \cite{ektelo} has high level operators that can be composed to create accurate differentially private programs to answer counting queries. However, the analyst still needs to know how to optimally apportion privacy budgets across different operators. FLEX \cite{DBLP:journals/corr/JohnsonNS17} allows users to answer one SQL query under differential privacy, but has the same issue of apportioning privacy budget across a sequence of queries. Second, and somewhat ironically, these systems do not provide any guarantees to the data analyst on the quality they really care about, namely \textit{accuracy} of query answers. In fact, most of  these systems take a privacy level ($\epsilon$) as input and make sure that differential privacy holds, but leave accuracy unconstrained.

We aim to design a system that allows data analysts to explore a sensitive dataset $D$ held by a data owner by posing a sequence of declaratively specified queries that can capture typical data exploration workflows. The system aims at achieving the following dual goals: (1)~since the data are sensitive, the data owner would like the system to provably bound the information disclosed about any one record in $D$ to the analyst; and (2)~since privacy preserving mechanisms introduce error, the data analyst must be able to specify an accuracy bound on each query.

Hence, our goal is to design a system that can:
\squishlist
\item Support declaratively specified aggregate queries that capture a wide variety of data exploration tasks. 
\item Allow analysts to specify accuracy bounds on queries.
\item Translate an analyst's query into a differentially private mechanism with minimal privacy loss $\epsilon$ such that it can answer the query while meeting the accuracy bound.
\item Prove that for any interactively specified sequence of queries, the analyst's view of the entire data exploration process satisfies $B$-differential privacy, where $B$ is a owner specified privacy budget.
\squishend

Our problem is similar in spirit to Ligett et al.~\cite{accuracyfirst:nips17}, which also considers analysts who specify accuracy constraints.  While their work focus on finding the smallest privacy cost for a given differentially private mechanism and accuracy bound, our focus is on a more general problem: for a given query, find a mechanism and a minimal privacy cost to achieve the given accuracy bound.  We highlight the main technical differences in Appendix~\ref{sec:related}.

\subsection{Contributions and Solution Overview}

We propose \system, an accuracy-aware privacy engine for sensitive data exploration. \system solves the aforementioned challenges as follows: 
\squishlist
\item A data analyst can interact with the private data through \system using declaratively specified aggregate queries. Our query language supports three types of aggregate queries: (1) \textit{workload counting queries} that capture the large class of linear counting queries (e.g., histograms and CDFs) which are a staple of statistical analysis, (2) \textit{iceberg queries}, which capture HAVING queries in SQL and frequent pattern queries, and (3) \textit{top-k queries}. These queries form the building blocks of several data exploration workflows. To demonstrate their applicability in real scenarios, in Section~\ref{sec:casestudy}, we  express two important data cleaning tasks, namely blocking and pair-wise matching, using sequences of queries from our language.

\item In our language, each query is associated with intuitive accuracy bounds that permit \system to use differentially private mechanisms that introduce noise while meeting the accuracy bound.
\item For each query in a sequence, \system employs an \textit{accuracy translator} that finds  a privacy level $\epsilon$ and a differentially private mechanism that answers the query while meeting the specified accuracy bound. For the same privacy level, the mechanism that answers a query with the least error  depends on the query and dataset. Hence, \system implements a suite of differentially private mechanisms for each query type, and given an accuracy bound chooses the mechanism that incurs the least privacy loss based on the input query and dataset. 
\item \system uses a \textit{privacy analyzer} to decide whether or not to answer a query such that the privacy loss to the analyst is always bounded by a budget $B$. The privacy analysis is novel since (1) the privacy loss of each mechanism is chosen based on the query's accuracy requirement, and (2) some mechanisms have a data dependent privacy loss.   
\item In a comprehensive empirical evaluation on real datasets with query and application benchmarks, we demonstrate that (1) \system chooses a differentially private mechanism with the least privacy loss that answers an input query under a specified accuracy bound, and (2) allows data analysts to accurately explore data while ensuring provable guarantee of privacy to data owners. 
\squishend

%\noindent{\bf 
\subsection{Organization}
Section~\ref{sec:preliminaries} introduces preliminaries and background. Section~\ref{sec:queries} describes our query language and accuracy measures. \system architecture is summarized in Section~\ref{sec:system-overview} and details of the accuracy translator and the privacy analysis are described in Sections~\ref{sec:translation} and \ref{sec:analyzer}, respectively.  \system is comprehensively evaluated on real data using exploration queries in Section~\ref{sec:evaluation} and on an entity resolution case study in Section~\ref{sec:casestudy}. We conclude with future work in Section~\ref{sec:discussion}.

%%%%%%%%%% OLD INTRO
\eat{
%%%data exploration is important in discovery based applications
Increasingly large and complex datasets are collected by organizations
and have become an valuable resource in discovery-based applications today,
e.g., for location-based services, healthcare, genomics, and marketing.
Unlike working with traditional databases,
the data analyst do not always know what queries to ask on these data.
This typically requires the data analyst to ask a sequence of exploration queries
on the data to build a mental model or knowledge space
which can be used in further analysis.
This interactive process is known as \emph{data exploration}.
To support this process, a number of useful commercial tools
(e.g. Trifacta~\cite{DBLP:conf/chi/KandelPHH11} and Tableau~\cite{polaris02})
have been developed followed with extensive research work which focus on
(i) improving the effectiveness of data exploration
by identifying interesting data items or relevant queries~\cite{DBLP:conf/cidr/SellamK13,Fan:2011:ISQ:2004686.2005644}
or (ii) optimizing the performance of data exploration in answering queries more efficiently
~\cite{Alagiannis:2014:HHA:2588555.2610502,Halim:2012:SDC:2168651.2168652,
Cormode:2012:SMD:2344400.2344401,Agarwal:2014:KYW:2588555.2593667,
Agarwal:2013:BQB:2465351.2465355,aqpplus16,revisitaqp17, Tauheed:2012:SPL:2350229.2350267, Kalinin:2014:IDE:2588555.2593666}.
%Among the optimization techniques, approximate query processing~\cite{Cormode:2012:SMD:2344400.2344401,Agarwal:2014:KYW:2588555.2593667,
%Agarwal:2013:BQB:2465351.2465355,aqpplus16,revisitaqp17} trade-off performance with query accuracy, where the analyst is willing to explore the data with tolerable noisy answers.

However, these techniques are not designed for exploration on \emph{sensitive} data
(e.g. intelligence data collected by agencies or health records in a hospital).
Consider a setting where a medical center hires an external cleaning engineer
to clean a collection of medical records with missing values and duplicates.
On one hand, the cleaner needs to first explore the dataset to decide the right configuration for the cleaning tools.
On the other hand, the medical center needs to ensure the sensitive information of the patients
is protected from the cleaner during the data exploration.
This constraint either forces the analyst to give up the opportunity of exploring the data
by treating the dataset as a black box,
or forces the data owner to sacrifice the privacy of individuals and hence their trust
by leaking true answers to the analyst even if the computation is done on encrypted records.
Inspired by approximate query processing technique
~\cite{Cormode:2012:SMD:2344400.2344401,Agarwal:2014:KYW:2588555.2593667, Agarwal:2013:BQB:2465351.2465355,aqpplus16,revisitaqp17}
where noisy answers with reasonably error bounds have been shown effective for data exploration 
a model that allows exploration on sensitive data with provably bounded accuracy and privacy guarantee
is desirable to solve this problem.

Differential privacy~\cite{DBLP:conf/tcc/DworkMNS06} as a state-of-the-art privacy guarantee
is an appealing solution for the collaboration between external data analysts and data owners on sensitive data exploration,
for this guarantee provably bounds the information leakage about any single record in the database
and permits noisy answers to aggregate queries.
Unfortunately, there are very few and limited applications.
The challenges in using differential privacy for data exploration are two-fold.
First, existing approaches using differential privacy for data exploration
focus on one specific domain~\cite{Johnson:2013:PDE:2487575.2487687, visdpt16}
and thus cannot be generalized to other domains.
Second, though there are general differential privacy tools for SQL databases
\cite{McSherry:2009:PIQ:1559845.1559850,Proserpio:2014:CDS:2732296.2732300,DBLP:journals/corr/JohnsonNS17},
neither the data owners nor the data analysts are privacy experts
who know how to add noise to the queries and how to correlate query accuracy with data privacy
using these tools.
%%%how to solve this problem?

\begin{figure}[t]
	\centering
        \includegraphics[width=0.5\textwidth]{./figure/problemsetting}
	\caption{\system overview}\label{fig:problemsetting}
\end{figure}

Therefore, to solve above challenges, we propose \system, an accuracy-aware privacy engine for data exploration,
which allows external analysts to explore sensitive datasets while ensuring differential privacy.
As shown in Figure~\ref{fig:problemsetting},
the data analyst explores the dataset
by adaptively posing exploratory queries to \system.
Rather than thinking in terms of privacy bounds,
the data analysts are only required to specify an \emph{accuracy} requirement on each query.
As the name suggests, \system \emph{translates} these queries with accuracy requirements to
mechanisms that satisfy differential privacy.
\system allows the data owner to specify a bound
on the  privacy leakage about any single record to the data analysts,
and ensures that every sequence of queries it answers has a total privacy leakage below this bound.
In this way, the data owner can easily collaborate with external data analysts
to achieve guarantees on both the exploration quality and the privacy leakage.

The key technical contributions of this paper are :
\squishlist
\item[1)] %{\bf Accuracy-Aware Privacy Engine:}
We design and build the first accuracy-aware privacy engine
that allows data analyst to explore private data in a natural way
by specifying only accuracy requirements.
This approach bridges the gap between privacy concerns of data owner
and accuracy expectation of the data analyst,
and opens new and interesting research problems in private data exploration.

\item[2)] %{\bf Exploration Queries:}
We propose a class of \emph{exploration queries}
that a data analyst can use to privately explore the sensitive dataset.
This class of queries are flexible to support a number of exploratory tasks.
We formulate the semantics of accuracy requirement for each of these query types.
and show that they can match the accuracy expectation
and meet the privacy constraint in the evaluation.

%with respective tolerance formulation on their answers
%to query the aggregate statistics about underlying data
%and show that these primitives can enable an iterative data cleaning process
%by providing the cleaning engineer with the necessary information to tune configuration parameters
%while quantifying the amount of privacy leakage during this interaction.

\item[3)] %{\bf Translating Accuracy to Privacy:}
We design \emph{translating mechanisms} for the exploration queries,
where their errors vary among the queries and the data.
\system helps data analyst choose an optimal mechanism
that can achieve the accuracy requirement
with the least possible privacy cost.
This is in contrast to almost all prior work in differentially private algorithm design
that assumes a constraint on privacy and minimizes error.

\item[4)] %\emph{Privacy Analyzer:}
We prove that any sequence of interactions with the data analyst
satisfies differential privacy with a bounded privacy budget,
even if the accuracy requirement (and hence the privacy budget) per query
is adaptively chosen.

\item[5)] %\emph{Evaluation and Case Study:}
We conduct extensive experiments on two sets of query benchmarks
and an application benchmark (entity resolution) to illustrate
the trade-offs between privacy and accuracy and
the quality of exploration task under a given privacy constraint
using \system.
\squishend
}

\eat{
However, many datasets for data exploration are highly sensitive.
Existing approaches for private datasets
either give up the opportunity of exploring the data by treating the dataset as a black box~\cite{},
or leak true answers of given queries such as computation over encrypted dataset~\cite{}
or secure computations over distributed datasets~\cite{},
which provides no privacy guarantees for individual records in the dataset.
Moreover, either the data owners or the data analysts are privacy experts who can easily quantify
the privacy loss over the data exploration process.

On the other hand, many useful datasets contain sensitive information about individuals, such as electronic medical records.
In this problem, we consider two human roles:
the {\it data owner} who is responsible for ensuring the privacy of individuals who participate into the dataset;
and the {\it data analyst} who is familiar with the data analytics tools and understands the implications of the parameters.
Developing systems that can support private data exploration are important
so that data analysts can verify their hypotheses on the private data while
quantifying the privacy loss about individuals throughout data exploratory process.
This remains an open problem as prevalent approaches for private datasets
either giving up the opportunity of exploring the data by treating the dataset as a black box,
or leaking true answers of given queries
such as computation over encrypted dataset or secure computations over distributed datasets,
which provides no privacy guarantees for individual records in the dataset.
}

%!TEX root=./main.tex
\section{Preliminaries and Background}\label{sec:preliminaries}
We consider the sensitive dataset in the form of a single-table relational schema $R(\attr_1,\attr_2,\ldots,\attr_d)$,
where $attr(R)$ denotes the set of attributes of $R$. Each attribute $\attr_i$ has a domain $dom(A_i)$.
The full domain of $R$ is $\rowdomain=dom(A_1)\times\cdots \times dom(A_d)$, containing all possible tuples conforming to $R$.
An instance $D$ of relation $R$ is  a multiset whose elements are tuples in $\rowdomain$.
We let the domain of the instances be $\tabledomain$.
Extending our algorithms to schemas with multiple tables is an interesting avenue for future work.
%$D\in \mathcal{D}$, with schema $\attrlist = \{\attr_1, \attr_2, \ldots, \attr_d\}$.
%Each row $r \in D$ is drawn from a domain consisting of $d$ attributes,
%$\rowdomain = \attr_1\times \cdots \attr_d$.

%\subsection{Differential Privacy}\label{sec:dp_def}
We use differential privacy as our measure of privacy. An algorithm that takes as input a table $D$ satisfies differential privacy~\cite{Dwork:2014:AFD:2693052.2693053, DBLP:conf/tcc/DworkMNS06}
if its output does not significantly change
by adding or removing a single tuple in its input.
\begin{definition}[$\epsilon$-Differential Privacy \cite{Dwork:2014:AFD:2693052.2693053}]\label{def:dp}
A randomized mechanism $M: \mathcal{D}\rightarrow \mathcal{O}$ satisfies $\epsilon$-differential privacy if
\begin{equation}
Pr[M(D) \in O] \leq e^{\epsilon} Pr[M(D')\in O]
\end{equation}
for any set of outputs $O\subseteq \mathcal{O}$,
and any pair of \emph{neighboring} databases $D,D'$
such that %$D$ and $D'$ differ by adding or removing a record; i.e., 
$|D\backslash D'\cup D'\backslash D| = 1$.
\end{definition}
Smaller values of $\epsilon$ result in stronger privacy guarantees as $D$ and $D'$ are harder to distinguish using the output. Composition and postprocessing theorems of differential privacy are described in Appendix~\ref{sec:seq} and will be used to bound privacy across multiple data releases in Section~\ref{sec:analyzer}.

%We did not choose semantically secure or property preserving encryption, as the former does not allow the data analyst to learn anything about the data, and the latter is susceptible to attacks using side information \cite{naveed2015:inference}.

%No prior work have shown how to answer all these queries with bounded errors in a unified framework for data exploration.

%!TEX root=./main.tex
\section{Queries and Accuracy}\label{sec:queries}
In this section, we describe our query language for expressing aggregate queries and the associated accuracy measures. Throughout the paper, we assume that the schema and the full domain of attributes are public.

\subsection{Exploration Queries}\label{sec:explorequeries}
\system supports a rich class of aggregate queries
that can be expressed in a SQL-like declarative format.
%{\small
\begin{lstlisting}[mathescape=true]
BIN $D$ ON $f(\cdot)$ WHERE $W=\{\phi_1,\ldots,\phi_L\}$
[HAVING $f(\cdot)>c$]
[ORDER BY $f(\cdot)$ LIMIT $k$];
\end{lstlisting}
%}

%This class of exploration queries has
%a massive data exploration use cases~\cite{polaris02,aqpplus16,revisitaqp17,
%	Agarwal:2013:BQB:2465351.2465355}.
%We also show in our evaluation
%that the exploration on real world datasets (e.g. census and location data)
%and even more complex exploration tasks for entity resolution
%can be supported using this class of queries.
%The format of these queries also allow the data analyst
%to explore data in a SQL-style declarative format
%and is extendable to more complex queries in the future.

Each query in our language is associated with a \textit{workload} of predicates $W=\{\phi_1,\ldots,\phi_L\}$. Based on $W$, the tuples in a table $D$ are divided into bins. Each bin $b_i$ contains all the tuples in $D$ that satisfy the corresponding predicate $\phi_i:\rowdomain \rightarrow \{0,1\}$, i.e., $b_i= \{r\in D| \phi(r) = 1\}$. As we will see later, bins need not be disjoint. Moreover, a query has an aggregation function $f:\rowdomain^* \rightarrow \mathbb{R}$, which returns a numeric answer $f(b_i)$ for each bin $b_i$. The output of this query without the optional clauses (in square brackets) is a list of counts $f(b_i)$ for bin $b_i$.

Each query can be specialized using one of two optional clauses: the HAVING clause returns a list of bin identifiers $b_i$ for which $f(b_i) > c$; and the ORDER BY ... LIMIT clause returns the $k$ bins that have the largest values for $f(b_i)$.
%\change{
%\sout{Throughout this paper, we assume $f(\cdot)$ is the COUNT function and omit extensions to other aggregates like AVG, SUM, QUANTILE due to space constraints. }
%}

Throughout this paper, we focus on COUNT as the aggregate function and discuss other aggregates like AVG, SUM, QUANTILE in Appendix~\ref{app:otherqueries}.

\stitle{Workload Counting Query (\wcq).}
\begin{lstlisting}[mathescape=true]
BIN $D$ ON $f(\cdot)$ WHERE $W=\{\phi_1,\ldots,\phi_L\};$
\end{lstlisting}
Workload counting queries capture the large class of \textit{linear counting queries}, which are the bread and butter of statistical analysis and have been the focus of majority of the work in differential privacy \cite{Li:2015:MMO:2846574.2846647, Hay:2016:PED:2882903.2882931}. Standard SELECT...GROUP BY queries in SQL are expressible using \wcq.
For instance, consider a table $D$ having
an attribute $State$ with domain $\{$AL, AK, \ldots, WI, WY$\}$
and an attribute $Age$ with domain $[0,\infty)$.
Then, a query that returns the number of people with age above 50 for each state 
%\begin{lstlisting}[mathescape=true]
%SELECT $State$, COUNT(*) FROM $D$
%    WHERE $Age>50$ GROUP BY $State$;
%\end{lstlisting}
can be expressed using \wcq as:
\begin{lstlisting}[mathescape=true]
BIN $D$ ON COUNT(*) 
WHERE $W=\{{\scriptstyle Age>50\wedge State=\text{AL},\ldots,Age>50\wedge State=\text{WY}}\}$;
\end{lstlisting}

%Note that this format also allows GROUP BY an attribute with continuous domain.
%For example, we can specify $W=\{0<Age\leq50,  50<Age\}$ to query
%for the population with age below 50 and above 50 respectively.
%We can also easily specify a different discretization of the continuous domain,
%e.g.,  $W=\{0 < Age \leq 21,  21<Age\}$.
%This expression allows more flexibility and
%also helps control privacy loss which we will explain later.

Other common queries captured by \wcq include:
(1) {\em histogram queries:} the workload $W$ partitions $D$ into $|W_h|$ disjoint bins,
e.g. $W_h=\{0 < Age \leq 10,  10<Age\leq 20,\ldots, 90<Age\}$ the resulting \wcq returns counts for each bin;
(2) {\em cumulative histograms:}  we can define a workload $W_p$ that places tuples in $D$ into a set of inclusive bins $b_1\subseteq b_2 \cdots \subseteq b_L$,
e.g. $W_p=\{Age \leq 10,  Age\leq 20,\ldots, Age\leq 90\}$. The resulting query outputs a set of cumulative counts. We call such a $W_p$ a \textit{prefix} workload.

\stitle{Iceberg Counting Query (\icq).}
%The corresponding expression for iceberg counting queries is the base query template with the HAVING option turned on:

\begin{lstlisting}[mathescape=true]
BIN $D$ ON COUNT(*) WHERE $W=\{\phi_1,\ldots,\phi_L\}$
HAVING COUNT(*)$>c$;
\end{lstlisting}

An iceberg query returns bin identifiers if the aggregate value for that bin is greater than a given threshold $c$.
For instance, a query which returns the states in the US with a population of at least 5 million
%\begin{lstlisting}[mathescape=true]
%SELECT $State$ FROM $D$ GROUP BY $State$ 
%	HAVING COUNT(*) > 5 million;
%\end{lstlisting}
can be expressed as: 
\begin{lstlisting}[mathescape=true]
BIN $D$ ON COUNT(*) WHERE $W=\{\mbox{\small State=AL,...,State=WY}\}$
HAVING COUNT(*)>$\mbox{\small 5 million}$;
\end{lstlisting}

Note that since the answer to the query is a subset of the predicates in $W$ (i.e. a subset of bin identifiers  but not the aggregate values for these bins, an \icq is not a linear counting query. 

\eat{
	For example, which states in U.S. have a population more than 5 million?
	Without considering privacy, iceberg queries can be computed directly
	from the answers to workload queries.
	However, hiding the aggregates for the outputted bins
	make a difference in the privacy loss.
	Hence, we consider a special class of iceberg queries in this work,
	which return only the predicates of the bins with aggregate values passing the given threshold
	but not the aggregate values over these bins.
	Hence, the output of this class of iceberg queries
	is a subset of the bin predicates $W'\subseteq W$.
	If a bin of tuples in $D$ satisfy $\phi_i$ has an aggregate value greater than $c$,
	then $\phi_i$ will be in the query output.}

\stitle{Top-$k$ Counting Query (\tcq).}
%The corresponding expression for top-$k$ queries is the base query template with the ORDER BY $f(\cdot)$ LIMIT $k$ option turned on:

\begin{lstlisting}[mathescape=true]
BIN $D$ ON COUNT(*) WHERE $W=\{\phi_1,\ldots,\phi_L\}$
ORDER BY COUNT(*) LIMIT $k$;
\end{lstlisting}

\tcq first sorts the bins based on their aggregate values (in descending order) and returns the top $k$ bins identifiers (and not the aggregate values).
For example, a query which returns the three US states with highest population can be expressed as: 
\begin{lstlisting}[mathescape=true]
BIN $D$ ON COUNT(*) WHERE $W=\{\mbox{\small State=AL,...,State=WY}\}$
ORDER BY COUNT(*) LIMIT $k$;
\end{lstlisting}

%!TEX root=./main.tex
\subsection{Accuracy Measure}\label{sec:accuracy}
To ensure differential privacy, the answers to the exploration queries are typically noisy. To allow the data analyst to explore data with bounded error, we extend our queries to incorporate an accuracy requirement. The syntax for accuracy is inspired by that in BlinkDB~\cite{Agarwal:2013:BQB:2465351.2465355}:

\begin{lstlisting}[mathescape=true]
BIN $D$ ON $f(\cdot)$ WHERE $W=\{\phi_1,\ldots,\phi_L\}$
[HAVING $f(\cdot)>c$]
[ORDER BY $f(\cdot)$ LIMIT $k$]
ERROR $\alpha$ CONFIDENCE $1-\beta$;
\end{lstlisting}

%The accuracy requirement for the queries are defined as follows.
%\begin{definition}[$(\alpha,\beta)$-$q$.type accuracy]\label{def:accuracy}
%Given a query, $q: \tabledomain \rightarrow O$,
%and a distance function over the query output,
%$d: O\times O\rightarrow \mathbb{R}^{+}$.
%We say a mechanism $M:\tabledomain \rightarrow O$
%satisfies $(\alpha,\beta)$-q.type accuracy for
%$\alpha\in \mathbb{R}^+$ and $\beta\in [0,1)$
%if for $D\in \tabledomain$
%\begin{equation}
%\Pr[dist_{q.type}(M(D),q(D)) \geq \alpha] \leq \beta.
%\end{equation}
%\end{definition}
%
%We fix $\beta$ to be a very small value ($0.0005$ in our experiments) throughout the paper.

We next define the semantics of the accuracy requirement for each of our query types.  
The accuracy requirement for a \wcq $q_W$ is defined as a bound on the maximum error across queries in the workload $W$.

\begin{definition}[$(\alpha,\beta)$-\wcq accuracy]
	Given a workload counting query $q_W:\tabledomain \rightarrow \mathbb{R}^L$,
	where $W=\{\phi_1,\ldots,\phi_L\}$.
	Let $M:\tabledomain \rightarrow \mathbb{R}^L$ be a mechanism
	that outputs a vector of answers $y$ on $D$.
	Then, $M$ satisfies $(\alpha,\beta)$-$W$ accuracy,
	if  $\forall D\in \tabledomain$, %for all $\phi_i\in W$
	\begin{eqnarray}
	\Pr[\|y - q_W(D)\|_{\infty} \geq \alpha] \leq \beta,
	\end{eqnarray}
	where $\|y - q_W(D)\|_{\infty} = \max_j |y[i]-c_{\phi_i}(D)|$.
\end{definition}

\begin{figure}[t]
	\centering
	\includegraphics[width=\columnwidth]{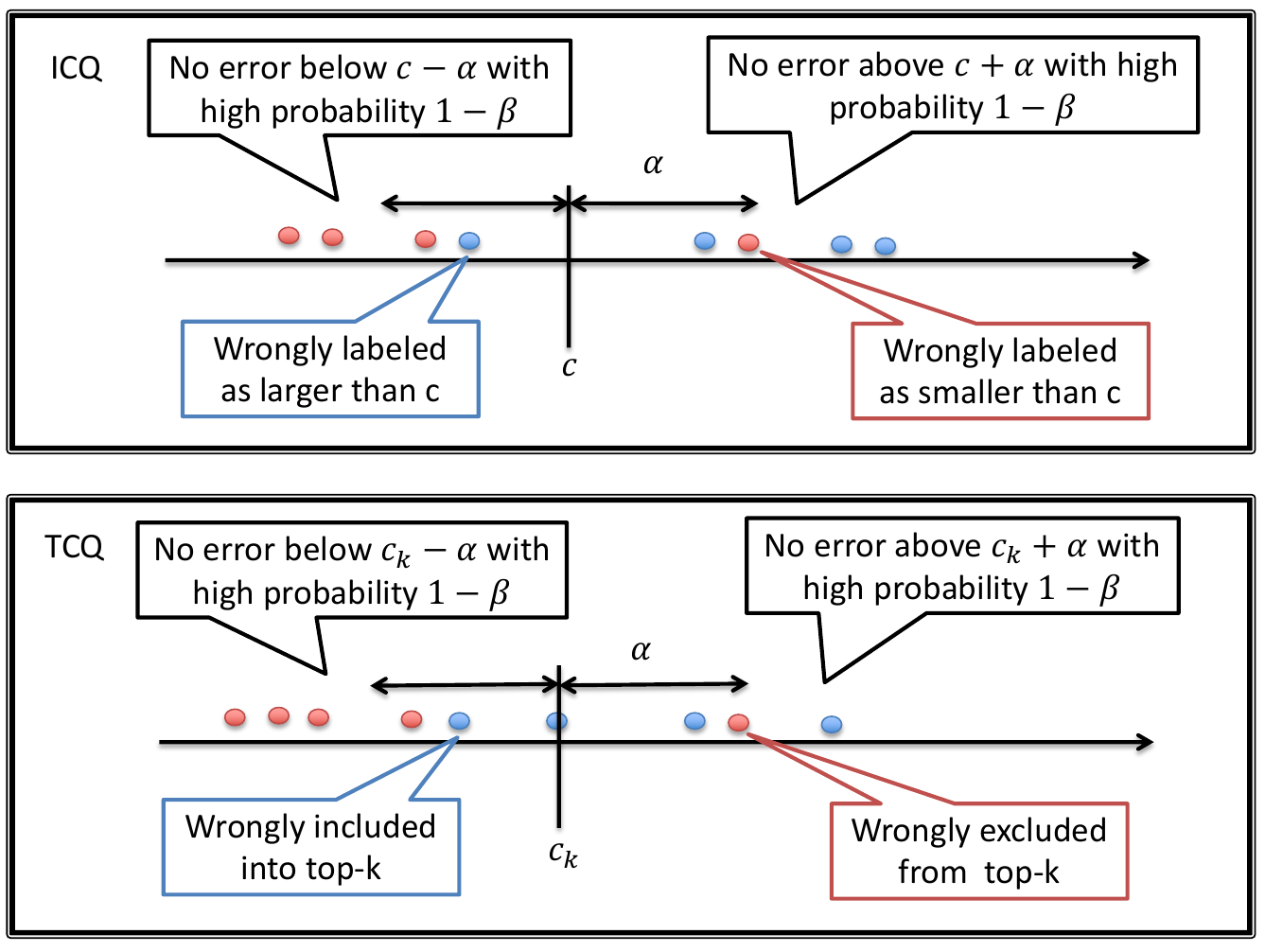}
	\caption{Accuracy Requirement for \icq and \tcq}\label{fig:icq_tcq_error}
\end{figure}

The output of iceberg counting queries \icq and top-$k$ counting queries \tcq
are not numeric, but a subset of the given workload predicates.
Their accuracy measures are different from \wcq,
and depend on their corresponding workload counting query $q_{W}$.

\begin{definition}[$(\alpha,\beta)$-\icq accuracy]
	Given an iceberg counting query $q_{W,>c}: \tabledomain \rightarrow O$,
	where $W=\{\phi_1,\ldots,\phi_L\}$, and $O$ is a power set of $W$.
	Let $M:\tabledomain \rightarrow O$ be a mechanism that outputs a subset of $W$.
	Then, $M$ satisfies $(\alpha,\beta)$-\icq accuracy for $q_{W,>c}$
	if for $D$,
	%for all $\phi\in W$,
	\begin{eqnarray}
	%\Pr[\phi \in M(D) ~|~ c_{\phi}(D) < c-\alpha] \leq \beta \\
	%\Pr[\phi \notin M(D) ~|~ c_{\phi}(D) > c+\alpha] \leq \beta
	\Pr[|\{\phi \in M(D) ~|~ c_{\phi}(D) < c-\alpha\}|>0] \leq \beta \\
	\Pr[|\{\phi \in (W-M(D)) ~|~ c_{\phi}(D) > c+\alpha\}|>0] \leq \beta
	\end{eqnarray}
\end{definition}

A mechanism for \icq can make two kinds of errors:
label predicates with true counts greater than $c$ as $<c$ (red dots in Figure~\ref{fig:icq_tcq_error})
and label predicates with true counts less than $c$ as $>c$ (blue dots in Figure~\ref{fig:icq_tcq_error}).
We say a mechanism satisfies $(\alpha,\beta)$-\icq accuracy if
with high probability, all the predicates with true counts greater than $c+\alpha$
are correctly labeled as $>c$
and all the predicates with true counts less than $c-\alpha$
are correctly labeled as $<c$.
The mechanism may make arbitrary mistakes within the range $[c-\alpha,c+\alpha]$.

\begin{definition}[$(\alpha,\beta)$-\tcq accuracy]\label{def:tcqaccuracy}
	Given a top-$k$ counting query $q_{W,k}: \tabledomain \rightarrow O$,
	where $W=\{\phi_1,\ldots,\phi_L\}$, and $O$ is a power set of $W$.
	Let  $M:\tabledomain \rightarrow O$ be a mechanism that outputs a subset of $W$.
	Then, $M$ satisfies $(\alpha,\beta)$-\tcq accuracy
	if for $D\in \tabledomain$,
	\begin{eqnarray}
	\Pr[ |\{\phi \in M(D) ~|~ c_{\phi}(D) < c_k-\alpha \}| > 0] \leq \beta \\
	\Pr[ |\{\phi \in ( \Phi -M(D)) ~|~ c_{\phi}(D) > c_k + \alpha \}| > 0 ] \leq \beta
	\end{eqnarray}
	where $c_k$ is the $k^{th}$ largest counting value among all the bins,
	and $\Phi$ is the true top-$k$ bins.
\end{definition}
The intuition behind Definition~\ref{def:tcqaccuracy} is similar to that of \icq and is explained in Figure~\ref{fig:icq_tcq_error}: predicates with count greater than $c_k + \alpha$ are included and predicates with count less than $c_k -\alpha$ do not enter the top-$k$ with high probability.  
%Similar to \icq, the mechanism for \tcq can also makes two types of mistakes:
%exclude top-$k$ predicates from the output (red in Figure~\ref{fig:icq_tcq_error}
%and include a non top-$k$ predicates in the output (blue in Figure~\ref{fig:icq_tcq_error}).
%We say a mechanism satisfies $(\alpha,\beta)$-\tcq accuracy
%if with high probability all the predicates with true counts greater than $c_k+\alpha$
%are correctly included in the top-$k$ output
%and all the predicates with true counts less than $c_k-\alpha$
%are correctly excluded from the output.
%The mechanism may make arbitrary mistakes within the range $[c_k-\alpha,c_k+\alpha]$.

In the rest of the paper, we will describe how \system designs differentially private mechanisms to answer queries with the above defined bounds on accuracy. The advantages of our accuracy definitions are that
they are intuitive (when $\alpha$ increases, noisier answers are expected)
and we can design privacy-preserving mechanisms that introduce noise while satisfying these accuracy guarantees. 
On the other hand, this measure is not equivalent to other bounds on accuracy like relative error and precision/recall which can be very sensitive to small amounts of noise (when the counts are small, or when lie within a small range).
For instance, if the counts of all the predicates in \icq lie outside $[c-\alpha,c+\alpha]$,
a mechanism $M$ that perturbs counts within $\pm \alpha$ and then answers an \icq will have precision and recall of 1.0 with high probability as it makes no mistakes.
However, if all the query answers lie within $[c-\alpha,c+\alpha]$,
then the precision and recall of the output of $M$ could be 0.
%Similarly, if the counts of all the predicates in \tcq (except $c_k$) lie outside $[c_k-\alpha,c_k+\alpha]$,
%the mechanism will have precision and recall of 1.0 with high probability as it makes no mistakes.
%However, if all the query answers lie within $[c_k-\alpha,c_k+\alpha]$,
%then there is no bound on the precision and recall (which could be 0).
%\todo{We show this does not happen in our experiments for both \icq and \tcq.}
Incorporating other error measures like precision/recall and relative error into \system is an interesting avenue for future work.

%!TEX root=./main.tex
\section{\system Overview}\label{sec:system-overview}

\begin{algorithm}[t]
	\caption{\system Overview}\label{algo:pe}
	\label{algo:lcm}
	{\small \begin{algorithmic}[1]
			\Require  Dataset $D$, privacy budget $B$
			\State Initialize privacy loss $B_0 \gets 0$, index $i\gets 1$
			\Repeat
			\State Receive $(q_i,\alpha_i, \beta_i)$ from analyst
			\State \label{applicable} $\mathcal{M} \gets $ mechanisms applicable to $q_i$'s type
			\State \label{estimateloss} $\mathcal{M}^* \gets \{ M\in \mathcal{M} \mid M.\transFunc(q_i,\alpha_i,\beta_i).\upperbound\leq B-B_{i-1} \}$
			\If{$\mathcal{M}^* \neq \emptyset$} 
			\State \textit{// Pessimistic Mode}
			\State \label{pessimistic}\ \ \label{translate} $M_i \gets {\tt argmin}_{M\in \mathcal{M}^*} ~~M.\transFunc(q_i,\alpha_i,\beta_i).\upperbound$
			\State \textit{// Optimistic Mode}
			\State \label{optimistic}\ \ \label{translate} $M_i \gets {\tt argmin}_{M\in \mathcal{M}^*} ~~M.\transFunc(q_i,\alpha_i,\beta_i).\lowerbound$
			\State $(\omega_i,\epsilon_i) \gets M_i.\runFunc(q_i,\alpha_i,\beta_i,D)$
			\State\label{analyzeloss} $B_i \gets B_{i-1} + \epsilon_i$, $i$++
			\State \Return $\omega_i$ 
			\Else
			\State $B_i = B_{i-1}$, $i$++
			\State\label{denied} \Return `Query Denied'  
			\EndIf\label{endcomposition}
			\Until{No more queries sent by local exploration}
	\end{algorithmic}}
\end{algorithm}

%Algorithm~\ref{algo:pe} describes the end-to-end \system. \am{need something here?}
This section outlines how \system translates analyst queries with accuracy bounds into differentially private mechanisms, and how it ensures the privacy budget $B$ specified by the data owner is not violated.

\stitle{Accuracy Translator.}
Given an analyst's query $(q, \alpha, \beta)$, \system first uses the \textit{accuracy translator} to choose a mechanism $M$ that can (1) answer $q$ under the specified accuracy bounds, with (2) minimal privacy loss. To achieve these, \system supports a  set of differentially private mechanisms that can be used to answer each query type (\wcq, \icq, \tcq). Multiple mechanisms are supported for each query type as different mechanisms result in the least privacy loss depending on the query and the dataset (as shown theoretically and empirically in Sections~\ref{sec:translation} and \ref{sec:evaluation}, respectively).

Each mechanism $M$ exposes two functions: $M.\transFunc$, which \textit{translates} a query and accuracy requirement into a lower and upper bound $(\lowerbound, \upperbound)$ on the privacy loss if $M$ is executed, and $M.\runFunc$ that runs the differentially private algorithm and returns an approximate answer $\omega$ for the query. The answer $\omega$ is guaranteed to satisfy the specified accuracy requirement. Moreover, $M$ satisfies $\upperbound$ differential privacy. The mechanisms supported by \system and the corresponding $\transFunc$ functions are described in Section~\ref{sec:translation}. In some cases (e.g. Algorithm~\ref{algo:mpm} in Section~\ref{sec:icq_mpm}), the privacy loss incurred by $M$ may be $\epsilon \in (\lowerbound, \upperbound)$ that is smaller than the worst case, depending on the characteristics of the dataset. %We show an example of one such mechanism in Section~\ref{sec:icq_mpm}.

As described in Algorithm~\ref{algo:pe}, \system first identifies the mechanisms $\mathcal{M}$ that are applicable for the type of the query $q_i$ (Line~\ref{applicable}). Next, it runs $M.\transFunc$ to get conservative estimates on privacy loss $\upperbound$ for all these mechanisms (Line~\ref{estimateloss}). \system picks one of the mechanisms $M$ from those that can be safely run using the remaining privacy budget, executes $M.\runFunc$, and returns the output to the analyst. As we will see there exist mechanisms where the privacy loss can vary based on the data in a range between $[\lowerbound, \upperbound]$, and the actual privacy loss is unknown before running the mechanism.  In such cases, \system can choose to be \textit{pessimistic} and pick the mechanism with the least $\upperbound$ (Line~\ref{pessimistic}), or choose to be \textit{optimistic} and pick the mechanism with the least $\lowerbound$ (Line~\ref{optimistic}).   

\stitle{Privacy Analyzer.}
%The second functionality of \system named as {\em privacy analyzer} analyzes the accumulated privacy cost.
Given a sequence of queries $(M_1,\ldots,M_{i})$ already executed
by the privacy engine that satisfy an overall $B_{i-1}$-differential privacy and a new query $(q_i,\alpha_i,\beta_i)$,
\system identifies a set of mechanisms $\mathcal{M}^*$ that all will have a worst case privacy loss smaller than  $B-B_{i-1}$ (Line~\ref{estimateloss}). That is, running any mechanism in $\mathcal{M}^*$ will not result in exceeding the privacy budget in the worst case. If $\mathcal{M}^*=\emptyset$, then \system returns `Query Denied' to the analyst (Line~\ref{denied}). Otherwise, \system runs one of the mechanisms $M_i$ from $\mathcal{M}^*$ by executing $M_i.\runFunc()$
and the output $\omega_i$ will be returned to the analyst.
\system then increments $B_{i-1}$ by the actual privacy loss $\epsilon_i$ rather than the upperbound $\upperbound$ (Line~\ref{analyzeloss}). As explained above, in some cases $\epsilon_i < \upperbound$ as different execution paths in the mechanism can have different privacy loss. Nevertheless, 
the privacy analyzer guarantees that the execution
of any sequence of mechanisms $(M_1,M_2,\ldots,M_i)$ before it halts
is $B$-differentially private (see Section~\ref{sec:analyzer}, Theorem~\ref{thm:privacy}).

%!TEX root=./main.tex
\section{Accuracy Translator} \label{sec:translation}

In this section, we present the accuracy-to-privacy translation mechanisms
supported by \system and the corresponding $\runFunc$ and $\transFunc$ functions.
We first present a general baseline mechanism for all three types of exploration queries
including workload counting query (\wcq), iceberg counting query (\icq), and
top-$k$ counting query (\tcq).
Then we show specialized mechanisms for each type of exploration queries which
consumes smaller privacy cost under different scenarios than the baseline.

We represent the workload in these three queries in a matrix form,
like the prior work for \wcq~\cite{Li:2010:OLC:1807085.1807104, Li:2015:MMO:2846574.2846647, hdmm18}.
There are many possible ways to transform a workload into a matrix.
\change{
Given a query with $L$ predicates, the number of domain partitions can be as large as $2^L$.
In this work, we consider the following transformation to reduce complexity.
}
Given a workload counting query $q_W$ with the set of predicates $W=\{\phi_1,\ldots,\phi_L\}$,
the full domain of the relation $dom(R)$ is partitioned based on $W$ to form the new discretized domain $dom_W(R)$
such that any predicate $\phi_i \in W$ can be expressed as a union of partitions in the new domain $dom_W(R)$
and the number of partitions is minimized.
For example, given $W=\{Age>50\wedge State=\text{AL}, \ldots,Age>50\wedge State=\text{WY}\}$,
one possible partition is $dom_W(R) = \{ Age>50\wedge State=\text{AL}, \ldots ,Age>50\wedge State=\text{WY}, Age\leq 50\}$.

Let $\data$ represent the histogram of the table $D$ over $dom_W(R)$.
The set of corresponding counting queries $\{c_{\phi_1},\ldots, c_{\phi_L}\}$ for $q_{W}$ can be
represented by a matrix $\workload =[\wq_1,\ldots,\wq_L]^{T}$ of size $L \times |dom_W(R)|$.
Hence, the answer to each counting query is $c_{\phi_i}(D) = \wq_i \cdot \data$ and
the answer to the workload counting query is simply $\workload\data$.
We denote and use this transformation by $\workload\gets\transform(W), \data\gets \transform_W(D)$
throughout this paper.
Unlike prior work~\cite{Li:2010:OLC:1807085.1807104, Li:2015:MMO:2846574.2846647, hdmm18}
 which aims to bound the expected total error for one query,
\system aims to bound the maximum error per query with high probability
which is more intuitive in the process of data exploration.

\eat{
Each mechanism $M$ is supported by two functions $M.\transFunc$ and $M.\runFunc$.
The main execution of mechanism $M$ is specified by $M.\runFunc()$,
which takes in the query $q$, accuracy requirement $(\alpha,\beta)$-$q$.type, and data $D$
and outputs a noisy answer $\omega$ and the actual privacy cost by running this mechanism $\epsilon$.
The design of this function is that
the same mechanism $M.\runFunc()$ with different noise parameters $b'$
cannot both simultaneously achieve $(\alpha,\beta)$-$q_i$.type accuracy
and $\epsilon'$-differential privacy for $\epsilon'<\epsilon$.
The second function $M.\transFunc()$ returns two privacy values $(\upperbound,\lowerbound)$
which correspond to the largest possible privacy value returned by $M.\transFunc()$,
and the smallest possible privacy value returned by $M.\transFunc()$ respectively.
In many of the mechanisms shown in this section, these two privacy values are the same,
and these mechanisms are known as data independent translations.
Mechanisms with $\upperbound >\lowerbound$ (Section~\ref{sec:icq_mpm})
are known as data dependent translations.
}

%Note that this section considers mechanisms for one query at a time. The overall privacy of a sequence of queries will be presented in Section~\ref{sec:analyzer}.
%All the translations are based on the theoretical properties of Laplace noise, and hence are different from the prior work \cite{DBLP:journals/corr/LigettNRWW17} that computes true error for a complex analysis.

\subsection{Baseline Translation}\label{sec:baseline}
The baseline translation for all three query types is based on
the Laplace mechanism~\cite{Dwork:2014:AFD:2693052.2693053, DBLP:conf/tcc/DworkMNS06},
a classic and widely used differentially private algorithm,
which can be used for \wcq, \icq, and \tcq. Formally,
\begin{definition}[Laplace Mechanism (Vector Form)\cite{Dwork:2014:AFD:2693052.2693053, DBLP:conf/tcc/DworkMNS06}]
Given an $L\times |dom_W(R)|$ query matrix $\workload$, the randomized algorithm LM that
outputs the following vector is $\epsilon$-differentially private:
$LM(\workload,\data) = \workload\data + Lap(b_{\workload})^L$ where $b_{\workload} = \frac{\|\workload\|_1}{\epsilon}$,
and $Lap(b)^L$ denote a vector of $L$ independent samples $\eta_i$ from a Laplace distribution
with mean $0$ and variance $2b^2$, i.e., $\Pr[\eta_i = z] \propto e^{-z/b}$ for $i=1,\ldots,L$.
\end{definition}
The constant $\|\workload\|_1$ is equal to the sensitivity of queries set defined by the workload
$\workload$ \cite{Li:2010:OLC:1807085.1807104, Li:2015:MMO:2846574.2846647}.
It measures the maximum difference in the answers to the queries in $\workload$ on any two databases
that differ only a single record. \change{Mathematically, it is the maximum of $L1$ norm of a column of $\workload$.} 

\begin{algorithm}[t]
\caption{Laplace Mechanism (LM) ($q,\alpha,\beta,D$)}\label{algo:lm}
{\small \begin{algorithmic}[1]
\State Initialize $\workload \gets \transform(W=\{\phi_1,\ldots,\phi_L\}), \data \gets \transform_W(D), \alpha,\beta$
\Function{\runFunc}{$q,\alpha,\beta, D$}
    \State $\epsilon \gets \transFunc(q_W,\alpha,\beta).\upperbound$
    \State $[\tilde{x}_1,\ldots,\tilde{x}_L] \gets \workload\data + Lap(b)^L$, where $b=\|\workload\|_1/\epsilon$
    \If{$q$.type==\wcq (i.e., $q_W$)}
        \State {\bf return} $([\tilde{x}_1,\ldots,\tilde{x}_L], \epsilon)$
    \ElsIf{$q$.type==\icq (i.e., $q_{W,>c}$)}
        \State {\bf return} $(\{ \phi_i \in W~~|~~\tilde{x}_i>c\}, \epsilon)$\label{alg:lm:icq}
    \ElsIf{$q$.type==\tcq (i.e., $q_{W,k}$)}
        \State \textbf{return} $({\tt argmax}^k_{\phi_1,\ldots,\phi_L} \tilde{x}_i ,\epsilon)$\label{alg:lm:tcq}
    \EndIf
\EndFunction
\Function{\transFunc}{$q,\alpha,\beta$}
    \If{$q$.type==\wcq (i.e., $q_W$)}
        \State {\bf return} $(\upperbound = \frac{\|\workload\|_1\ln(1/(1-(1-\beta)^{1/L}))}{\alpha}, \lowerbound = \upperbound)$
    \ElsIf{$q$.type==\icq (i.e., $q_{W,>c}$)}
        \State {\bf return} $(\upperbound =\frac{\|\workload\|_1 (\ln(1/(1-(1-\beta)^{1/L}))-\ln2)}{\alpha},\lowerbound = \upperbound)$
    \ElsIf{$q$.type==\tcq (i.e., $q_{W,k}$)}
        \State {\bf return} $(\upperbound = \frac{\|\workload\|_12(\ln (L/(2\beta)))}{\alpha}, \lowerbound = \upperbound)$
    \EndIf
\EndFunction
\end{algorithmic}}
\end{algorithm}

Algorithm~\ref{algo:lm} provides the $\runFunc$ and $\transFunc$ of Laplace mechanism for all three query types.
This algorithm first transforms the query $q_W$ and the data $D$
into matrix representation $\workload$ and $\data$.
The $\transFunc$ outputs a lower and upper bound $(\lowerbound,\upperbound)$ for each query type with a given accuracy requirement
and these two bounds are the same as Laplace mechanism is data independent.
However, these bounds vary among query types.
The $\runFunc$ takes the privacy budget computed by $\transFunc(q,\alpha,\beta)$ (Line~3)
and adds the corresponding Laplace noise $[\tilde{x}_1,\ldots,\tilde{x}_L]$
to the true workload counts $\workload\data$.
When $q$ is a \wcq, the noisy counts are returned directly;
when $q$ is an \icq, the bin ids (the predicates)  that have noisy counts $\geq c$ are returned;
when $q$ is a \tcq, the bin ids (the predicates) that have the largest $k$ noisy counts are returned.
Beside the noisy output, the privacy budget consumed by this mechanism is returned as well.
The following theorem summarizes the properties of the two functions $\runFunc$ and $\transFunc$.
\begin{theorem}\label{theorem:pc_lm}
Given a query $q$ where $q$.type $\in \{\wcq,\icq,\tcq\}$,
Laplace mechanism (Algorithm~\ref{algo:lm}) denoted by $M$
can achieve $(\alpha,\beta)$-q.type accuracy
by executing the function $\runFunc(q,\alpha,\beta,D)$ for any $D\in \tabledomain$,
and satisfy differential privacy with a minimal cost of $\transFunc(q,\alpha,\beta).\upperbound$.
\end{theorem}
The accuracy and privacy proof is mainly based on the noise property of Laplace mechanism. Refer to Appendix~\ref{app:lm_proof} for the detailed proof.
\eat{
\begin{proof} The accuracy and privacy proof is mainly based on the noise property of Laplace mechanism.
Refer to Appendix~\ref{app:lm_proof} for the proof.

Give a \wcq $q_W$, for $D\in \tabledomain$,
setting noise parameter $b\leq \frac{\alpha}{\ln(1/(1-(1-\beta)^{1/L}))}$
guarantees that the failing probability
(when at least one of the predicates in $W$ is perturbed with noise more than $\alpha$)
is bounded by $\beta$.

Given an \icq $q_{W,>c}$,  for $D\in \tabledomain$,
setting noise parameter $b\leq \frac{\alpha}{\ln(1/(1-(1-\beta)^{1/L}))-\ln2}$
guarantees that the failing probability
(when at least one of the predicates in $W$ with counts smaller than $c-\alpha$ is incorrectly labeled as $>c$
or when at least one of the predicates in $W$ with counts greater than $c+\alpha$ is incorrectly labeled as $<c$)
is bounded by $\beta$.

Given  a \tcq $q_{W,k}$, for $D\in \tabledomain$,
setting the noise parameter $b \leq \frac{\alpha}{2(\ln(L/(2\beta)))}$
guarantees that the failing probability
(when at least one of the predicates in $W$ with counts smaller than $c_k-\alpha$ is incorrectly included in the top-$k$ output or when at least one of the predicates in $W$ with counts larger than $c_k + \alpha$ is incorrectly excluded from the top-$k$ output) is bounded by $\beta$.

This mechanism also satisfies $\epsilon$-differential privacy,
where $\epsilon=\|\workload \|_1/b$
(the output of $\transFunc()$ by the property of Laplace mechanism and the post-processing property of differential privacy (Theorem~\ref{theorem:post}).

\end{proof}
}

%This baseline translation is based on Laplace mechanism, and it also allows the noisy counts to be shown to analyst without hurting the privacy cost for \icq and \tcq.

%%%%%%%%%%%%%%%%%%%OLD MATERIALS%%%%%%%%%%%%%%%%%%%%%%%%%%%
\eat{
\subsubsection{Laplace Mechanism for \wcq}
Laplace mechanism can be directly used to answer a workload counting query $q_W(\cdot)$,
as shown in Algorithm~\ref{algo:wcq_lm}.
This algorithm will first transform the query $q_W$ and the data $D$
into matrix representation $\workload$ and $\data$ and has two functions.
The first function $\transFunc$ computes the worst minimal privacy cost required
for this mechanism to achieve $(\alpha,\beta)$-\wcq accuracy for the given query $q_W$,
and the second function takes this privacy cost and adds the corresponding noise to the true query answer.

\begin{theorem}\label{theorem:pc_lm}
Given a workload counting query $q_W:\tabledomain \rightarrow \mathbb{R}^L$,
Laplace mechanism (Algorithm~\ref{algo:wcq_lm}) denoted by $LM^{\alpha,\beta}_{W}(\cdot)$,
can achieve $(\alpha,\beta)$-\wcq accuracy by executing function $\runFunc(q_W,\alpha,\beta,D)$,
with the minimal differential privacy cost returned by function $\epsilon=\transFunc(q_W,\alpha,\beta)$.
%$\epsilon=\|\workload\|_1/b= \frac{\|\workload\|_1\ln(1/(1-(1-\beta)^{1/L}))}{\alpha}$.
\end{theorem}
\begin{proof}
\eat{%%%%%%%%%%loose bound
Set $b \leq \frac{\alpha}{\ln(L/\beta)}$. Then we have
\begin{eqnarray}
&&\Pr[\|LM(\workload,\data)-q_W(D)\|_{\infty} \geq \alpha] \nonumber \\
&\leq & \Pr[\cup_{i\in[1,L]} |\eta_i|\geq \alpha] \leq \sum_{i\in[1,L]} \Pr[|\eta_i|\geq \alpha] \leq L\cdot e^{-\alpha/b} \leq \beta \nonumber
\end{eqnarray}
However, the bound is not really tight here.}%%%%%%%%%%loose bound
To bound the failing probability for any table $D\in \tabledomain$,
\begin{eqnarray}
&&\Pr[\|LM(\workload,\data)-q_W(D)\|_{\infty} \geq \alpha] \nonumber \\
&=& 1-\prod_{i\in[1,L]} \Pr[|\eta_i| < \alpha] = 1-(1- e^{-\alpha/b})^L \leq \beta \nonumber
\end{eqnarray}
it requires $b\leq \frac{\alpha}{\ln(1/(1-(1-\beta)^{1/L}))}$.
This mechanism also satisfies $\epsilon = \|\workload\|_1/b$-differential privacy,
and hence setting $b= \frac{\alpha}{\ln(1/(1-(1-\beta)^{1/L}))}$ gives the least cost.
\end{proof}

If $W$ itself is a histogram query (a horizontal partition of the table), then $\|\workload\|_1=1$.
At worst case,  $\|\workload\|_1 = L$ for example for a prefix counting query as all queries will be affected by adding or removing a tuple from the table.

When $W$ is a single counting query, i.e. $L=1$, then the noise parameter for Laplace mechanism is just $b=\frac{\alpha}{\ln(1/\beta)}$
and the minimal privacy cost is $\epsilon = \frac{\ln(1/\beta)}{\alpha}$.

\subsubsection{Laplace Comparison Mechanism for \icq}
An iceberg counting query (\icq) can also be answered using
Laplace mechanism, similar to \wcq.
As shown in Algorithm~\ref{algo:icq-lcm},
noise $\eta$ drawn from Laplace distribution with parameter
$b=\frac{\alpha}{\ln(1/(1-(1-\beta)^{1/L}))-\ln2}$
is added to the differences between $\workload\data$ and $c$.
If the noisy difference for $\phi_i\in W$ is greater than 0,
the mechanism adds $\phi_i$ to the output.
This mechanism, referred as \emph{Laplace comparison mechanism}
has the following property.

\begin{theorem}\label{theorem:pc_lcm}
Given an iceberg counting query $q_{W,>c}:\tabledomain \rightarrow O$,
for a table  $D\in \tabledomain$,
The Laplace comparison mechanism (Algorithm~\ref{algo:icq-lcm})
denoted by $LCM^{\alpha,\beta}_{W,>c}(\cdot)$,
can achieve $(\alpha,\beta)$-\icq accuracy by executing function $\runFunc(q_{W,>c},\alpha,\beta,D)$,
with minimal differential privacy cost returned by function $\epsilon=\transFunc(q_W,\alpha,\beta)$.
%$\epsilon=\frac{\|\workload\|_1 (\ln(1/(1-(1-\beta)^{1/L}))-\ln2)}{\alpha}$.
\end{theorem}
\begin{proof}(sketch)
To bound the failing probability for any $D\in \tabledomain$,
\begin{eqnarray}
&& \Pr[|\{\phi \in M(D) ~|~ c_{\phi}(D) < c-\alpha\}|>0] \nonumber \\
&=& 1- \Pr[|\{\phi \in M(D) ~|~ c_{\phi}(D) < c-\alpha\}|=0] \nonumber \\
&=& 1- \prod_{\phi\in W: c_{\phi}(D)-c+\alpha<0} (1-\Pr[c_{\phi}(D)-c+\eta>0]) \nonumber\\
&\leq & 1- \prod_{\phi\in W: c_{\phi}(D)-c+\alpha<0} (1-\Pr[\eta>\alpha]) \nonumber\\
%&=& 1- \prod_{\phi\in W: c_{\phi}(D)- c+\alpha<0} (1-e^{-\alpha/b}/2) \nonumber \\
&\leq& 1-(1-e^{-\alpha/b}/2)^L < \beta \nonumber
\end{eqnarray}
and similarly,
\begin{eqnarray}
&& \Pr[|\{\phi \in (W-M(D)) ~|~ c_{\phi}(D) > c+\alpha\}|>0] \nonumber \\
%&=& 1- \Pr[|\{\phi \in (W-M(D)) ~|~ c_{\phi}(D) > c+\alpha\}|=0] \nonumber \\
%&=& 1 - \prod_{\phi \in W: c_{\phi}(D)- c-\alpha>0} (1-\Pr[c_{\phi}(D)-c+\eta<0]) \nonumber\\
%&\leq & 1-  \prod_{\phi \in W: c_{\phi}(D)- c-\alpha>0} (1-\Pr[\eta<-\alpha]) \nonumber\\
%&=& 1- \prod_{\phi\in W: c_{\phi}(D)- c-\alpha >0} (1-e^{-\alpha/b}/2) \nonumber \\
&\leq& 1-(1-e^{-\alpha/b}/2)^L < \beta \nonumber
\end{eqnarray}
it requires $b\leq \frac{\alpha}{\ln(1/(1-(1-\beta)^{1/L}))-\ln2}$.
This mechanism also satisfies $\epsilon = \|\workload \|_1/b$-differential privacy
by post-processing (Theorem~\ref{theorem:post}).
Hence, setting $\epsilon=\frac{\|\workload\|_1 (\ln(1/(1-(1-\beta)^{1/L}))-\ln2)}{\alpha}$ gives the least cost.
\end{proof}

When $W$ is a single counting query, i.e., $L=1$,
then the noise parameter used in Algorithm~\ref{algo:icq-lcm} is $b=\frac{\alpha}{(\ln(1/\beta)-\ln 2}$,
and hence the minimal privacy cost is $\frac{\ln(1/\beta)-\ln 2}{\alpha}$.

Alternatively, the analyst can pose a workload counting query $q_{W}$
with $(\alpha,\beta)$-\wcq tolerance via \system,
and then use the noisy answer of $q_{W}(D)$ to learn  $q_{W,>c}(D)$ locally.
This approach adds a smaller expected noise to $q_{W}(D)$,
and hence achieves $(\alpha,\beta)$-\icq tolerance, i.e.,
\begin{lemma}
Using the output of a Laplace mechanism $LM^{\alpha,\beta}_{W}$
to answer $q_{W,>c}$ can achieve $(\alpha',\beta)$-\icq accuracy,
 where $\alpha'=(1-\frac{\ln 2}{\ln(1/(1-(1-\beta)^{1/L}))})\alpha <\alpha$.
\end{lemma}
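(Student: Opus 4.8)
The plan is to treat the comparison answer as a deterministic post-processing of $LM^{\alpha,\beta}_{q_\phi}$: on input $D$ the mechanism computes the noisy count $\tilde q = q_\phi(D) + \eta$ with $\eta \sim Lap(b)$ and $b = \alpha/\ln(1/\beta)$ (Theorem~\ref{theorem:pc_lm}), and then outputs $1$ if $\tilde q > c$ and $0$ otherwise. To establish $(\alpha',\beta)$-\lcc tolerance I would verify the two conditional error bounds in the definition of tolerance separately, exploiting the symmetry of the Laplace density.

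First I would take the case $q_\phi(D) < c - \alpha'$. The post-processed mechanism returns $1$ exactly when $\eta > c - q_\phi(D)$, and since $c - q_\phi(D) > \alpha'$ this event is contained in $\{\eta > \alpha'\}$; hence the error probability is at most $\Pr[\eta > \alpha']$. The one quantitative fact I need is the \emph{one-sided} Laplace tail $\Pr[\eta > \alpha'] = \tfrac{1}{2} e^{-\alpha'/b}$, which is a factor of two smaller than the two-sided tail appearing in the $LCM$ analysis (Theorem~\ref{theorem:pc_lcm}). Requiring $\tfrac{1}{2} e^{-\alpha'/b} \le \beta$ is equivalent to $\alpha' \ge b\ln(1/(2\beta))$; substituting $b = \alpha/\ln(1/\beta)$ and writing $\ln(1/(2\beta)) = \ln(1/\beta) - \ln 2$ yields the smallest feasible value $\alpha' = \alpha\,(1 - \ln 2/\ln(1/\beta))$. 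The case $q_\phi(D) > c + \alpha'$ is the mirror image: the mechanism returns $0$ only when $\eta < c - q_\phi(D) < -\alpha'$, so the error probability is at most $\Pr[\eta < -\alpha'] = \tfrac{1}{2} e^{-\alpha'/b} \le \beta$ for the same $\alpha'$.

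Finally I would observe that $\alpha' < \alpha$ because $\ln 2/\ln(1/\beta) > 0$ for any $\beta \in (0,1)$, and that $\alpha' > 0$ as long as $\ln(1/\beta) > \ln 2$, i.e. $\beta < 1/2$, which holds for the fixed value $\beta = e^{-15}$ used throughout. I do not anticipate a real obstacle: the only point to be careful about is using the one-sided tail $\tfrac{1}{2} e^{-\alpha'/b}$ rather than the two-sided bound $e^{-\alpha'/b}$ — this is precisely the source of the improvement, since a single noise draw can push the answer across the threshold $c$ in only one direction, whereas the $(\alpha,\beta)$-\lcc tolerance definition charged $LCM$ a union bound over both directions. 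If desired, I would also note that the differential-privacy cost is unchanged at $\epsilon = 1/b = \ln(1/\beta)/\alpha$, since thresholding a released value is post-processing, so the lemma delivers a strictly tighter tolerance at no additional privacy price.
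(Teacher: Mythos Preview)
Your argument is correct and is exactly the computation the paper has in mind: the paper does not spell out a separate proof for this lemma, but the one-sided Laplace tail bound $\Pr[\eta>\alpha']=\tfrac12 e^{-\alpha'/b}$ with $b=\alpha/\ln(1/\beta)$ is precisely the same device used in the paper's proof of Theorem~\ref{theorem:pc_lcm}, and your substitution $\ln(1/(2\beta))=\ln(1/\beta)-\ln 2$ recovers the claimed $\alpha'$. One small expository quibble: the improvement does not come from $LCM$ being ``charged a union bound over both directions'' (its proof also uses the one-sided tail), but rather from the fact that $LM$ was calibrated to the \emph{two}-sided bound $\Pr[|\eta|\ge\alpha]\le\beta$, so its noise scale is smaller than necessary for the one-sided \lcc condition.
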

This approach also allows a data analyst  to make more local decisions,
for example, how much to adjust the threshold of a similarity function,
but consumes a larger privacy cost of $\epsilon=\frac{\|\workload\|_1 (\ln(1/(1-(1-\beta)^{1/L})))}{\alpha}$
compared to  $LCM^{\alpha,\beta}_{W,>c}$ with a privacy cost of
$\frac{\|\workload\|_1 (\ln(1/(1-(1-\beta)^{1/L}))-\ln2)}{\alpha}$.

\subsubsection{Top-$k$ Counting Query for \tcq} \label{sec:noise_topk}
}

\subsection{Special Translation for \wcq}\label{sec:translation_wcq}
\begin{algorithm}[t]
  \caption{\wcq-SM $(q_W,\alpha,\beta,D,\strategy)$}\label{algo:wcq-sm}
{\small \begin{algorithmic}[1]
\State Initialize $\workload \gets \transform(W), \data \gets \transform_W(D), \alpha,\beta,\strategy$
\Function{\runFunc}{$q_W,\alpha,\beta, D$}
\State $\epsilon \gets \transFunc(\workload,\alpha,\beta).\upperbound$
    \State $\omega \gets \workload\strategy^+(\strategy\data + Lap(b)^l)$, where $b=\|A\|_1/\epsilon$
    \State {\bf return} $(\omega, \epsilon)$
\EndFunction
\Function{\transFunc}{$q_W,\alpha,\beta$}
\State Set $u = \frac{\|\strategy\|_1\|\workload\strategy^+\|_F}{\alpha\sqrt{\beta/2}}$ and  $l =0$
\State $\epsilon=\textsc{binarySearch}(l,u,\textsc{estimateBeta}(\cdot,\alpha,\beta,\workload\strategy^+))$
%\Repeat
%    \State Set $\epsilon = (l+u)/2$
%     \State $\beta_e \gets \textsc{estimateBetaMC}(\epsilon,\alpha,\beta,\workload\strategy^+)$
%     \If{$\beta_e < \beta$}
%         \State $u = \epsilon$
%     \Else
%         \State $l = \epsilon$
%     \EndIf
%\Until{$u-l< 10^{-5}$}
\State {\bf return} $(\upperbound=\epsilon, \lowerbound=\epsilon)$
\EndFunction
\Function{estimateBeta}{$\epsilon,\alpha, \beta,\workload\strategy^+$}
\State Sample size $N = 10000$ and failure counter $n_f=0$
    \For{$i\in [1,\ldots,N]$}
          \State Sample noise $\noisev_i ~\sim Lap(\|\strategy\|_1/\epsilon)^l$
          \If{$\|(\workload\strategy^+)\noisev_i\|_{\infty}>\alpha$}
               \State $n_f$++
          \EndIf
     \EndFor
     \State $\beta_e = n_f/N$, $p=\beta/100$
     \State $\delta\beta = z_{1-p/2}\sqrt{\beta_e(1-\beta_e)/N}$ %p=0.000002, 4.75342 , with 1-p chance, beta_e +/- \delta\beta
     %http://web.mst.edu/~dux/repository/me360/ch8.pdf
     \State {\bf return} $(\beta_e+\delta\beta+p/2)<\beta$
\EndFunction
\end{algorithmic}}
\end{algorithm}

The privacy cost of the Laplace mechanism increases linearly with $\|\workload\|\change{_1}$,
the sensitivity of the workload in the query. For example, a prefix workload
has a sensitivity equals to the workload size $L$.
When $L$ is very large, the privacy cost grows drastically.
To address this problem,
\system provides a special translation for \wcq, called {\em strategy-based mechanism}.
This mechanism considers a different {\em strategy} workload $\strategy$
such that (i) $\strategy$ has a low sensitivity $\|\strategy \|_1$,
and (ii) rows in $\workload$ can be reconstructed using
a small number of rows in $\strategy$.

Let strategy matrix $\strategy$ be a $l\times |dom_W(R)|$ matrix,
and $\strategy^+$  denote its Moore-Penrose pseudoinverse,
such that $\workload\strategy\strategy^+=\workload$.
Given such a strategy workload $\strategy$,
we can first answer $\strategy$ using Laplace mechanism,
i.e. $\hat{y} =\strategy\data + \noisev$, where $\noisev\sim Lap(b)^l$ and $b=\frac{\|\strategy\|_1}{\epsilon}$,
and then reconstruct answers to $\workload$
from the noisy answers to $\strategy$ (as a postprocessing step), i.e. $(\workload \strategy^+)\hat{y}$.
This approach is formally known as the {\em Matrix mechanism}~\cite{Li:2010:OLC:1807085.1807104, Li:2015:MMO:2846574.2846647}
and shown in the $\runFunc$ of Algorithm~\ref{algo:wcq-sm}.
If a strategy $\strategy$ is used for this mechanism,
we denote it by $\strategy$-strategy mechanism.
In this work, we consider several popular strategies
in prior work~\cite{Li:2010:OLC:1807085.1807104, Li:2015:MMO:2846574.2846647, hdmm18},
such as hierarchical matrix $H_2$.
Techniques like HDMM~\cite{hdmm18}
can automatically solve for a good strategy (but this is not our focus).

\eat{
Let strategy matrix $\strategy$ be a $l\times |dom_W(R)|$ matrix,
and $\strategy^+$  denote its Moore-Penrose pseudoinverse,
such that $\workload\strategy\strategy^+=\workload$.
The matrix mechanism is given by the following:
\begin{eqnarray}
M_{\strategy}(\workload,\data) = \workload\data + \workload\strategy^+Lap(\|\strategy\|_1/\epsilon)^l
\end{eqnarray}
}

\eat{
\begin{example}
Consider a workload over attribute $Income$,
$W=\{Income<100$, $Income<200$, $\ldots$, $Income<999900$, $Income <1000000\}$.
The transformation based on $W$ gives a partition of the domain of rows
into a set of ranges over Income $dom_W(R) =\{[100,200),[200,300),\ldots,[999900,1000000)\}$.
Hence, the corresponding data vector is $\data = [\data[0],\ldots,\data[L-1]]^T$,
where $\data[i]$ corresponds to the number of rows in $D$
with attribute Income falling within $[100i, 100(i+1))$.
The corresponding workload matrix $\workload$ of size $L \times L$
is known as {\em prefix workload},
\[
\workload=
  \begin{bmatrix}
    1 & 0 & 0 & \ldots & 0 \\
    1 & 1 & 0 & \ldots & 0 \\
    1 & 1 & 1 & \ldots & 0 \\
      & \ldots  &   & \ldots & 0 \\
    1 & 1 & 1 & \ldots & 1 \\
  \end{bmatrix}
\]
The sensitivity for this query is $\|\workload\|_{1} = L$.
Hence, using the baseline translation,
the expected worst error for $q_W$ is $O(|W|)$.

If we consider a strategy matrix $\strategy$ known as $H_2$ strategy,
which corresponds to a binary tree over the domain $dom_W(R)$, e.g., when $|W|=4$,
\[
\strategy=
  \begin{bmatrix}
    1 & 1 & 0 & 0 \\
    0 & 0 & 1 & 1 \\
    1 & 0 & 0 & 0 \\
    0 & 1 & 0 & 0 \\
    0 & 0 & 1 & 0 \\
    0 & 0 & 0 & 1
  \end{bmatrix}
\]
Using this strategy matrix, the error for each noisy count for $\strategy$ is $O(\log^2(L))$,
as the sensitivity of this strategy is only $O(\log(L))$.
For each row in $\workload$ requires at most $O(\log(L))$ to reconstruct,
and hence the error in answering $\workload$ using $H_2$-strategy is $O(\log^3(|W|))$.
\eat{However, if consider a histogram workload over attribute $Income$,
$W'=\{Income<1000$, $1000\leq Income<2000$, $\ldots$, $999000 \leq Income <1000000\}$
the expected worst error for $q_W'$ is $O(1)$
which is much smaller than the error using $H_2$-strategy-based mechanism, $O(\log^3(|W'|))$.}
\end{example}
}

\eat{
We would like to minimize the largest error added to the set of counting queries, i.e.,
\begin{eqnarray}
\min_{\strategy} \mathbb{E}\|(\workload\strategy^+)\noisev\|_{\infty}.
\end{eqnarray}
HDMM~\cite{hdmm18} can be applied to find the optimal strategy efficiently,
with minor change on optimizing the maximum error instead of the total error in expectation.
Suppose $\strategy$ is identified as the optimal strategy
that achieves the smallest maximum error in expectation for a given privacy cost $\epsilon$,
what is the minimal privacy cost for strategy-based mechanism using $\strategy$ to achieve $(\alpha,\beta)$-\wcq?
}

However, translating the accuracy requirement on WCQ-SM is nontrivial as the answers to the query $q_W(D)$ %, $\|(\workload\strategy^+)\noisev)\|_{\infty}$,
are linear combinations of noisy answers.
The errors are due to the sums of weighted Laplace random variables
which have non-trivial CDFs for the accuracy translation.
Hence, we propose an accuracy to privacy translation method
shown in the $\transFunc$ function of Algorithm~\ref{algo:wcq-sm}.
We first set an upper bound $u$ for the privacy to achieve $(\alpha,\beta)$-\wcq accuracy
based on Theorem~\ref{theorem:strategy_upperbound} (Appendix~\ref{app:strategy_proof})
and conduct a binary search on a privacy cost $\epsilon$ between $l$ and $u$ (Line~9)
such that  the failing probability to bound the error by $\alpha$ equals to $\beta$.
During this binary search,
for each $\epsilon$ between $l$ and $u$, we run Monte Carlo simulation to learn the empirical failing rate $\beta_e$
to bound the error by $\alpha$ shown in the Function $\textsc{estimateBeta}()$
such that with high confidence $1-p$, the true failing probability $\beta_t$ lies within $\beta_e \pm \delta\beta$.
If the empirical failing rate $\beta_e$ is sufficiently smaller than $\beta$, then the upper bound is set to $\epsilon$;
otherwise the lower bound is set to $\epsilon$. The next value for $\epsilon$ is $(l+u)/2$.
This search process stops when $l$ and $u$ is sufficiently small.
This approach can be generalized to all data-independent differentially private mechanisms.
The simulation can be done offline.

\begin{theorem}\label{theorem:strategy_tightbound}
Given a workload counting query $q_W:\tabledomain \rightarrow \mathbb{R}^L$.
Answering $q_W$ with $\strategy$-strategy mechanism
by executing $\runFunc(q_W,\alpha,\beta,D)$ in Algorithm~\ref{algo:wcq-sm}
achieves $(\alpha,\beta)$-\wcq accuracy for any $D\in \tabledomain$
with an approximated minimal privacy cost of $\transFunc(q_W,\alpha,\beta).\upperbound$.
\end{theorem}
Refer to Appendix~\ref{app:strategy_proof} for the proof.
\eat{
\begin{proof}
Refer to Appendix~\ref{app:strategy_proof}.
Given an $\epsilon$, the simulation in the Function $\textsc{estimateFailingRateMC}()$ ensures that
with high probability $1-p$, the true failing probability $\beta_t$ to
bound $\| (\workload\strategy^+)\eta\|_{\infty}$ by $\alpha$
lies within $\beta_e\pm \delta \beta$.
The failing probability to bound $\beta_t <\beta_e+\delta\beta$ is $p/2$.
By union bound, this $\epsilon$ ensures $(\alpha,\beta')$-\wcq accuracy, where $\beta'< \beta+\delta\beta+p/2$.
If $(\beta+\delta\beta)(1-p/2)+p/2 < \beta+\delta\beta +p/2 <\beta$, then this $\epsilon$ ensures $(\alpha,\beta)$-\wcq accuracy.
Beside this estimation, in the binary search of $\transFunc()$,
we stop when $\epsilon_{\min}$ and $\epsilon_{\max}$ are sufficiently close.
Hence, the privacy cost returned by $\transFunc()$ is an approximated
minimal privacy cost required for $(\alpha,\beta)$-\wcq accuracy.
\end{proof}
}
In the future work, we can add similar optimizers like HDMM~\cite{hdmm18}
which finds the optimal strategy efficiently for a given query.

%Briefly discuss Gaussian noise.
%Post-processing Optimization

\eat{
This approach also allows a data analyst  to make more local decisions,
for example, how much to adjust the threshold of a similarity function,
but consumes a larger privacy cost of $\epsilon=\frac{\|\workload\|_1 (\ln(1/(1-(1-\beta)^{1/L})))}{\alpha}$
compared to  $LCM^{\alpha,\beta}_{W,>c}$ with a privacy cost of
$\frac{\|\workload\|_1 (\ln(1/(1-(1-\beta)^{1/L}))-\ln2)}{\alpha}$.
}

\subsection{Special Translation for \icq} \label{sec:translation_icq}
The strategy-based mechanism that we used for \wcq
can be adapted to answer \icq
if used in conjunction with a post-processing step (Section~\ref{sec:icq_sm}).
%We note that for \icq not all queries need be evaluated due to dependencies between the queries.
%We use this observation to design an Ordered Mechanism (Section~\ref{sec:icq_om}).
We also present another novel data dependent translation strategy for \icq
that may result in different privacy loss for different datasets
given the same accuracy requirement (Section~\ref{sec:icq_mpm}).

\eat{
Similar to \wcq, strategy-based mechanisms are also applicable to \icq
with a post-processing step, shown in Section~\ref{sec:icq_sm}.
We will then present several novel differentially private mechanisms
with data-dependent privacy cost for \icq.
For these mechanisms, we will see the privacy budget consumed at the end of $\runFunc()$
is possibly smaller than the predication given by the function $\transFunc()$.
}

\subsubsection{Strategy-based Mechanism (\icq-SM)}\label{sec:icq_sm}
The analyst can pose a workload counting query $q_{W}$
with $(\alpha,\beta)$-\wcq requirement via \system,
and then use the noisy answer of $q_{W}(D)$ to learn $q_{W,>c}(D)$ locally.
This corresponds to a post-processing step of a differentially private mechanism
and hence still ensures the same level of differential privacy guarantee (Theorem~\ref{theorem:post}).
On the other hand, $(\alpha,\beta)$-\icq accuracy
only requires to bound one-sided noise by $\alpha$ with probability $(1-\beta)$,
and $(\alpha,\beta)$-\wcq accuracy requires to bound two-sided noise with the same probability.
Hence, if a mechanism has a failing probability of $\beta$ to bound the error for \wcq,
then using the same mechanism has a failing probability of $\beta/2$ to bound the error for \icq.

\eat{
\begin{lemma}
Using the output of the function $\runFunc(q_W,\alpha,\beta,D)$
of \wcq-SM in Algorithm~\ref{algo:wcq-sm}, denoted by \icq-SM,
to answer $q_{W,>c}$ achieves $(\alpha,\beta')$-\icq accuracy
where $\beta' <\beta$
%to answer $q_{W,>c}$ achieves $(\alpha',\beta)$-\icq accuracy,
%where $\alpha'<\alpha$, %$\alpha'=(1-\frac{\ln 2}{\ln(1/(1-(1-\beta)^{1/L}))})\alpha <\alpha$.
and with an approximated minimal privacy cost of
$\transFunc(q_W,\alpha,\beta).\upperbound$.
\end{lemma}
}

\eat{
\subsubsection{Ordering Mechanism (\icq-OM)}\label{sec:icq_om}
We observe that there is ordering relationship between predicates within the same workload for many exploration queries,
such as a prefix workload $\{Age<10, Age<20,\ldots, Age<100\}$, or a hierarchical workload over a 2D space.
To answer an \icq with such a workload and threshold $c$,
it is unnecessary to test all the predicates whether they have a count greater than $c$.
If the number of people with $Age<10$ is more than $c$, then the number of people with $Age<20$ is also more than $c$.
Hence, given $c_{\phi_i}(D)<c_{\phi_j}(D)$ for any $D \in \tabledomain$,
we  add to $\phi_j$ to the output if the noisy count for $\phi_i$ passes the threshold without evaluating $\phi_j$
or prune $\phi_i$ if the noisy count for $\phi_j$ fails the threshold without evaluating $\phi_i$.
To enable this pruning process, we build a directed acyclic graph $G_W$ from $W$,
where $W$ form the set of nodes in $G_W$ and there is a directed edge from $\phi_i$ to $\phi_j$
if $\phi_i \rightarrow \phi_j$, which implies $c_{\phi_i}(D)\leq c_{\phi_j}(D)$ for any $D \in \tabledomain$.
Hence, this graph has the following property:
for every path $\phi_{i_1}\rightarrow\phi_{i_2}\cdots \rightarrow \phi_{i_j}$ found in $G_W$,
it is true that  $c_{\phi_{i_1}}(D)\leq c_{\phi_{i_2}}(D) \cdots \leq c_{\phi_{i_j}}(D)$
for any $D\in \tabledomain$.

\begin{algorithm}[t]
\caption{\icq-OM($q_{W,>c}, \alpha,\beta,D$)}\label{algo:icq-om}
{\small \begin{algorithmic}[1]
\State Initialize $G_W(V=\{\phi_1,\ldots,\phi_L\}, E=\{(\phi_i,\phi_j)|\phi_i \rightarrow \phi_j\},\alpha,\beta, D$
\Function{\runFunc}{$q_{W,>c}, \alpha,\beta,D$}
    \State $\mathcal{P} \gets {\tt argmin}_{P_1,\ldots,P_l: \cup (P_i.V)=G_W.V} \sum_{i=1}^l \ceil{1+\log_2|P_i|}$
    \State $\omega=[],\epsilon=0$
    \While{$\mathcal{P}$ is not empty}
        \State Pick a longest path $P=(\phi_{i_1}\rightarrow \phi_{i_2}\cdots \rightarrow \phi_{i_{|P|}})$ in $\mathcal{P}$
        \State Set $l=1$ and $r=|P|$
        \Repeat
            \State Set $m  = (l+r)/2$ and $\phi = \phi_{i_m}$
            \State $\tilde{x}=q_{\phi}(D)-c+ Lap(b)$, where $b=\frac{\alpha}{\ln(L/2\beta)}$
            \If{$\tilde{x} \geq 0$}
                \State $\omega.{\tt add}(\phi,r_{W_G}(\phi\rightarrow *))$ and remove them from $\mathcal{P}$
                \State Set $r=m-1$
            \Else
                \State Remove $r_{W_G}(*\rightarrow\phi)$ from $\mathcal{P}$
                \State Set $l=m+1$
            \EndIf
            \State $\epsilon = \epsilon + 1/b$
        \Until{$l=r$}
    \EndWhile
    \State {\bf return} $(\omega,\epsilon)$
\EndFunction
\Function{\transFunc}{$q_{W,>c},\alpha,\beta$}
    \State $k \gets \min_{P_1,\ldots,P_l: \cup (P_i.V)=G_W.V} ~~~~~\sum_{i=1}^l \ceil{1+\log_2|P_i|}$
    \State {\bf return} $\epsilon = \frac{k\ln(L/2\beta)}{\alpha}$
\EndFunction
\end{algorithmic}}
\end{algorithm}

Based on this structure, Algorithm~\ref{algo:icq-om} captures the ordering property
within the predicates in the workload to prune unnecessary comparisons
to save the privacy budget for achieving $(\alpha,\beta)$-\icq accuracy.
In this algorithm, $r_{W_G}(\phi \rightarrow *)$ denotes the set of nodes that are reachable from $\phi$ in $G_W$ and
$r_{W_G}(* \rightarrow \phi)$ denotes the set of nodes that reach $\phi$ in $G_W$. %(excluding $\phi$ itself).
The function $\transFunc()$ partitions $G_W$ into a set of paths $\mathcal{P}$
such that the total number of comparisons is minimized
and computes the worst privacy cost if all the paths are evaluated.
In $\runFunc()$, along each path in $\mathcal{P}$,
we start a binary search on the predicate.
If the predicate $\phi$ with noisy count $\geq 0$, all the predicates reachable from $\phi$ are added to the output,
and removed from $\mathcal{P}$; otherwise, all the nodes that reach $\phi$ are removed from $\mathcal{P}$.
\eat{
we just need to identify one $\phi$ with noisy count sufficiently bigger than the threshold,
then all the predicates that are reachable from $\phi$ can be included in the output
and all the predicates that reaches $\phi$ can be removed from the graph.
}

\begin{theorem}\label{theorem:pc_om}
Given an iceberg counting query $q_{W,>c}:\tabledomain \rightarrow O$,
for a table  $D\in \tabledomain$,
the ordering mechanism (Algorithm~\ref{algo:icq-om}) for \icq
can achieve $(\alpha,\beta)$-\icq accuracy by executing function $\runFunc(q_{W,>c},\alpha,\beta,D)$,
with differential privacy cost of $\transFunc(q_W,\alpha,\beta).\upperbound$.
\end{theorem}
\begin{proof}(sketch) There are $k\log_2(L)$ number of noisy comparisons at most.
To ensure $(\alpha,\beta/L)$-\icq accuracy for each predicate,
the total privacy loss is at most $\frac{k\ln(L/(2\beta))}{\alpha}$.
\end{proof}

%\todo{How to take account of parallel composition between some predicates?}
}

\subsubsection{Multi-Poking Mechanism (\icq-MPM)}\label{sec:icq_mpm}
We propose a data-dependent translation for \icq,
which can be used as a subroutine for mechanisms that involve threshold testing.
For ease of explanation,
we will illustrate this translation with a special case of \icq
when the workload size $L=1$, denoted by, $q_{\phi,>c}(\cdot)$.
Intuitively, when $c_{\phi}(D)$ is much larger (or smaller) than $c$,
then a much larger (smaller resp.) noise can be added to $c_{\phi}(D)$
without changing the output of \system. Consider the following example.

\begin{example}
Consider a query $q_{\phi,>c}$, where $c=100$.
To achieve $(\alpha, \beta)$ accuracy for this query,
where $\alpha=10, \beta=0.1^{10}$, the
Laplace mechanism requires a privacy cost of $\frac{\ln(1/(2\beta))}{\alpha} = 2.23$
by Theorem~\ref{theorem:pc_lm}, regardless of input $D$.
Suppose $c_{\phi}(D)=1000$.
In this case, $c_{\phi}(D)$ is much larger than the threshold $c$,
and the difference is $\frac{(1000-100)}{\alpha}=90$ times of the accuracy bound $\alpha=10$.
Hence, even when applying Laplace comparison mechanism
with a privacy cost equals to $\frac{2.23}{90}\approx 0.25$
wherein the noise added is bounded by $90\alpha$ with high probability $1-\beta$,
the noisy difference $c_{\phi}(D)-c+\eta_{sign}$ will still be greater than 0 with high probability.
\end{example}

This is an example where a different mechanism
rather than Laplace mechanism
achieves the same accuracy with a smaller privacy cost.
Note that the tightening of the privacy cost in this example
requires to know the value of $c_{\phi}(D)$.
It is difficult to determine a privacy budget for poking without looking at the query answer.
To tackle this challenge, we propose an alternative approach that allows
of $m$ \emph{pokes} with increasing privacy cost.
This approach is summarized in Algorithm~\ref{algo:mpm}
as Multi-Poking Mechanism (MPM).
This approach first computes the privacy cost if all $m$ pokes are needed,
$\epsilon_{\max} = \frac{\ln(m/(2\beta))}{\alpha}$.
The first poke checks if bins have either sufficiently large noisy differences $\tilde{\ans}$
with respect to the accuracy $\alpha_{0}$ for the current privacy cost (Lines 8-10).
If this is true (Lines 10), then the set of predicates with sufficiently large positive differences is returned;
otherwise, the privacy budget is relaxed with additional $\epsilon_{\max}/m$.
At $(i+1)$th iteration, instead of sampling independent noise,
we apply the $\relaxprivacy$ Algorithm (details refer to \cite{DBLP:journals/corr/KoufogiannisHP15a})
to correlate the new noise $\noisev_{i+1}$ with noise $\noisev_i$ from the previous iteration.
In this way, the privacy loss of the first $i+1$ iterations is $\epsilon_{i+1}$,
and the noise added in the $i+1$th iteration is equivalent to
a noise generated with Laplace distribution with privacy parameter $b=(1/\epsilon_{i+1})$.
This approach allows the data analyst to learn the query answer with a gradual relaxation of privacy cost.
This process repeats until all $\epsilon_{\max}$ is spent.
We show that Algorithm~\ref{algo:mpm} achieves both accuracy and privacy requirements.

\begin{algorithm}[t]
\caption{\icq-MPM($q_{W,>c},\alpha,\beta,D,m$)} \label{algo:mpm}
{\small \begin{algorithmic}[1]
\State Initialize $\workload \gets \transform(W), \data \gets \transform_W(D), \alpha,\beta, m=10$
\Function{\runFunc}{$q_{W,>c},\alpha,\beta,D$}
\State Compute  $\epsilon_{\max} = \transFunc(q_{W,>c},\alpha,\beta).\upperbound$
\State Initial privacy cost $\epsilon_{0} = \epsilon_{\max}/m$
\State $\tilde{\ans}_0 =  \workload\data -c + \noisev_0$, where $\noisev_0 \sim Lap(\|\workload\|_1/\epsilon_{0})^L$
\For{$i=0,1,\ldots,m-2$}
    \State Set $\alpha_{i} = \|\workload\|_1\ln (mL/(2\beta))/\epsilon_{i}$
    \State $W_{+} \gets \{\phi_j \in W~~~|~~~(\tilde{\ans}_i[j] - \alpha_{i})/\alpha \geq -1\}$
    \State $W_{-} \gets \{\phi_j \in W~~~|~~~(\tilde{\ans}_i[j] + \alpha_{i})/\alpha \leq 1\}$
    %\For{$j=1,\ldots, L$}
    %    \If {$(\tilde{\ans}_i[j] - \alpha_{i})/\alpha \geq -1$}
    %        \State Add $\phi_j$ to $O_{+}$
    %    \ElsIf {$(\tilde{\ans}_i[j] + \alpha_{i})/\alpha \leq 1$}
    %        \State Add $\phi_j$ to $O_{-}$
    %    \EndIf
    %\EndFor
    \If{$ (W_{+}\cup W_{-}) = W$}
        \State \textbf{return} $(W_{+},\epsilon_i)$
    \Else
        \State Increase privacy budget $\epsilon_{i+1} = \epsilon_{i} + \epsilon_{\max}/m$
        \For{$j=1,\ldots,L$}
            \State $\noisev_{i+1}[j]  = \relaxprivacy(\noisev_i[j],\epsilon_{i},\epsilon_{i+1})$
               \cite{DBLP:journals/corr/KoufogiannisHP15a}
         \EndFor
         \State New noisy difference $\tilde{\ans}_{i+1} = \workload\data -c + \noisev_{i+1}$
    \EndIf
\EndFor
\State \textbf{return} $(\{\phi_j\in W~~~|~~~\tilde{\ans}_{m-1}[j] >0 \} ,\epsilon_{\max})$
\EndFunction
\Function{\transFunc}{$q_{W,>c},\alpha,\beta$}
\State \textbf{return} $\upperbound=\frac{\|\workload\|_1\ln(mL/(2\beta))}{\alpha}, \lowerbound = \frac{\upperbound}{m}$
\EndFunction
\end{algorithmic}}
\end{algorithm}

\eat{
\begin{algorithm}[t]
\caption{\icq-MPM($q_{\phi,>c},\alpha,\beta,D,m$)} \label{algo:mpm}
{\small \begin{algorithmic}[1]
\State Initialize $q_{\phi,>c},\alpha,\beta, D,f,m$
\Function{\runFunc}{$q_{\phi,>c},\alpha,\beta,D$}
\State Compute  $\epsilon_{\max} = \transFunc(q_{\phi,>c},\alpha,\beta)$
\State Initial privacy cost $\epsilon_{0} = \epsilon_{\max}/m$
\State $\tilde{x}_0 =  q_{\phi}(D) -c + \eta_0$, where $\eta_0 = Lap(1/\epsilon_{0})$
\For{$i=0,1,\ldots,m-2$}
    \State Set $\alpha_{i} = \ln (m/(2\beta))/\epsilon_{i}$
    \If {$(\tilde{x}_i - \alpha_{i})/\alpha \geq -1$}
        \State \textbf{return} $(\phi,\epsilon_i)$
    \ElsIf {$(\tilde{x}_i + \alpha_{i})/\alpha \leq 1$}
        \State \textbf{return} $(\emptyset,\epsilon_i)$
    \Else
        \State Increase privacy budget $\epsilon_{i+1} = \epsilon_{i} + \epsilon_{\max}/m$
        \State Update noise $\eta_{i+1}  = NoiseDown(\eta_i,\epsilon_{i},\epsilon_{i+1})$
               \cite{DBLP:journals/corr/KoufogiannisHP15a}
        \State New noisy difference $\tilde{x}_{i+1} = q_{\phi}(D) -c + \eta_{i+1}$
    \EndIf
\EndFor
\If {$\tilde{x}_{m-1} >0$}
    \State \textbf{return} $(\phi,\epsilon_{\max})$
\Else
     \State \textbf{return} $(\emptyset,\epsilon_{\max})$
\EndIf
\EndFunction
\Function{\transFunc}{$q_{\phi,>c},\alpha,\beta$}
\State \textbf{return} $\epsilon=\ln(m/(2\beta))/\alpha$
\EndFunction
\end{algorithmic}}
\end{algorithm}

\begin{theorem}\label{theorem:pc_mpm}
Given a query $q_{\phi,>c}$,  Multi-Poking Mechanism (Algorithm~\ref{algo:mpm}),
achieves $(\alpha,\beta)$-\icq accuracy by executing function $\runFunc(q_{\phi,>c}, \alpha,\beta,D)$,
with differential privacy cost of $\transFunc(q_{\phi,>c},\alpha,\beta).\upperbound$.
\end{theorem}
\begin{proof}
Refer to Appendix~\ref{app:mpm_proof}.
\end{proof}
}

\begin{theorem}\label{theorem:pc_mpm}
Given a query $q_{W,>c}$,  Multi-Poking Mechanism (Algorithm~\ref{algo:mpm}),
achieves $(\alpha,\beta)$-\icq accuracy by executing function $\runFunc(q_{W,>c}, \alpha,\beta,D)$,
with differential privacy of  $\transFunc(q_{W,>c},\alpha,\beta).\upperbound$.
\end{theorem}
Refer to Appendix~\ref{app:mpm_proof} for proof.
The privacy loss of multi-poking mechanism at the worst case (the value returned by $\transFunc$)
is greater than that of the baseline LM, but this mechanism may stop before $\epsilon_{\max}$ is used up,
and hence it potentially saves privacy budget for the subsequent queries.

%In fact, the privacy loss of the multi-poking mechanism can be a $\frac{\ln (m-1)/2\beta}{m}$ fraction
%of the privacy loss of Laplace mechanism, if this mechanism returns in the first iteration.
%We also show in the evaluation that this optimization
%allows more queries to be posed within the overall budget constraint $B$,
%resulted in better cleaning quality than LCM.

\eat{
To answer $q_{W,>c}$ where $|W|>$,
we can use \icq-MPM to answer a set of \icq with a single predicate
i.e. $q' = \{q_{\phi,>c} |\phi\in W \}$.
For each query $q_{\phi}\in q'$,
executing \icq-MPM.$\runFunc(q_{\phi},\alpha,\beta/L)$
to achieve $(\alpha,\beta/L)$-\icq accuracy for $q_{W,>c}$.
This will cost a privacy cost of $\frac{\ln(mL/(2\beta))}{\alpha}$.

Moreover, this multi-poking mechanism can
be combined with the ordering mechanism
by replacing the single count comparison
Lines~10-18 in the $\runFunc()$ of Algorithm~\ref{algo:icq-om} by
the multi-poking mechanism.
This further improves the actual privacy cost in many scenarios.
We denote this mechanism by \icq-OMPM,
where the $\transFunc()$ outputs the worst privacy cost
$\epsilon=\frac{k\ln(mL/(2\beta))}{\alpha}$.
}

%%%%%%%%%%%%%%%%%%%%%%%%OLD MATERIALS%%%%%%%%%%%%%%%%
\eat{
\begin{algorithm}[t]k
\caption{LCOM($q_{W,>c}, \alpha,\beta,D$)}
\label{algo:icom}
{\small \begin{algorithmic}[1]
\Require \icq $q_{W,>c}(\cdot)$, $(\alpha,\beta)$-\lcc , table $D$
\Ensure  Answer $a \subseteq W$
\State Build $G_W(V=\{\phi_1,\ldots,\phi_L\}, E=\{(\phi_i,\phi_j)|\phi_i \rightarrow \phi_j\}$
%\State $\epsilon = \frac{8\ln L* + \ln (2/\beta)}{\alpha}$
%\State Let $\tilde{c} = c+ Lap(2/\epsilon)$
\While{$G_W$ is not empty}
     \State Pick a longest path $P=(\phi_{i_1}\rightarrow \phi_{i_2}\cdots \rightarrow \phi_{i_{|P|}})$ in $G_W$
     \State Estimate privacy cost for SVT: $\epsilon_{svt} = \frac{8(\ln(|P|) + \ln(2/\beta))}{\alpha}$
     \State Estimate privacy cost for BS: $\epsilon_{bs} = \frac{\log_2(|P|)\ln(1/(2\beta))}{\alpha}$
     \If{$\epsilon_{svt} < \epsilon_{bs}$}
         \State $\phi \leftarrow SVT(P,\alpha,\beta,D)$ %Run SVT on this path from left to right: output $\phi_i$
         \State Output $\phi$ and $r_{W_G}(\phi\rightarrow *)$  and remove them from $G_W$
         \State Remove $r_{W_G}(*\rightarrow\phi)$ from $G_W$
     \Else
        \State Set $l=i_1$ and $r=i_{|P|}$
        \While{$l \neq r$}
            \State Set $m  = (l+r)/2$ and $\phi = \phi_{i_m}$
            \If{$\phi \in LCM(q_{\{\phi\},>c},\alpha,\beta,D)$}
                \State Output $\phi$ and $r_{W_G}(\phi\rightarrow *)$ and remove them from $G_W$
                \State Set $r=m-1$
            \Else
                \State Remove $r_{W_G}(*\rightarrow\phi)$ from $G_W$
                \State Set $l=m+1$
            \EndIf
        \EndWhile
     \EndIf
\EndWhile
\end{algorithmic}}
\end{algorithm}
}

\eat{
\subsubsection{LCM with Poking}
The first approach is summarized in Algorithm~\ref{algo:poking_lcm},
named as Laplace Comparison Mechanism with Poking (LCMP).
This algorithm first computes the privacy cost of $LCM^{\alpha,\beta}_{q_{\phi,>c}}$
based on Theorem~\ref{theorem:pc_lcm}, denoted by $\epsilon_{LCM}$ (Line~1). It then chooses to run LCM with a \emph{small fraction} of this privacy cost: i.e., it adds $Lap(1/\epsilon_0)$ to the difference $q(D) - c$, with $\epsilon_0 = f \cdot\epsilon_{LCM}$ (Line~2,3). If the noisy difference is too large (Line~5), then LCMP returns `True'. If it is too small (Line~6), LCMP return `False'. In both these cases, LCMP incurs a fraction of the privacy loss of LCM. If the noisy difference is neither too small or too large, it runs LCM (Line~10), and incurs an additional privacy loss of $\epsilon_{LCM}$.

\begin{theorem}\label{theorem:poking_lcm}
Given a \lcc query $q_{\phi,>c}$,
for any table $D\in \tabledomain$, LCM with Poking
(Algorithm~\ref{algo:poking_lcm}),
denoted by $LCMP_{q_{\phi,>c}}^{\alpha,\beta}$
achieves $(\alpha,\beta)$-\lcc accuracy,
and satisfies $\epsilon$-differential privacy cost,
where $\epsilon=\frac{(1+f)\ln(1/\beta)}{\alpha}$.
\end{theorem}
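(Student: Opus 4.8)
The plan is to prove the two assertions of the theorem separately: that $LCMP$ satisfies $(\alpha,\beta)$-\lcc tolerance, and that it satisfies $\epsilon$-differential privacy with $\epsilon=\frac{(1+f)\ln(1/\beta)}{\alpha}$. Throughout I would write $\Delta = q_\phi(D)-c$, $\epsilon_0 = f\,\epsilon_{LCM} = \frac{f\ln(1/(2\beta))}{\alpha}$, and $\alpha_0 = \ln(1/\beta)/\epsilon_0$, and record the single fact about the injected noise that drives everything: since $\eta\sim Lap(1/\epsilon_0)$, we have $\Pr[\eta > \alpha_0] = \Pr[\eta < -\alpha_0] = \tfrac12 e^{-\alpha_0\epsilon_0} = \tfrac{\beta}{2}$.

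For accuracy, I would treat the case $q_\phi(D) < c-\alpha$ (i.e. $\Delta < -\alpha$) in detail; the case $q_\phi(D) > c+\alpha$ is symmetric. The mechanism can return True only via Line~6 or via the recursive call on Line~10. On Line~6 the event $\tilde x - \alpha_0 + \alpha \ge 0$ is exactly $\eta \ge \alpha_0 - \alpha - \Delta$, and because $\Delta < -\alpha$ the threshold $\alpha_0-\alpha-\Delta$ exceeds $\alpha_0$, so this happens with probability at most $\Pr[\eta>\alpha_0]=\beta/2$. On Line~10 the recursive call is $LCM(q_{\phi,>c},\alpha,\beta/2,D)$, which by Theorem~\ref{theorem:pc_lcm} applied with $\beta/2$ satisfies $(\alpha,\beta/2)$-\lcc tolerance, hence returns True with probability at most $\beta/2$ whenever $q_\phi(D)<c-\alpha$; and since its internal noise is independent of $\eta$, conditioning on having reached Line~10 does not change this. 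These two ways of outputting True are disjoint events, so a union bound gives $\Pr[LCMP(D)=1\mid q_\phi(D)<c-\alpha]\le \beta/2 + \beta/2 = \beta$. The symmetric argument, using Line~8 and Line~10, bounds $\Pr[LCMP(D)=0\mid q_\phi(D)>c+\alpha]$.

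For privacy, the key observation is that the only data-dependent values $LCMP$ ever computes are $\tilde x = \Delta + \eta$ on Line~3 and, when Line~10 fires, the internal noisy count of $LCM$. Since $q_\phi(\cdot)-c$ has sensitivity $1$, releasing $\tilde x$ is $\epsilon_0$-differentially private; and whenever $LCM$ is invoked it is $\frac{\ln(1/\beta)}{\alpha}$-differentially private by Theorem~\ref{theorem:pc_lcm} with tolerance $(\alpha,\beta/2)$, with fresh independent noise. The control flow is a deterministic function of $\tilde x$, so $LCMP$ is a (randomized) post-processing of the pair $(\tilde x, LCM\text{-answer})$ — equivalently, one may imagine always drawing both Laplace variables and discarding the second when it is not used. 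By sequential composition this pair is $(\epsilon_0 + \frac{\ln(1/\beta)}{\alpha})$-differentially private, and by the post-processing property so is $LCMP$. Finally $\epsilon_0 + \frac{\ln(1/\beta)}{\alpha} = \frac{f\ln(1/(2\beta)) + \ln(1/\beta)}{\alpha} \le \frac{(1+f)\ln(1/\beta)}{\alpha}$ because $\ln(1/(2\beta)) < \ln(1/\beta)$, and monotonicity of differential privacy in $\epsilon$ yields the stated bound.

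The main obstacle, or at least the step requiring the most care, is the privacy accounting under data-dependent control flow: whether the extra $\frac{\ln(1/\beta)}{\alpha}$ budget on Line~10 is spent at all depends on $\tilde x$ and hence on $D$, so one must argue this branching cannot push the cost above the worst-case sum $\epsilon_0+\frac{\ln(1/\beta)}{\alpha}$. The cleanest resolution is the ``run both Laplace draws, then post-process'' reduction above; an equivalent route is an adaptive composition argument observing that the second mechanism, whenever it fires, uses a fixed budget. A secondary point to get exactly right is aligning the $\beta/2$ split in the tolerance union bound with the $\beta/2$ argument handed to the recursive $LCM$ call, and checking that independence of the two noise sources makes the union bound valid after conditioning on the branch taken.
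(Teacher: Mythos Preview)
Your proposal is correct and follows essentially the same approach as the paper: a union bound over the two ways to err (the initial poke on Line~6/8 and the recursive $LCM$ call on Line~10) for tolerance, and sequential composition of the two Laplace releases for privacy. If anything you are slightly more careful than the paper in two places: you explicitly handle the data-dependent branching in the privacy argument via the ``draw both noises, then post-process'' reduction, and you correctly note that the recursive call $LCM(\cdot,\alpha,\beta/2,\cdot)$ costs $\ln(1/\beta)/\alpha$ rather than $\epsilon_{LCM}=\ln(1/(2\beta))/\alpha$, then absorb the slack into the theorem's stated bound.
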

\eat{
\begin{proof}
The probability to fail the accuracy requirement is
(i) when $q_{\phi}(D)<c-\alpha$,
\begin{eqnarray}
&& \Pr[LCMP^{\alpha,\beta}_{q_{\phi,>c}}(D)=\text{True} ~|~ q_{\phi}(D)<c-\alpha] \nonumber \\
&=& \Pr[q_{\phi}(D)-c +\eta -\alpha_{0} + \alpha >0 ~|~q_{\phi}(D)-c +\alpha <0]\nonumber \\
&& + \Pr[LCM^{\alpha,\beta/2}_{q_{\phi,>c}}(D)=\text{True} ~|~ q_{\phi}(D)<c-\alpha] \nonumber \\
&<& \Pr[\eta >\alpha_{0}] + \beta/2 = e^{-\alpha_{0}\epsilon_0}/2 +\beta/2 = \beta.
\end{eqnarray}
and (ii) when $q_{\phi}(D)<c+\alpha$,
\begin{eqnarray}
&& \Pr[LCMP^{\alpha,\beta}_{\phi,>c}(D)=\text{False} ~|~ q_{\phi}(D) >c+\alpha] \nonumber \\
&=& \Pr[q_{\phi}(D)-c +\eta +\alpha_{0} - \alpha<0 ~|~ q_{\phi}(D)-c -\alpha >0] \nonumber \\
&& + \Pr[LCM^{\alpha,\beta/2}_{q_{\phi,>c}}(D)=\text{False} ~|~ q_{\phi}(D)>c+\alpha] \nonumber \\
&<& \Pr[\eta_{pp} <-\alpha_{0}]  + \beta/2 = e^{-\alpha_{0}\epsilon_0}/2 + \beta/2 =\beta
\end{eqnarray}
Hence, LCMP achieves $\alpha$-\lcc with probability $1-\beta$.
The poking part spends a privacy budget of $\epsilon_{0}$
and the second part using LCM spends a budget of $\epsilon_{LCM}$ Theorem~\ref{theorem:pc_lcm}.
By composition of differential privacy (Theorem~\ref{theorem:seq}),
LCMP satisfies $(\epsilon_0+\epsilon_{LCM})$-differential privacy.
\end{proof}
}
%The proof can be found in
%\ifpaper
%the full paper \cite{fullpaper}.
%\else
%Appendix~\ref{app:lccproof}.
%\fi
Note that LCMP has a higher privacy loss than LCM in the worst case. However, if LCMP returns in either Line~6 or Line~8, then the privacy loss is much smaller, and it occurs often in our experiments. The privacy engine (Algorithm~\ref{algo:pe}) would use the worst case privacy loss to decide whether to answer a query using LCMP (in {\tt estimateLoss}), but use the actual loss (which could be much smaller) to compute the overall loss (in {\tt analyzeLoss}) if LCMP has been run.

We need to ensure this step is also differentially private
and propose two approaches next.
\begin{algorithm}[t]
\caption{LCMP($q_{\phi,>c},\alpha,\beta, D$)}
\label{algo:poking_lcm}
{\small \begin{algorithmic}[1]
%\Require \lcc  $q_{\phi,>c}$, $(\alpha,\beta)$-\lcc tolerance, table $D$, poking fraction $f$
%\Ensure  Answer $a\in \{\text{True, False}\}$
\State Initialize $q_{\phi,>c},\alpha,\beta, D,f$
\Function{\runFunc}{$q_{\phi,>c},\alpha,\beta, D$}
\State Privacy cost for LCM $\epsilon_{LCM} = \frac{\ln(1/(2\beta))}{\alpha}$ %\Comment{Theorem~\ref{theorem:pc_lcm}}
\State Prepaid privacy cost $\epsilon_{0} = \epsilon_{LCM}\cdot f$%, where $f=0.05$
\State Add noise $\tilde{x} = q_{\phi}(D)-c + \eta$,where $\eta \sim Lap(1/\epsilon_{0})$
\State Set $\alpha_{0} = \frac{\ln (1/\beta)}{\epsilon_0}$
\If {$\tilde{x} - \alpha_{0} +\alpha \geq 0$}
   \State \textbf{return} ($\phi,\epsilon_0$)
\ElsIf {$\tilde{x} + \alpha_{0} -\alpha \leq 0$}
   \State \textbf{return} ($\emptyset,\epsilon_0$)
\Else
   \State \textbf{return} ($LCM(q_{\phi,>c},\alpha,\beta/2,D),(1+f)$)
\EndIf
\EndFunction
\Function{\transFunc}{$q_{\phi,>c},\alpha,\beta, D$}
    \State \textbf{return} $\epsilon=(1+f)\ln(1/\beta)/\alpha$
\EndFunction
\end{algorithmic}}
\end{algorithm}

}

\subsection{Special Translation for \tcq}

This section provides a translation mechanism, known as\emph{ Laplace top-$k$ Mechanism} (shown in Algorithm~\ref{algo:tcq-ltm}). This mechanism is a generalized \emph{report-noisy-max} algorithm~\cite{Dwork:2014:AFD:2693052.2693053}: when $k = 1$, it adds noise drawn from $Lap(1/\epsilon)$ to all queries, and only reports the query number that has the maximum noisy count (and not the noisy count).  When $k \geq 1$,  this mechanism first perturbs $\workload\data$  with Laplace noise $\eta\sim Lap(b)^L$, where $b=k/\epsilon$. %$b = \frac{\alpha}{2\ln(L/(2\beta))}$.
These predicates are then sorted based on their corresponding noisy counts in descending order, and the first $k$ boolean formulae are outputted. The privacy cost is summarized in Theorem~\ref{theorem:pc_ltm} and the proof follows that of the report-noisy-max algorithm as shown in Appendix~\ref{app:ltm}.

\eat{
This section provides a translation mechanism, known as
{\em Laplace top-$k$ Mechanism} (shown in Algorithm~\ref{algo:tcq-ltm}).
Unlike the baseline LM (Algorithm~\ref{algo:lm}), the privacy cost of this mechanism does not depend on
$\|W\|_1$, but only the workload size $L$ and $k$.
Algorithm~\ref{algo:tcq-ltm} first perturbs
$\workload\data$ with Laplace noise $\eta\sim Lap(b)^L$,
where $b = \frac{\alpha}{2\ln(L/(2\beta))}$.
These predicates are then sorted
based on their corresponding noisy counts in descending order
and the first $k$ boolean formulae are outputted.
}

\begin{theorem}\label{theorem:pc_ltm}
Given a top-k counting query $q_{W,k}(\cdot)$, where $W=\{\phi_1,\ldots,\phi_L\}$,
for a table $D\in \tabledomain$, Laplace top-$k$ mechanism (Algorithm~\ref{algo:tcq-ltm})
denoted by $LTM^{\alpha,\beta}_{W,k}(\cdot)$,
can achieve $(\alpha,\beta)$-\tcq accuracy by executing $\runFunc(q_{W,k},\alpha,\beta,D)$
with minimal differential privacy cost of $\transFunc(q_{W,k},\alpha,\beta).\upperbound$.
\end{theorem}

\eat{
\begin{proof}(sketch)
The proof for accuracy is similar to the proof for Laplace mechanism,
which requires the noise parameter $b\leq \frac{\alpha}{2(\ln (L/(2\beta))}$.
The proof for the privacy guarantee is different from
the privacy proof for Laplace mechanism (which uses post-processing property of differential privacy).
Refer to Appendix for detailed proof.
\end{proof}
}

\change{

Note that the privacy proof of the report-noisy-max algorithm does not work for releasing both the noisy count and the query number simultaneously. Hence we consider only releasing the bin identifiers in Algorithm~\ref{algo:tcq-ltm}. Moreover, the privacy cost of Algorithm~\ref{algo:tcq-ltm} is independent of the workload $\|W\|_1$.
On the other hand, the baseline LM (Algorithm~\ref{algo:lm}) for answering \tcq queries is different from Algorithm~\ref{algo:tcq-ltm}. Algorithm~\ref{algo:lm} uses noise drawn from $Lap(\|W \|_1/\epsilon)$ to release noisy counts for all queries, and then picks the top-$k$ as a post-processing step. Algorithm~\ref{algo:lm} allows the noisy counts to be shown to the analyst without hurting the privacy cost. Hence, Algorithm~\ref{algo:lm} has a simpler privacy proof than Algorithm~\ref{algo:tcq-ltm} and a privacy loss that depends on the workload. \system supports both Algorithm~\ref{algo:lm} and Algorithm~\ref{algo:tcq-ltm} as there is no clear winner between them when $k > 1$. \system chooses the one with the least epsilon for a given accuracy bound.

}

\eat{
The key differences between this mechanism and the baseline Laplace mechanism in Algorithm~\ref{algo:lm} are:
(a) the privacy cost of this mechanism is independent of the workload $\|\workload\|_1$,
and (b) this mechanism outputs only the set of predicates, not the noisy counts, while
Laplace mechanism allows the counts to be shown to the analyst without hurting the privacy cost.

We propose two distinct algorithms (Algorithm~\ref{algo:lm} and Algorithm~\ref{algo:tcq-ltm}) for answering TCQ queries. Algorithm~\ref{algo:lm} uses noise drawn from $Lap(||W||_1/\epsilon)$ to release noisy counts for all queries, and then picks the top-$k$ as a post-processing step. This is not the same as the \emph{report-noisy-max} algorithm~\cite{Dwork:2014:AFD:2693052.2693053} when $k=1$. Hence, Algorithm~\ref{algo:lm} has a simpler privacy proof than the report-noisy-max algorithm. On the other hand, Algorithm~\ref{algo:tcq-ltm} is a generalized report-noisy-max algorithm: when $k=1$, it adds noise drawn from $Lap(1/\epsilon)$ to all queries, and only reports the query number that has the maximum noisy count (and not the noisy count). The privacy analysis of Algorithm ~\ref{algo:tcq-ltm} in Appendix~\ref{app:ltm}  follows that of the report-noisy-max algorithm. \system supports both Algorithms~\ref{algo:lm} and ~\ref{algo:tcq-ltm} as there is no clear winner between them when $k>1$. \system chooses the one with the least epsilon for a given accuracy bound.

In addition, the privacy proof of report-noisy-max algorithm does not work for releasing both the noisy count and the query number simultaneously~\cite{DBLP:journals/corr/abs-1710-09951}. Hence we consider only releasing the bin identifiers in Algorithm~\ref{algo:tcq-ltm}. 

\begin{proof} (sketch)
Set the noise parameter $b \leq \frac{\alpha}{2(\ln L + \ln(k/\beta))}$, we can show
that (i) if $c_{\phi}(D)<c_k-\alpha$, where $c_k$ is the answer to the $k$th largest linear counting query,
$\phi$ will be included in the output $a$ with small probability:
\begin{eqnarray}
&&\Pr[\phi \in a ~|~ c_{\phi}(D) < c_k-\alpha] < L e^{-\frac{\alpha}{2b}} = \beta/k,
\end{eqnarray}
and hence, the probability that the output $a$ has at least  boolean formula
with count smaller than $c_k-\alpha$ is $\beta$ for $|a|=k$;
(ii) if $c_{\phi}(D)>c_k+\alpha$, then the probability of missing it in the output is small:
\begin{eqnarray}
&&\Pr[\phi \notin a ~|~ c_{\phi}(D) > c_k+\alpha] < k e^{-\frac{\alpha}{2b}} = \beta/L,
\end{eqnarray}
and hence, the probability of missing at least one boolean formula
with count greater than $c_k +\alpha$ is at most $\beta$.
We can also show this mechanism satisfies $k/b$-differential privacy
and hence consider the largest possible value for $b$ to minimize the privacy cost.
\end{proof}

Alternatively, the data analyst can pose a workload counting query
$q_{W}$ with $(\alpha,\beta)$-\wcq tolerance,
and then answer the top-$k$ counting query $q_{W,k}$ locally
However, this approach does not achieve $(\alpha,\beta)$-\tcq tolerance, i.e.,
%due to smaller expected noise per query (smaller $b$). More specifically,
\begin{lemma}
Using the outputs of $LM^{\alpha,\beta}_{q_W}$  to answer
$q_{W,k}$ achieves $(\alpha',\beta)$-\tcq tolerance,
where $\alpha' = (1+\frac{2\ln(Lk)}{\ln(1/(1-(1-\beta)^{1/L}))})\alpha>\alpha$.
\end{lemma}
}

\begin{algorithm}[t]
\caption{\tcq-LTM($q_{W,k}, \alpha,\beta, D)$)}\label{algo:tcq-ltm}
{\small \begin{algorithmic}[1]
\State Initialize $\workload \gets \transform(W), \data \gets \transform_W(D), \alpha,\beta$
\Function{\runFunc}{$q_{W,k},\alpha,\beta,D$}
    \State $\epsilon \gets \transFunc(q_{W,k},\alpha,\beta).\upperbound$
    \State $(\tilde{x}_1,\ldots,\tilde{x}_L) = \workload\data + Lap(b)^L$, where $b=k/\epsilon$
    \State $(i_1,\ldots,i_k) = {\tt argmax}^k_{i=1,\ldots,L} \tilde{x}_i$
    \State \textbf{return} $(\{\phi_{i_1},\ldots,\phi_{i_k}\},\epsilon)$
\EndFunction
\Function{\transFunc}{$q_{W,k},\alpha,\beta$}
    \State \textbf{return} $\upperbound = \frac{2k\ln (L/(2\beta))}{\alpha}, \lowerbound = \upperbound$
\EndFunction
\end{algorithmic}}
\end{algorithm}

\eat{
\subsection{Inference Techniques}
\todo{Using private multiplicative weights or SVT}
Laplace noise is used as an example for translation
which achieves $\epsilon$-differential privacy per query in this section.
Other type of noises are also possible, such as Guassian noise
which provides $(\epsilon,\delta)$-differential privacy per query,
and can be adapted into this system.
}

%!TEX root=./main.tex
\section{Privacy Analysis}
\label{sec:analyzer}
The privacy analyzer ensures that every sequence of queries answered by \system results in a $B$-differentially private execution, where $B$ is the data owner specified privacy budget. The formal proof of privacy primarily follows from the well known composition theorems (described for completeness in Appendix~\ref{sec:seq}). According to sequential composition (Theorem~\ref{theorem:seq}), the privacy loss of a set of differentially private mechanisms (that use independent random coins) is the sum of the privacy losses of each of these mechanisms. Moreover, postprocessing the outputs of a differentially private algorithm does not degrade privacy (Theorem~\ref{theorem:post}).

The main tricky (and novel) part of the privacy proof (described in Section~\ref{sec:proof}) arises due to the fact that (1) the $\epsilon$ parameter for a mechanism is chosen based on the analyst's query and accuracy requirement, which in turn are adaptively chosen by the analyst based on previous queries and answers, and (2) some mechanisms may have an actual privacy loss that is dependent on the data.  \system accounts for privacy based on the actual privacy loss (and not the worst case privacy loss (see Line~\ref{analyzeloss}, Algorithm~\ref{algo:pe}).

\renewcommand{\trans}{\ensuremath{\mathbb{T}}}
\newcommand{\epsmax}{\ensuremath{\epsilon^{\max}}}
\subsection{Overall Privacy Guarantee}\label{sec:proof}
We show the following guarantee: any sequence of interactions between the data analyst and \system satisfies $B$-differential privacy, where $B$ is the privacy budget specified by the data owner. In order to state this guarantee formally, we first need the notion of a \emph{transcript of interaction} between \system and the data analyst.

We define the transcript of interaction $\trans$ as an alternating sequence of queries (with accuracy requirements) posed to \system and answers returned by \system. $\trans$ encodes the analyst's view of the private database. More formally,
\squishlist
	\item The transcript $\trans_i$ after $i$ interactions is a sequence $[(q_1, \alpha_1, \beta_1), (\omega_1, \epsilon_1), \ldots, (q_i, \alpha_i, \beta_i), (\omega_i, \epsilon_i)]$, where $(q_i, \alpha_i, \beta_i)$ are queries with accuracy requirements, and $\omega_i$ is the answer returned by \system and $\epsilon_i$ the actual privacy loss.% associated with that answer.
	\item Given $\trans_{i-1}$, analyst chooses the next query $(q_{i+1}, \alpha_{i+1}, \beta_{i+1})$ adaptively. We model this using a (possibly randomized) algorithm $\mathbb{C}$ that maps a transcript $\trans_{i-1}$ to $(q_i, \alpha_i, \beta_i)$; i.e., $\mathbb{C}(\trans_{i-1}) = (q_i, \alpha_i, \beta_i)$. Note that the analyst's algorithm $\mathbb{C}$ does not access the private database $D$.
    \item Given $(q_i, \alpha_i, \beta_i)$, \system select a subset of mechanisms $\mathcal{M}^*$ such that $\forall M \in \mathcal{M}^*$,  $M.\transFunc(q_i,\alpha_i,\beta_i).\upperbound \leq B-B_i$. Furthermore, if $\mathcal{M}^*$ is not empty, \system chooses one mechanism $M_i \in \mathcal{M}^*$ deterministically (either based on $\lowerbound$ or $\upperbound$) to run. The selection of $M_i$ is deterministic and independent of $D$.% the private database.  
    \item  If \system find no mechanism to run $(\mathcal{M}^* = \emptyset)$, then the query is \emph{declined} by \system. In this case, $\omega_i= \bot$ and $\epsilon_i = 0$. 
    \item If the \system chosen algorithm $M_i$ is LM, WCQ-SM, ICQ-SM or TCQ-LTM, $\epsilon_i = \upperbound_i$, where $\epsilon_i$ is the upperbound on the privacy loss returned by $M_i.\transFunc$. For ICQ-MPM, the actual privacy loss can be smaller; i.e., $\epsilon_i \leq \upperbound_i$.
	\item Let $Pr[\trans_i| D]$ denote the probability that the transcript of interaction is $\trans_i$ given input database $D$. The probability is over the randomness in the analyst's choices $\mathbb{C}$ and the randomness in the mechanisms $M_1, \ldots, M_i$ executed by \system.
\squishend

Not all transcripts of interactions are realizable under \system. Given a privacy budget $B$, the set of valid transcripts is defined as:
\begin{definition}[Valid Transcripts]
	A transcript of interaction $\trans_i$ is a \emph{valid} \system transcript generated by Algorithm~\ref{algo:pe} if given a privacy budget $B$ the following conditions hold:
	\begin{itemize}
		\item $B_{i-1} = \sum_{j = 1}^{i-1} \epsilon_j \leq B$, and
		\item Either $\omega_i = \bot$, or $B_{i-1} + \upperbound_i \leq B$.
	\end{itemize}
\end{definition}

We are now ready to state the privacy guarantee:
\begin{theorem}[\system Privacy Guarantee]\label{thm:privacy}
	Given a privacy budget $B$, any valid \system transcript $\trans_i$, and any pair of databases $D$, $D'$ that differ in one row (i.e., $|D \setminus D' \cup D' \setminus D| = 1$), we have:
	\begin{enumerate}
		\item $B_i = \sum_{j = 1}^{i} \epsilon_i \leq B$, and
		\item $Pr[\trans_i | D] \leq e^{B_i} Pr[\trans_i | D']$.
	\end{enumerate}
\end{theorem}
%Please refer to Appendix~\ref{app:apex:proof} for the proof.
\begin{proof}
(1) directly follows from the definition of a valid transcript and these are the only kinds of transcripts an analyst sees when interacting with \system.

(2) can be shown as follows using induction.

\noindent{\underline{Base Case:}} When the transcript is empty, $Pr[\emptyset | D] \leq e^0 Pr[\emptyset | D']$.

\noindent{\underline{Induction step:}}
Now suppose for all $\trans_{i-1}$ of that encode valid \system transcripts of length $i-1$, $Pr[\trans_{i-1}| D] \leq e^{B_{i-1}} Pr[\trans_{i-1} | D']$. Let $\trans_i = \trans_{i-1} || [(q_i, \alpha_i, \beta_i), (\omega_i, \epsilon_i)]$ be a valid \system transcript of length $i$. Then:
\begin{eqnarray*}
\lefteqn{Pr[\trans_i | D] = Pr[\trans_{i-1} | D] Pr[[(q_i, \alpha_i, \beta_i), (\omega_i, \epsilon_i)] | D, \trans_{i-1}]}\\
& = & Pr[\trans_{i-1} | D] Pr[\mathbb{C}(\trans_{i-1}) = (q_i, \alpha_i, \beta_i)] Pr[M_i(D) = (\omega_i, \epsilon_i)]
\end{eqnarray*}
Note that the analyst's choice of query $q_i$ and accuracy requirement depends only on the transcript $\trans_{i-1}$ and not the sensitive database, and thus incurs no privacy loss (from Theorem~\ref{theorem:post}). Thus, it is enough to show that
\[Pr[M_i(D) = (\omega_i, \epsilon_i)] \leq e^{\epsilon_i} Pr[M_i(D') = (\omega_i, \epsilon_i)]\]
\noindent{\underline{Case 1:}}
When $\omega_i\neq \bot$ and $M_i$ is LM, WCQ-SM, ICQ-SM, or TCQ-LTM, the mechanism satisfies $\upperbound_i$-DP and $\epsilon_i = \upperbound_i$. Therefore, $Pr[M_i(D) = (\omega_i, \epsilon_i)] \leq e^{\epsilon_i} Pr[M_i(D') = (\omega_i, \epsilon_i)]$.

\noindent{\underline{Case 2:}}
When $\omega_i \neq \bot$ and $M_i$ is ICQ-MPM, the mechanism satisfies $\upperbound_i$-DP across all outputs. However, when either mechanism outputs $(\omega_i, \epsilon_i)$, for $\epsilon_i < \upperbound_i$, we can show that $Pr[M_i(D) = (\omega_i, \epsilon_i)] \leq e^{\epsilon_i} Pr[M_i(D') = (\omega_i, \epsilon_i)]$. In the case of ICQ-MPM, if the algorithm returns in Line~11 after $i$ iterations of the loop, the noisy answer is generated by a DP algorithm with privacy loss $\epsilon_i = \frac{j}{m}\upperbound_i$.

\noindent{\underline{Case 3:}}
Finally, when $\omega_i = \bot$  (i.e., the query is declined), the decision to decline depends on $\upperbound_i$ of all mechanism applicable to the query (which is independent of the data) rather than $\epsilon_i$ (which could depend on the data in the case of ICQ-MPM). Therefore, $Pr[M_i(D) = (\omega_i, \epsilon_i)]  = Pr[M_i(D') = (\omega_i, \epsilon_i)]$ for all $D, D'$. The proof would fail if the decision to deny a query depends on $\epsilon_i$.
\end{proof}

%\input{rdp}

%!TEX root=./main.tex
\section{Query Benchmark Evaluation}\label{sec:evaluation}

\begin{table*}[h]
\center
{\small
	\begin{tabular}{|c|c|l|l|} \hline
   	{\bf Name} & {\bf $D$} & {\bf Query workload $W$} & {\bf Query output}  \\ \hline
	QW1 & \census & "capital gain"$\in[0,50)$, "capital gain"$\in[50,100)$, ...,"capital gain"$\in[4950,5000)$  &   bin counts \\ \hline
	QW2 & \census & "capital gain"$\in[0,50)$, "capital gain"$\in[0,100)$, ..., "capital gain"$\in[0,5000)$ & bin counts  \\ \hline
	QW3 & \location & "trip distance"$\in [0, 0.1)$, "capital gain"$\in[0,50)$, ...,"capital gain"$\in[0,50)$ & bin counts\\ \hline
	QW4 & \location & ($0\le$"total amount"$<1 \wedge$ "passenger"$=1$),.., ($9\le$"total amount"$<10 \wedge$ "passenger"$=10$) & bin counts\\ \hline
	QI1 & \census & "capital gain"$<50$, "capital gain"$<100$,..., "capital gain"$<5000$ & bin ids having counts $>0.1|D|$\\ \hline
	QI2 & \census & ($0\le$"capital gain"$<100$, "sex"='M'),...($4500\le$"capital gain"$<5000$, "sex"='F') & bin ids having counts $>0.1|D|$ \\ \hline
	QI3 & \location &"fare amount"$\in [0,0,1)$, "fare amount"$\in [0.1,2)$,..., "fare amount"$\in [9.9,10)$ & bin ids having counts $>0.1|D|$ \\ \hline
	QI4 & \location & "total amount"$\in [0,0,1)$, "total amount"$\in [0.1,2)$,..., "total amount"$\in [9.9,10)$ & bin ids having counts $>0.1|D|$\\ \hline
	QT1 & \census & "age"$=0$,"age"$=1$,...,"age"$=99$  & top $10$ bins with highest counts\\ \hline
	QT2 & \census &  100 predicates on different attributes, e.g. "age"$=1$, "workclass"="private",...&  top $10$ bins with highest counts \\ \hline
	QT3 & \location & ("PUID"=1 $\wedge$"DOID"=1), ("PUID"=1 $\wedge$ "DOID"=2),..,("PUID"=10 $\wedge$ "DOID"=10) & top $10$ bins with highest counts \\ \hline
	QT4 & \location & 100 predicates on different attributes, e.g. "pickup date"$=1$, "passenger count"$=1$,... & top $10$ bins with highest counts\\ \hline
   \end{tabular}
   \vspace{.75em}
     \caption{Query benchmarks includes 3 types of exploration queries on 2 datasets.}
  \label{tab:queries}
}
\end{table*}

In this section, we evaluate \system on real datasets using a set of benchmark queries. We show:
\squishlist
\item \system is able to effectively translate queries associated with accuracy bounds into differentially private mechanisms. These mechanisms accurately answer a wide variety of interesting data exploration queries with moderate to low privacy loss.
\item The set of query benchmarks show that no single mechanism can dominate the rest and \system picks the mechanism with the least privacy loss for all the queries.
\squishend

\subsection{Setup}

\stitle{Datasets.}
Our experiments use two real world datasets. The first data set \census was extracted from 1994 US Census release~\cite{Dua:2017}. This dataset includes $15$ attributes ($6$ continuous and $9$ categorical), such as "capital gain", "country", and a binary "label" indicating whether an individual earns more than $5000$ or not, for a total of $32,561$ individuals. The second dataset, refereed as \location, includes $9,710,124$ NYC's yellow taxi trip records~\cite{data_location}. Each record consists of $17$ attributes, such as categorical attributes (e.g., "pick-up-location"), and continuous attributes (e.g., "trip distance").

\stitle{Query Benchmarks.}
We design $12$ meaningful exploration queries on \census and \location datasets, summarized in Table~\ref{tab:queries}. These $12$ queries cover the three types of exploration queries defined in Section~\ref{sec:explorequeries}, QW1-4, QI1-4, and QT1-4 corresponds to \wcq, \icq, and \tcq respectively. Queries with number $1$ and $2$ are for \census, and with number $3$ and $4$ are for \location. The predicate workload $W$ cover 1D histogram, 1D prefix, 2D histogram and count over multiple dimensions. We set $\beta=0.0005$ and vary $\alpha \in \{0.02, 0.04, 0.08, 0.16, 0.32, 0.64\}$.

\stitle{Metrics.}
For each query ($q,\alpha,\beta$), \system outputs ($\epsilon, \omega$) after running a differentially private mechanism, where $\epsilon$ is the actual privacy loss and $\omega$ is the noisy answer. The empirical error of a WCQ $q_W(D)$ is measured as $\|\omega - q_W(D)\|_{\infty}/|D|$, the scaled maximum error of the counts. The empirical errors of  ICQ $q_{W,>c}(D)$ and TCQ $q_{W,k}(D)$ are measured as $\|\alpha\|_{\infty}/|D|$, the scaled maximum distance of mislabeled predicates.

\stitle{Implementation Details.}
\system is implemented using python-$3.4$, and is run on a machine with $64$ cores and $256$ GB memory. We run  \system with optimistic mode. For strategy mechanism, we choose $H_2$ strategy (a hierarchical set of counts~\cite{Li:2010:OLC:1807085.1807104, Li:2015:MMO:2846574.2846647, hdmm18}) for all queries.

\subsection{\system End-to-End Study}

\begin{figure*}[h]
	\includegraphics[width=\textwidth]{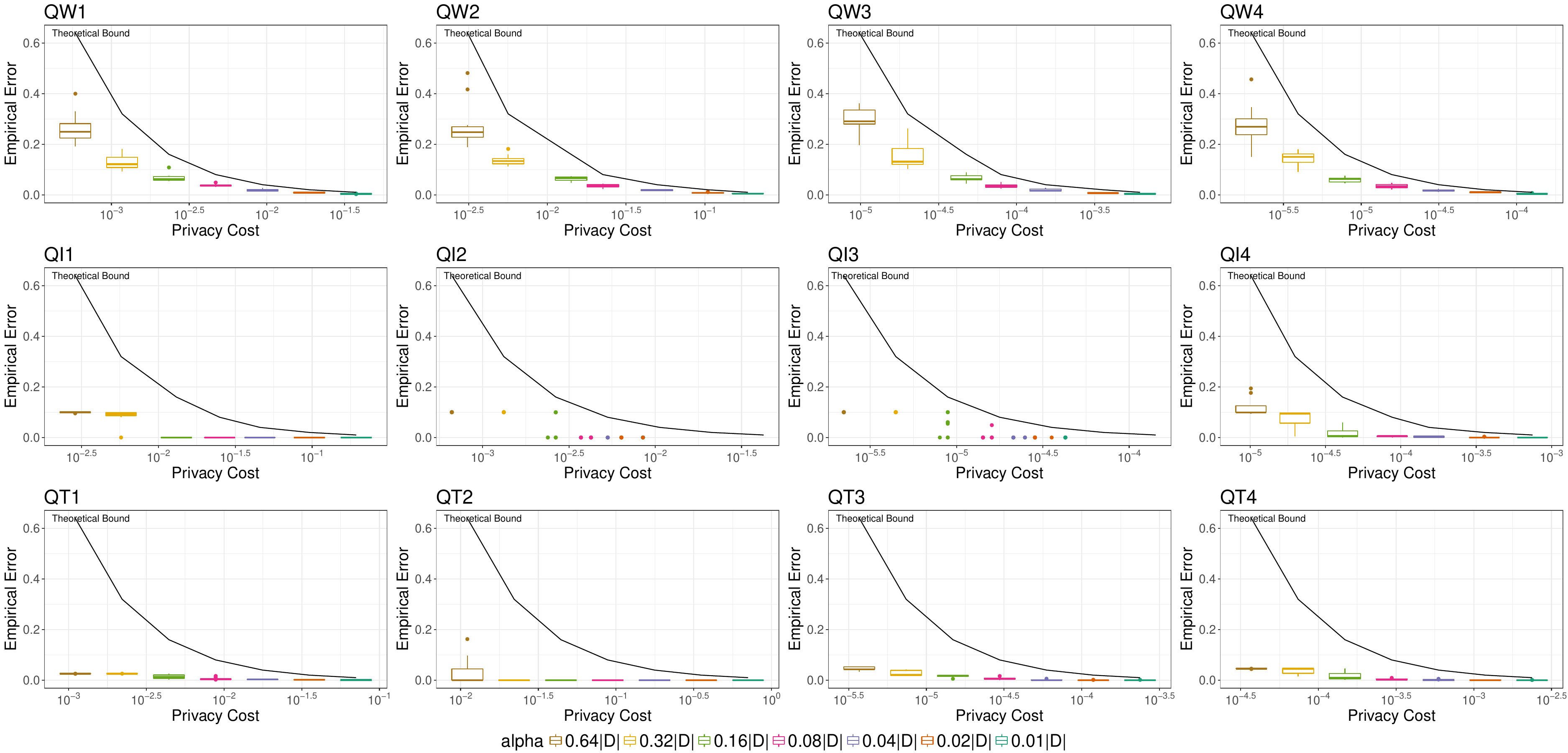}%
	%\caption{Privacy cost and empirical accuracy using optimal mechanism chosen by \system (Optimistic Mode) on the $12$ queries at default paramete setting with $\alpha \in \{0.01, 0.02, 0.04, 0.08, 0.16, 0.32, 0.64\}|D|$ and $\beta = 5 \times 10^{-4}$. On \census data, all queries can be answered with empirical error less than $0.1$ with privacy budget less than $0.1$; on \location data, all queries can be answered with empirical error less than $0.1$ with privacy budget less than $0.001$. When accuracy requirement relaxes (i.e., $\alpha$ increases), the privacy cost decreaess and the empirical accuracy decreases for all queries.}\label{fig:cost_accuracy_all}
\caption{Privacy cost and empirical accuracy using optimal mechanism chosen by \system (Optimistic Mode) on the $12$ queries at default parameter setting with $\alpha \in \{0.01, 0.02, 0.04, 0.08, 0.16, 0.32, 0.64\}|D|$ and $\beta = 5 \times 10^{-4}$. On \census data, all queries can be answered with empirical error $<0.1$ with privacy budget $<0.1$; on \location data, all queries can be answered with empirical error $<0.1$ with privacy budget  $<0.001$. When accuracy requirement relaxes (i.e., $\alpha$ increases), the privacy cost decreaess and the empirical accuracy decreases for all queries.}\label{fig:cost_accuracy_all}
\end{figure*}

We run \system for the 12 queries shown in Table~\ref{tab:queries} with different accuracy requirements from $0.01|D|$ to $0.64|D|$ and $\beta=0.0005$.
We show in Figure~\ref{fig:cost_accuracy_all} that a line connects points $(\alpha,\upperbound)$ where $\upperbound$ is the upper bound on the privacy loss for the mechanism chosen by \system for the given $\alpha$. For all the queries except QI2 and QI3, the mechanism chosen for each $\alpha$ incurs an actual privacy cost at $\epsilon = \upperbound$ and the only variation in the empirical error, so the corresponding $(\hat{\alpha}/|D|, \epsilon)$ of 10 runs is shown as boxplots. For QI2 and QI3, both the empirical error and the actual privacy cost $(\hat{\alpha}/|D|, \epsilon)$ vary across runs and hence are plotted as points in Figure~\ref{fig:cost_accuracy_all}.

The empirical error $\alpha$ is always bounded by the theoretical $\alpha$ for all the queries. The gap between the theoretical line and the actual boxplots/points are: (1) the analysis of the error is not tight due to the use of union bound; (2) for mechanism with data dependent translation (QI2 and QI3), the actual privacy cost is far from the upperbound $\upperbound$ resulting a left shift of the points from the theoretical line. The privacy cost for QW1 and QW2 for \census dataset  is in the range of $(10^{-4},10^{-1})$ for all $\alpha$ values. This privacy cost is 2-3 orders larger than the privacy cost for QW3 and QW4 on the \location dataset, because given the same ratio $\alpha/|D|$, the queries on {\tt NYTaxi} has a larger $\alpha$ than {\tt Adult} because of data size.

\ifpaper
\conf{
% put the F1 part in appendix
Additional experiments using F1 score as the quality measurement are included in Appendix~\ref{sec:f1}.
}\else
\full{
To further understand the relation between our accuracy requirement with commonly used error metric, we use F1 score to measure the similarity between the correct answer set and noisy answer set outputted from the mechanism. Figure~\ref{fig:f1-qi4-qt1} presents the F1 score of two queries: QI4 of an \icq, and QT1 of a \tcq. In QT1, when $\alpha$ from $0.02|D|$ to $0.64|D|$, the median F1 score decreases from $0.9$ to $0.15$, which has a steeper gradient than the changes in Figure~\ref{fig:cost_accuracy_all} and a closer trend with the theoretical $\alpha$. In QI4, the F1 score is even more consistent with $\alpha$ and the empirical error shown in Figure~\ref{fig:cost_accuracy_all}. This shows  that our $(\alpha,\beta)$ accuracy  requirement is still a good indicator in data exploration process.

\begin{figure}[h]
	\centering
        \includegraphics[width=0.45\textwidth]{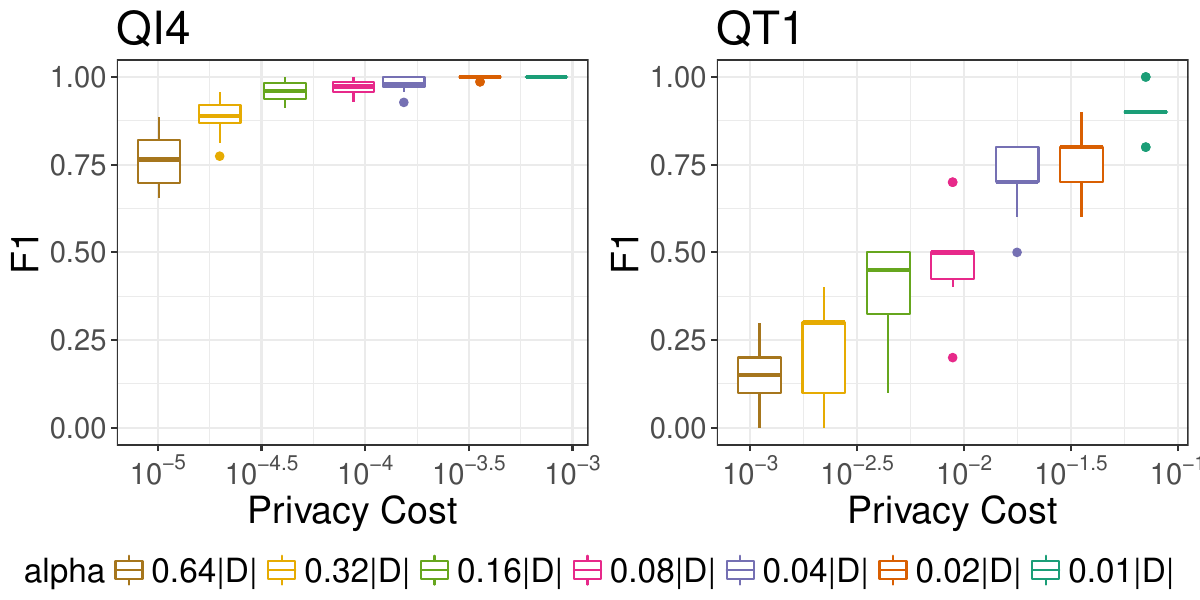}%
	\caption{Privacy cost and empirical accuracy ( F1 score) using optimal mechanism chosen by APEx (Optimistic Mode) on QI4 and QT1.}\label{fig:f1-qi4-qt1}
\end{figure}

}\fi

\eat{
All queries were answered by \system with empirical errors below the theoretical accuracy requirement $\alpha$ (indicated by the black curve in Figure~\ref{fig:cost_accuracy_all}).
In this section, we study whether \system answers exploration queries with high accuracy and reasonably small privacy cost. Consider the 12 query instances from Table~\ref{tab:queries}, with different accuracy requirements $\{0.02,0.04,0.08,0.16, 0.32, 0.64\}|D|$. For each query $(q,\alpha,\beta)$, \system optimistically chooses a differentially private mechanism that achieves this accuracy requirement with the least estimate privacy cost. This step is independent of the data, so \system picks the same mechanism given the same query. However, there are two modes for \system: pessimistic mode and optimistic mode. This experiment shows the choices of \system for all queries under optimistic mode, which allows data-dependent translation to be chosen as well. The chosen mechanism $M$ runs $q$ and returns a noisy answer with an actual privacy cost $\epsilon$ which lie within $(\lowerbound,\upperbound)$ of $M.\transFunc(q,\alpha,\beta)$. We report the empirical error of these noisy answers and the actual privacy cost of $10$ runs of \system for each $(q,\alpha,\beta)$ in Figure~\ref{fig:cost_accuracy_all}. In Figure~\ref{fig:cost_accuracy_all}, we present the median and quartiles using a box plot.

First, all queries were answered by \system with empirical errors below the theoretical accuracy requirement $\alpha$ (indicated by the black curve in Figure~\ref{fig:cost_accuracy_all}). For \wcq, the empirical error (the maximum error of noisy counts) is directly affected by the Laplace noise, and hence directly correlates with the theoretical accuracy $\alpha$. When $\alpha$ decreases from $0.64|D|$ to $0.1|D|$, the empirical error drops accordingly. \system chooses the baseline LM for QW1 and QW3 and chooses \icq-SM for QW2 and QW4. These mechanisms also have a data-independent privacy cost ($\upperbound=\lowerbound$) and thus the actual privacy cost of the 10 runs at each $\alpha$ are all the same. The privacy cost for QW1 and QW2 for \census dataset  is in the range of $(10^{-4},10^{-1})$ for all $alpha$ values. This is 2-3 orders larger than the privacy cost for QW3 and QW4 for \location, because given the same ratio $\alpha/|D|$, the queries on {\tt NYTaxi} has a larger $\alpha$ than {\tt Adult} dataset because of data size.

For \icq, though the empirical errors for all accuracy requirements are still bounded by $\alpha/|D|$, they are much smaller in value and their changes with respect to $\alpha$ is less smooth. This is because the empirical error for \icq measures the  maximal distance from the count of a mislabeled predicate to  the counting threshold $c$, which is related to data, but indirectly to the noise. For example, for QI1, the empirical errors at $\alpha=0.64|D|$ and $\alpha=0.32|D|$ are close because the predicates with similar counts are incorrectly mislabeled. When $\alpha$ decreases to $0.16|D|$, the previously mislabeled predicates are correctly labeled because the noise is much smaller than the distance between their counts and the threshold. Hence, we see a drop in empirical cost at $\alpha=0.16|D|$, and the cost stays when $alpha$ continually decreases. QI4 has a smoother gradient for the empirical error than the rest of \icq queries because the \location data is uniformly distributed over the attribute "total amount" than the other queries.

Moreover, QI1 and QI4 are answered with \icq-SM and hence their privacy cost is also fixed, but QI2 and QI3 were run by \icq-MPM which has an actual privacy cost dependent on data and thus the 10 runs of privacy cost are different and are shown as scatter plot rather than box plot in Figure~\ref{fig:cost_accuracy_all}. We will discuss more about this mechanism in Section~\ref{sec:eval_opt}.

For \tcq, the observations are similar to \icq in Figure~\ref{fig:cost_accuracy_all}, as the empirical error depends on the distance between mislabelled predicates and the $k^{th}$ largest counting values.
}

\begin{table*}[h]
  \center
{\small
\begin{tabular}{|l|l|l|l|l|l|l|l|l|}
\hline
{\bf Mechanism} & \multicolumn{8}{c|}{\bf{Query-$\alpha$}}                                                                                                                      \\ \hline
          & QW1-0.02|D|      & QW1-0.08|D|      & QW2-0.02|D|      & QW2-0.08|D|      & QW3-0.02|D|      & QW3-0.08|D|      & QW4-0.02|D|      & QW4-0.08|D|      \\ \hline
WCQ-LM        & \textbf{0.01874} & \textbf{0.00469} & 1.87430          & 0.46858          & 0.00629          & 0.00157          & \textbf{0.00006} & \textbf{0.00002} \\ \hline
WCQ-SM    & 0.09880          & 0.02383          & \textbf{0.10451} & \textbf{0.02251} & \textbf{0.00036} & \textbf{0.00008} & 0.00033          & 0.00009          \\ \hline
          & QI1-0.02|D|      & QI1-0.08|D|      & QI2-0.02|D|      & QI2-0.08|D|      & QI3-0.02|D|      & QI3-0.08|D|      & QI4-0.02|D|      & QI4-0.08|D|      \\ \hline
ICQ-LM        & 1.76786          & 0.44197          & 0.01768          & 0.00442          & 0.00006          & 0.000015          & 0.00593          & 0.00148          \\ \hline
ICQ-SM    & \textbf{0.10271} & \textbf{0.02682} & 0.10506          & 0.02517          & 0.00033          & 0.00008          & \textbf{0.00034} & \textbf{0.00008} \\ \hline
ICQ-MPM   & 0.63644          & 0.31822          & \textbf{0.00636} & \textbf{0.00371} & \textbf{0.00003} & \textbf{0.000014} & 0.00640          & 0.00178          \\ \hline
          & QT1-0.02|D|      & QT1-0.08|D|      & QT2-0.02|D|      & QT2-0.08|D|      & QT3-0.02|D|      & QT3-0.08|D|      & QT4-0.02|D|      & QT4-0.08|D|      \\ \hline
TCQ-LM        & \textbf{0.03536} & \textbf{0.00884} & 266.24590        & 66.56148         & \textbf{0.00012} & \textbf{0.00003} & 1.40857          & 0.35214          \\ \hline
TCQ-LTM   & 0.35358          & 0.08840          & \textbf{0.35358} & \textbf{0.08840} & 0.00119          & 0.00030          & \textbf{0.00119} & \textbf{0.00030} \\ \hline
\end{tabular}
}
\vspace{.75em}
 \caption{Privacy cost using all applicable mechanisms on the $12$ queries (Table~\ref{tab:queries}) at $\alpha=\{0.02, 0.08\}|D|$ and $\beta=5\times10^{-4}$. Median of $10$ runs is reported for data-dependent mechanisms. It shows that (a) no single mechanism can always win or lose, (b) privacy cost of different mechanisms answering the same query, and privacy cost of same mechanism on different queries can be significantly different. Therefore, it is critical to use \system for choosing optimal mechanisms.}\label{tab:cost_mechanism}
\end{table*}

%Figure~\ref{fig:f1-qi4-qt1} shows that as the accuracy parameter $\alpha$ relaxes, F1 score decreases. However, F1 decrease rate can be different from that of empirical accuracy. For example, as explained in QT1, similar counts are incorrectly labeled, hence F1 can drop. However, its empirical error is determined by the relative distance to $k^{th}$ largest count, even a small distance change can considerably affect F1. For example in QT1, when $\alpha$ from $0.02|D|$ to $0.64|D|$, empirical error increases from $0.001$ to $0.025$, but F1 score decreases from $0.9$ to $0.15$, which is much more significant. On the contrary in QI4, F1 is much resistant to empirical error changes.

%A small empirical error does not always mean great F1 score, as shown in QT1. [Explain why so, QT has many small counts, and hence close to c_k and get mislabeled. Hence the empirical distance of these mislabeled predicates  from c_k is still bounded by c_k. Hence, for such skewed dataset, a tighter accuracy requirement $\alpha<<0.01|D|$ is needed.  However, for less skewed counts like QI4, the trend for F1 score is more consistent with the empirical error.

\subsection{Optimal Mechanism Study}\label{sec:eval_opt}
We run all the  applicable mechanisms for the 12 queries from Table~\ref{tab:queries} at $\alpha \in \{0.02|D|, 0.08|D|\}$
and show the median of the actual privacy costs in Table~\ref{tab:cost_mechanism}.
The privacy cost with the least value is in bold for the given query and accuracy.
Indeed, \system picks the mechanisms with these least privacy cost for all the 12 queries.
\system can save more than 90\% of the privacy cost of the baseline translation (LM), such as QW2, QW3, QT2,QT4, and all the \icq.
In particular, the baseline mechanism LM is highly dependent on the sensitivity of the query.
For example, the workload in QW2 (a cumulative histogram query) and QT2 (counts on many attributes) has a high sensitivity,
so the cost of \wcq-LM is $20$ times larger than \wcq-SM for QW2 at $\alpha=0.08|D|$,
and the cost of \tcq-LM is $760X$ more expensive than \tcq-LTM for QT2 at $\alpha=0.02|D$,
where \wcq-SM and \tcq-LTM are the optimal mechanisms chosen by \system for QW2 and QT2 respectively.

%Under pessimistic mode, \system will not choose ICQ-MPM for QI2 at $\alpha=0.02|D|$, as the worst privacy cost for MPM is $0.02121$ which is bigger than the worst privacy cost forICQ-LM ($0.01768$). Under optimistic model, \system chooses ICQ-MPM based on its lower privacy bound $0.00212$, which is smaller than all the other two mechanisms. The optimal mechanism chosen by \system can change query parameters.

\eat{
In this section, rather than running the optimal mechanisms chosen by \system, we analyze the privacy cost of all the mechanisms applicable to a given query.
First, we consider $12$ queries instantiated from Table~\ref{tab:queries} and two accuracy requirements, $\alpha \in \{0.02|D|, 0.08|D|\}$. For each $(q,$alpha$,$beta$)$, we run each mechanism $10$ runs and report the median of their actual privacy cost.
This cost is shown in Table~\ref{tab:cost_mechanism}, grouped by query type \wcq, \icq, and \tcq.
We see that  different mechanisms on the same query cost substantially different.
Some mechanisms are highly dependent on the sensitivity of the query, such as the baseline LM.
For example, the workload in QW2 (a cumulative histogram query) and QT2 (counts on many attributes) has a high sensitivity,
so the cost of \wcq-LM is $20$ times larger than \wcq-SM for QW2 at $\alpha=0.08|D|$,
and the cost of \tcq-LM is $760X$ more expensive than \tcq-LTM at $\alpha=0.02|D$.
Also,  the same mechanism has distinct privacy cost across different queries.
For example, the privacy cost of \icq-MPM on QI1 is $100X$ of that on QI2, even thought QI1 and QI2 have the same workload size.
This difference for \icq-MPM is mainly due to data and the threshold for a query,
which will elaborate later in sensitivity study for query threshold.

\system is able to identify the optimal mechanisms under different modes. Under pessimistic mode, \system chooses the mechanisms with the smallest privacy cost upper bound $\upperbound$. For QI2 at $\alpha=0.02|D|$, the worst privacy cost for MPM is $0.02121$ which is bigger than the worst privacy cost for LM ($0.01768$). Hence under pessisitic mode, MPM will not be chosen. However, its actual privacy cost can be much smaller than its worst privacy cost. Hence, under optimistic model, \system will chooses MPM based on its lower privacy bound $0.00212$, which is smaller than all the other two mechanisms. 
}

The optimal mechanism chosen by \system can change query parameters: workload size $L$, $k$ in \tcq, and threshold $c$ in \icq.
\ifpaper
\conf{
As parameter $L$ and $k$ play a direct role in the privacy cost shown in Section~\ref{sec:translation}, we will focus on the sensitivity of the privacy on $c$ and leave the rest in the full version~\cite{fullpaper}.

\begin{figure}[b]
	\centering
		\includegraphics[width=0.8\columnwidth]{../../figure/experiments/cost_icq_c}
		\caption{Vary ICQ $c$}
		\label{fig:cost_icq_c}
\end{figure}

\ifpaper
\else
\stitle{Vary counting threshold $c$.}
\fi
Figure~\ref{fig:cost_icq_c} shows the actual privacy cost of mechanisms for QI2 with different thresholds.
All the mechanisms for \icq except \icq-MPM, have a fixed privacy cost which is dependent of the data and the query (including $c$).
However, we observe an interesting trend for the actual privacy cost used by \icq-MPM as $c$ increases.
The smallest privacy cost which takes place after $c=0.8$ is $1/10$ of the upper bound of \icq-MPM.
The privacy cost of \icq-MPM depends on the number of poking times before returning a output, which is related to the distance between the threshold and the true counts associated to the predicates. If the predicates are far from the threshold, the fewer number of poking is required and hence a smaller privacy budget is spent. On the other hand, if the true count is very close to the threshold, then it requires a small noise and hence all the budget to decide the label of this predicate with confidence. When $c=0.01|D|$, $98\%$ predicates are within the range $[c-\alpha, c+\alpha]$, and hence to confidently decide the label for all these predicates require more poking and hence a larger privacy cost.
Consider the many predicates having counts close to $0.01|D|$, the cost of \icq-MPM is high.  As $c$ increases to $0.10|D|$, all predicates have sufficient different counts as $c$, then $1$ or $2$ times of poking are sufficient. When $c$ continue increases to $0.32|D|$, there is a single predicate that with true count (which is $0.3117|D|$) closer to $c$ , it again requires more pokings to make a confident decision. A similar behavior is seen when $c$ is around $ 0.6050|D|$.

Moreover, in the bad cases where $c$ is close to true counts, the actual privacy cost of \icq-MPM might be more expensive than the baseline \icq-LM.
% depending the query parameters and maximum allowed poking times. 
For example, when $c=0.01|D|$, \icq-LM is better. This is a case where \system under optimistic mode fails to choose \icq-LM as the optimal mechanism.

}\else
\full{
Figure~\ref{fig:cost_query_spec} shows the effects of these parameters.
\begin{figure*}[t]
	\centering
	\begin{subfigure}[h]{0.32\textwidth}%
		\includegraphics[width=\textwidth]{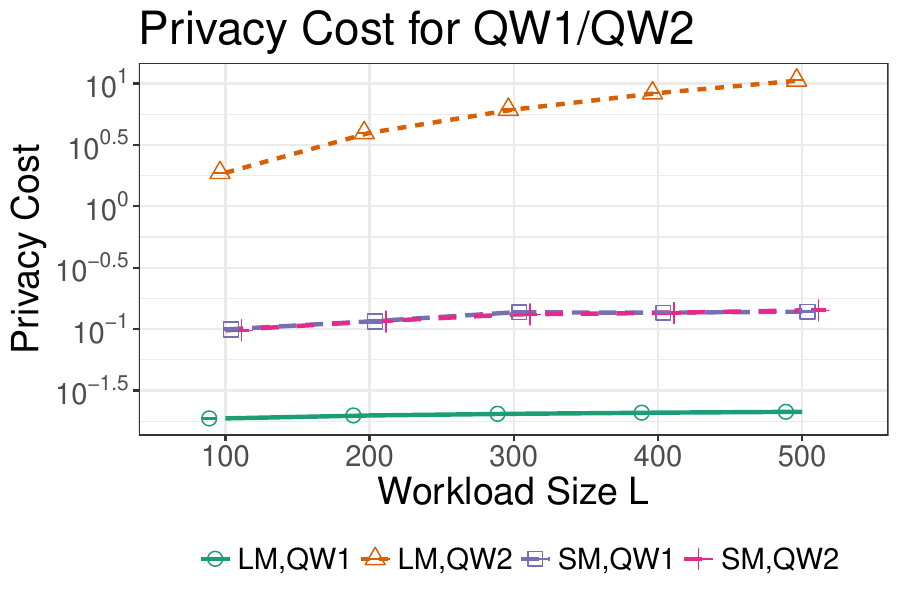}%
		\caption{Vary Workload Size}
		\label{fig:cost_workload_size_QW1}
	\end{subfigure}
	\begin{subfigure}[h]{0.32\textwidth}%
		\includegraphics[width=\textwidth]{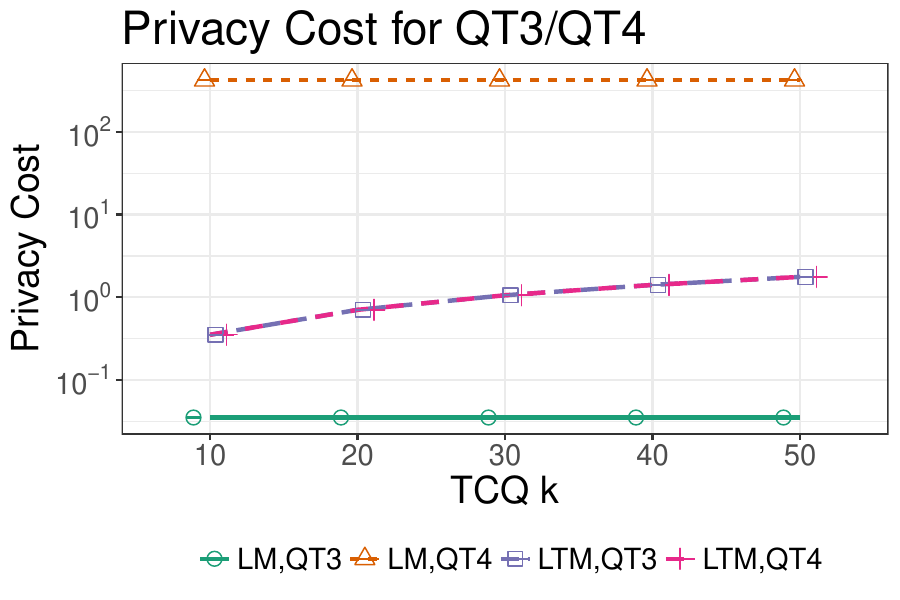}%
		\caption{Vary TCQ $k$}
		\label{fig:cost_tcq_k}
	\end{subfigure}
	\begin{subfigure}[h]{0.32\textwidth}%
		\includegraphics[width=\textwidth]{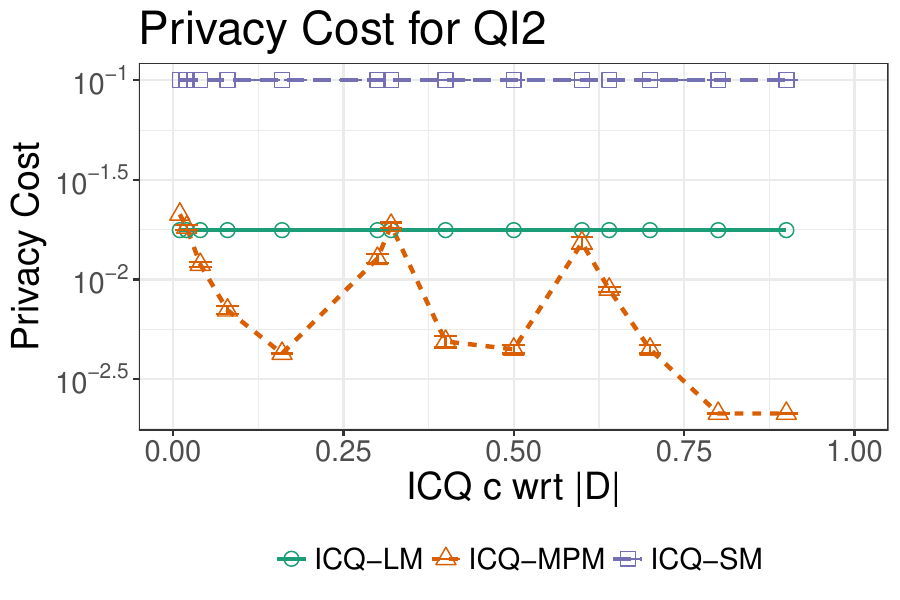}%
		\caption{Vary ICQ $c$}
		\label{fig:cost_icq_c}
	\end{subfigure}
	\caption{Privacy cost with query specific parameters: (a) Increasing workload size causes faster privacy cost increase fro \wcq-SM than \wcq-LM. (b) Increasing \tcq $k$ leads to faster privacy cost increase for \tcq-TM than \tcq-LM. (c) Varying \icq threshold $c$ affects privacy cost \icq-MPM. The closer $c$ relates to the true count, more attempts is needed, hence more privacy cost.}\label{fig:cost_query_spec}
\end{figure*}

\stitle{Vary Workload Size $L$.}
Figure~\ref{fig:cost_workload_size_QW1} shows the privacy cost of the baseline mechanism \wcq-LM and the special mechanism \wcq-SM when varying workload size using QW1 and QW2 templates.
\wcq-LM is highly dependent on the sensitivity of the query.
Hence, the privacy cost of \wcq-LM on QW1 changes very slowly with the workload size as the sensitivity of the workload in QW1 is $1$ while the privacy cost on QW2 is linear to workload size $L$ because the the sensitivity of cumulative histogram is  $L$.
On the other hand, \wcq-SM incurs similar privacy cost on QW1 and QW2, since both queries share the same workload size and accuracy parameters. The privacy cost of \wcq-SM for QW1 and QW2 are similar as they are using the same $H_2$ strategy matrix. Their difference in Table~\ref{tab:cost_mechanism} is mainly due to different runs of privacy cost estimation using MC simulation. However, their accuracy can be different, as QW2 with a cumulative histogram workload needs to add up ${\log L}$ number of noisy counts from the strategy matrix for one of its largest count. On the other hand, the histogram counts from QW1 requires only one noisy count.
Similar findings have been observed on other query types and mechanisms; therefore, we omit to show  other plots here.

\stitle{Vary top threshold $k$.}
Figure~\ref{fig:cost_tcq_k} shows the sensitivity of privacy cost of \tcq-LM and \tcq-TM with respect to varying $k$, using QT3 and QT4 as examples. 
The privacy cost of \tcq-LM is independent of $k$ as shown in Figure~\ref{fig:cost_tcq_k}, but the cost of \tcq-LTM increases linearly with k. 
The privacy cost of \tcq-LTM for both QT3 and QT4 have the same privacy cost, because its cost only depends on $k$ and the accuracy parameters, but the privacy cost of LTM for QT3 and QT4 are very different because the HD QT4 has larger sensitivity.

}\fi

In summary, we see that the optimal mechanism with the least privacy cost changes among queries and even the same query with different parameters.
This shows there is a great need for systems like \system to provide translation and identify optimal mechanisms.

\section{Case Study}\label{sec:casestudy}
In this section, we design an application benchmark based on entity resolution to show that (1) we can express real data exploration workflow using our exploration language;
(2) \system allows entity resolution workflow to be conducted with high accuracy and strong privacy guarantee, (3) \system allows trade-off between privacy and final task quality for a given exploration task.

%an external data analyst to conduct an exploration task on sensitive data with high accuracy under a reasonable privacy constraint specified by the data owner, (b) \system allows trade-off between privacy and final task quality for a given exploration task. This study sets a total privacy budget $B$ for a given exploration task which consists of a sequence of exploration queries. \system aims to answer as many queries as possible from the data analyst.
%The more query we are able to answers, the better we can tune the workflow and hence, the more the accuracy.

\subsection{Case Study Setting}
Entity Resolution (ER) is an application of identifying table records that refer to the same real-world object. In this case, we use the {\tt citations}~\cite{magellandata} dataset to perform exploration tasks. Each row in the table is a pair of citation records, with a binary label indicating whether they are duplicates or not. All the citation records share the same schema, which consists of $3$ text attributes of title, authors and venue, and one integer attribute of publication year.
A training set $D$ of size 4000 is sampled from {\tt citations} such that every record appears at most once. Two exploration tasks for entity resolution on $D$ are considered: \emph{blocking} and \emph{matching}. These two tasks are  achieved by learning a boolean formula $P$ (e.g. in DNF) over \emph{similarity predicates}. We express a similarity predicate $p$ as a tuple $(A, t, sim, \theta) \in \attrlist \times \trans \times \simf \times \simt$,
where  $A \in \attrlist$ is an attribute in the schema
 $t \in \trans$ is a transformation of the attribute value,
and $sim \in \simf$ is a similarity function that usually
takes a real value often restricted to $[0,1]$, and $\theta\in \simt$ is a threshold.
Given a pair of records $(r_1,r_2)$, a similarity predicate $p$ returns either `True' or `False' with semantics: $p(r_1, r_2)\equiv (sim(t(r_1.A), t(r_2.A)) > \theta)$.

In this case study, the exploration task for blocking is to find a boolean formula $P_b$ that identifies a small set of candidate matching pairs that cover most of the true matches, known as high \emph{recall}, with a small blocking cost (a small fraction of pairs in the data that return true for $P_b$). The exploration task for matching is to identify a boolean formula $P_m$ that identifies matching records that achieves high \emph{recall} and \emph{precision} on $D_t$.  Precision measures whether $P_m$ classifies true non-matches as matches, and recall measures the fraction of true matches that are captured by $P_m$. The quality of this task is measured by $F_1^{P_m}$, the harmonic mean of precision and recall.

Using exploration queries supported by \system, we can express two exploration strategies for each task: BS1 (MS1) using \wcq only to complete blocking (matching) task, and BS2 (MS2) using \icq/\tcq to conduct the blocking (matching) task. Each strategy generates a sequence of exploration queries which interact with \system and constructs a boolean formula for the corresponding task.
For each strategy, we randomly sample a concrete cleaner from our cleaner model and report mean and quartiles of its cleaning quality over $100$ runs.
\ifpaper
\conf{
Both the cleaner model and the exploration strategies are detailed in the full version~\cite{fullpaper}.
}
\else
\full{
Both the cleaner model and the exploration strategies are detailed by Figure~\ref{fig:blockstrategies} and Figure~\ref{fig:matchstrategies}  in Appendix~\ref{sec:cleaner_model}}.
\fi

%To measure the ER task quality, we use recall to represent blocking task quality, and use F1 score to measure matching task quality. In order to measure such qualities, we extracted the ground truth from {\tt citations}~\cite{magellandata} to form the data set. Each row in the table is a pair of citation records, with a binary label indicating whether they are duplicates or not. All the citation records share the same schema, which consists of $3$ text attributes of title, authors and venue, and one integer attribute of publication year. We ensure that any citation record can be sampled at most once into $D$ to set sensitivity of being $1$, and the duplicated and non-duplicated record pairs are evenly distributed in two data sets.

\begin{figure*}[h]
    \centering
%    \begin{subfigure}[h]{0.23\textwidth}%
%        \includegraphics[width=\textwidth]{../../figure/experiments/1b_blocking_BS1_1000_t008}%
%    \end{subfigure}%	
%    \begin{subfigure}[h]{0.23\textwidth}%
%        \includegraphics[width=\textwidth]{../../figure/experiments/1b_blocking_BS2_1000_t008}%
%    \end{subfigure}%
%    \begin{subfigure}[h]{0.23\textwidth}%
%       \includegraphics[width=\textwidth]{../../figure/experiments/1b_matching_MS1_1000_t008}%
%    \end{subfigure}%
%    \begin{subfigure}[h]{0.23\textwidth}%
%       \includegraphics[width=\textwidth]{../../figure/experiments/1b_matching_MS2_1000_t008}%
%    \end{subfigure}%
    
    \begin{subfigure}[h]{0.23\textwidth}%
        \includegraphics[width=\textwidth]{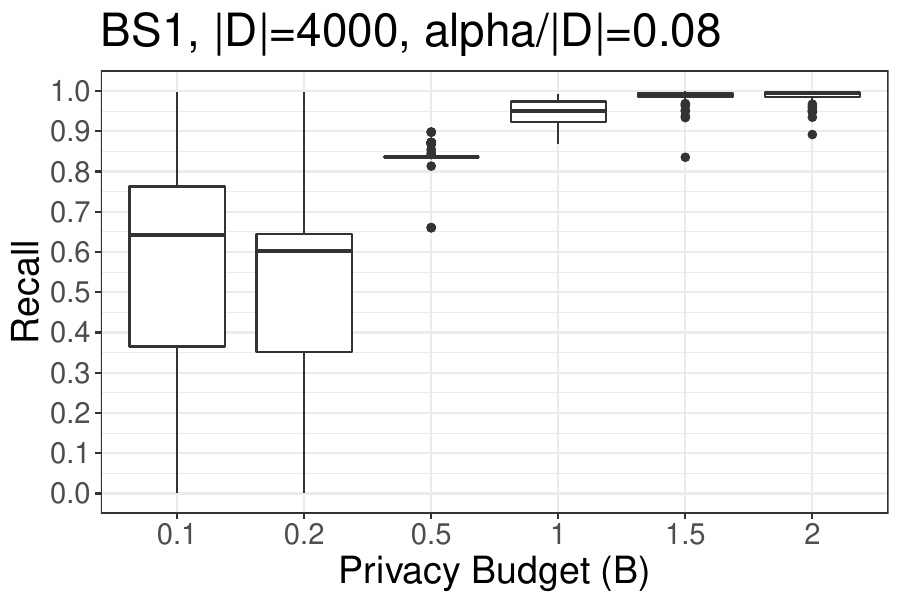}%
    \end{subfigure}%
    \begin{subfigure}[h]{0.23\textwidth}%
        \includegraphics[width=\textwidth]{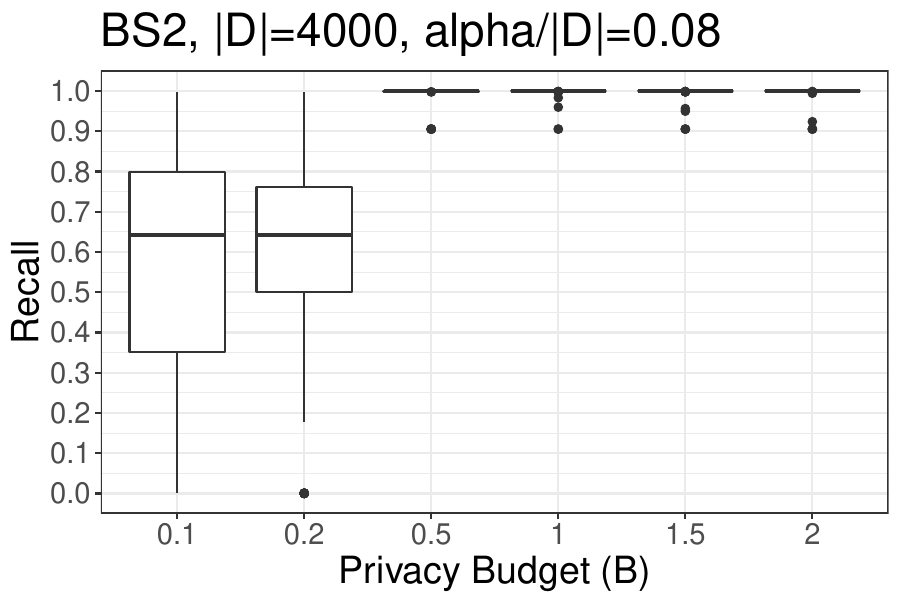}%
    \end{subfigure}%
    \begin{subfigure}[h]{0.23\textwidth}%
        \includegraphics[width=\textwidth]{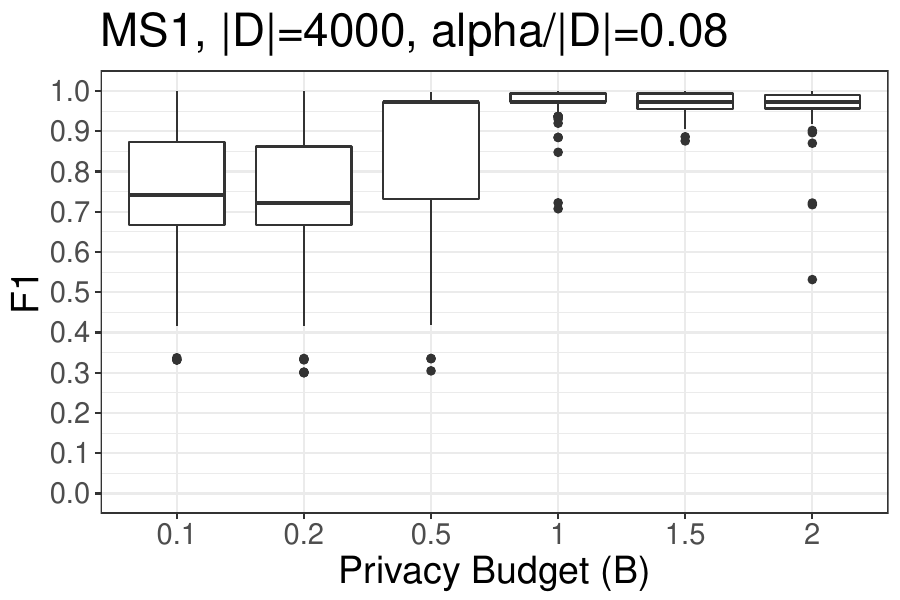}%
    \end{subfigure}
    \begin{subfigure}[h]{0.23\textwidth}%
        \includegraphics[width=\textwidth]{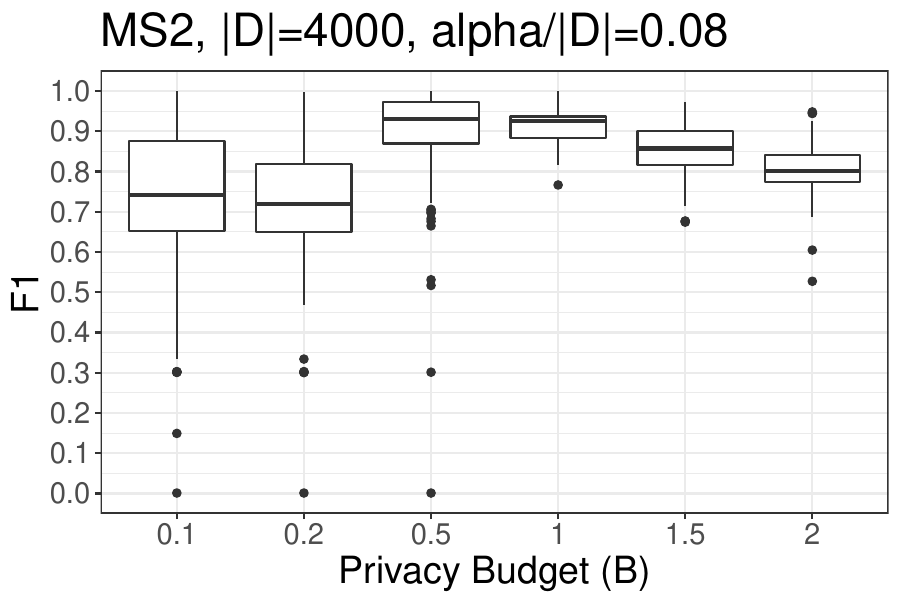}%
    \end{subfigure}%
    \vspace{-.75em}
    \caption{Performance of \system for blocking (BS1, BS2) and matching (MS1, MS2) tasks with increasing privacy budget $B$ at fixed $\alpha=0.08|D|$: the expected task quality improves with smaller variance as the budget constraint increases and gets stable. Fixing $\alpha$ fixes privacy cost per operation. Thus increasing B increases the number of queries answered. }
    \label{fig:budget_fixt}
\end{figure*}

\begin{figure*}[h]
    \centering
%    \begin{subfigure}[h]{0.23\textwidth}%
%    	\includegraphics[width=\textwidth]{../../figure/experiments/1b_blocking_BS1_1000_B1}%
%    \end{subfigure}%
%    \begin{subfigure}[h]{0.23\textwidth}%
%    	\includegraphics[width=\textwidth]{../../figure/experiments/1b_blocking_BS2_1000_B1}%
%    \end{subfigure}%
%    \begin{subfigure}[h]{0.23\textwidth}%
%        \includegraphics[width=\textwidth]{../../figure/experiments/1b_matching_MS1_1000_B1}%
%    \end{subfigure}%
%    \begin{subfigure}[h]{0.23\textwidth}%
%        \includegraphics[width=\textwidth]{../../figure/experiments/1b_matching_MS2_1000_B1}%
%    \end{subfigure}%
    
    \begin{subfigure}[h]{0.23\textwidth}%
        \includegraphics[width=\textwidth]{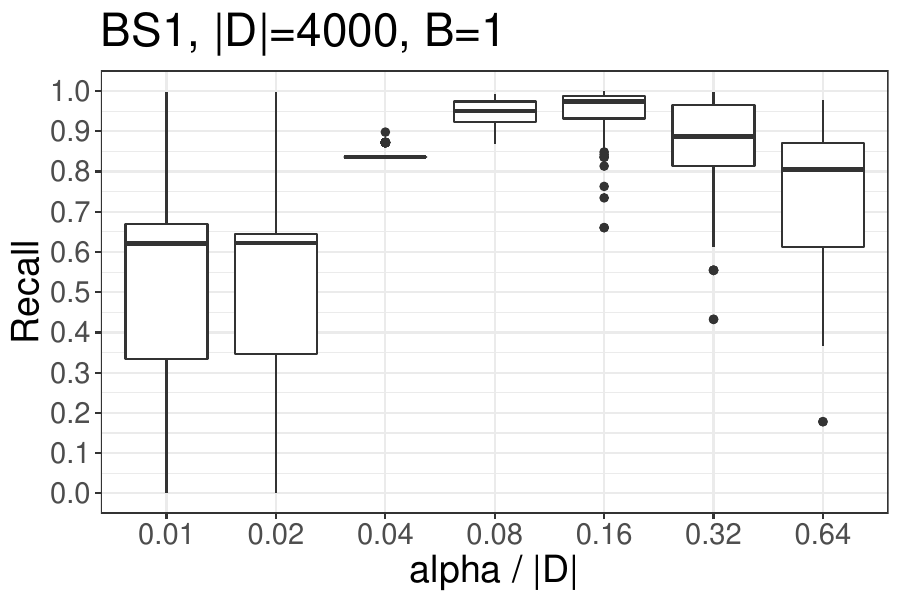}%
    \end{subfigure}%
    \begin{subfigure}[h]{0.23\textwidth}%
        \includegraphics[width=\textwidth]{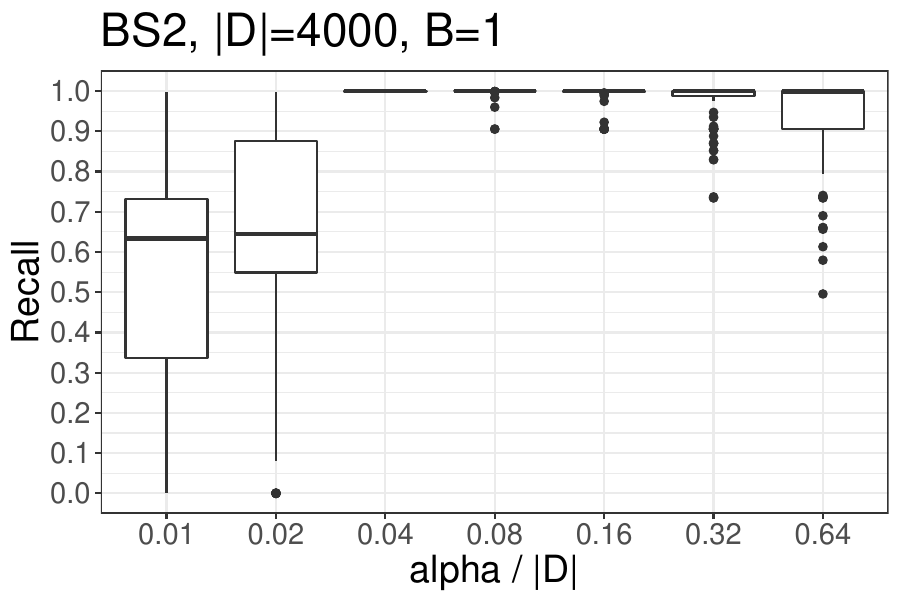}%
    \end{subfigure}%
    \begin{subfigure}[h]{0.23\textwidth}%
        \includegraphics[width=\textwidth]{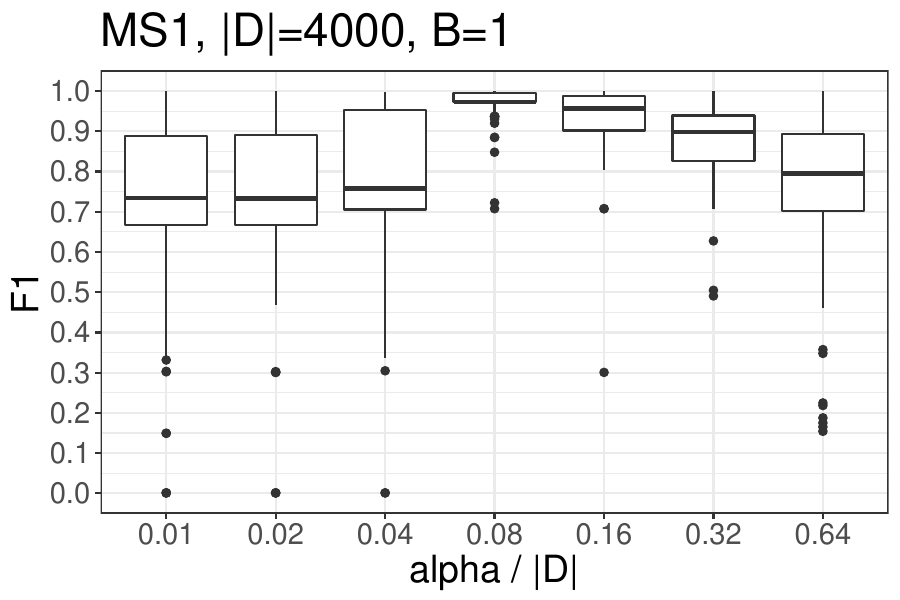}%
    \end{subfigure}
    \begin{subfigure}[h]{0.23\textwidth}%
        \includegraphics[width=\textwidth]{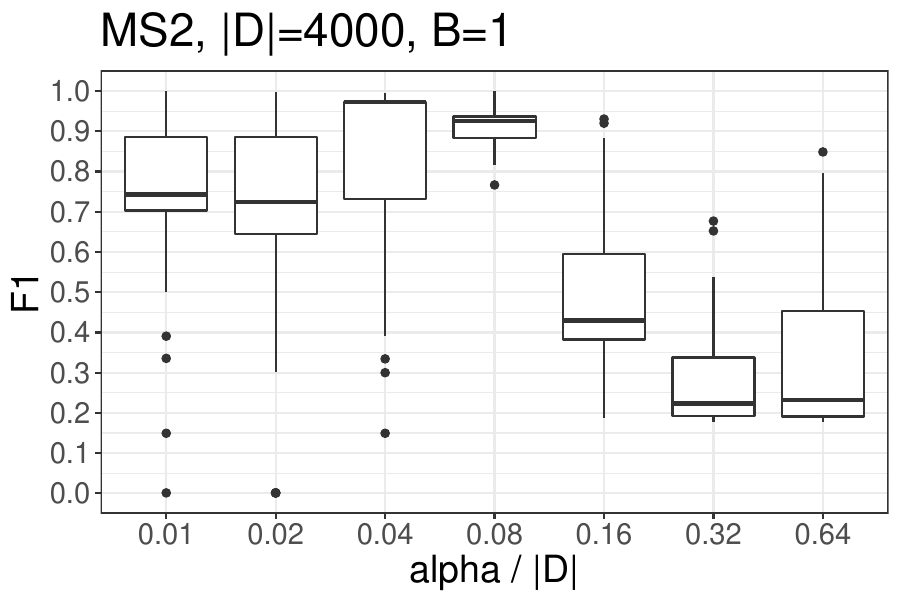}%
    \end{subfigure}%
    \vspace{-0.75em}
    \caption{Performance of \system for blocking (BS1, BS2) and matching (MS1, MS2) at fixed privacy budget $B=1$ with increasing $\alpha$ from $0.01|D_t|$ to $0.64|D_t|$. There exists an optimal $\alpha$ to achieve highest quality at a given privacy constraint. Increasing $\alpha$ decreases privacy cost per operation. Thus for a fixed budget this increases the number of queries. However, many queries each with a low privacy budget is not good for end-to-end accuracy.}
    \label{fig:tolerance_finiteb}
\end{figure*}

\ifpaper
\else
\full{
\begin{figure*}[h]
    \centering
    \begin{subfigure}[h]{0.23\textwidth}%
        \includegraphics[width=\textwidth]{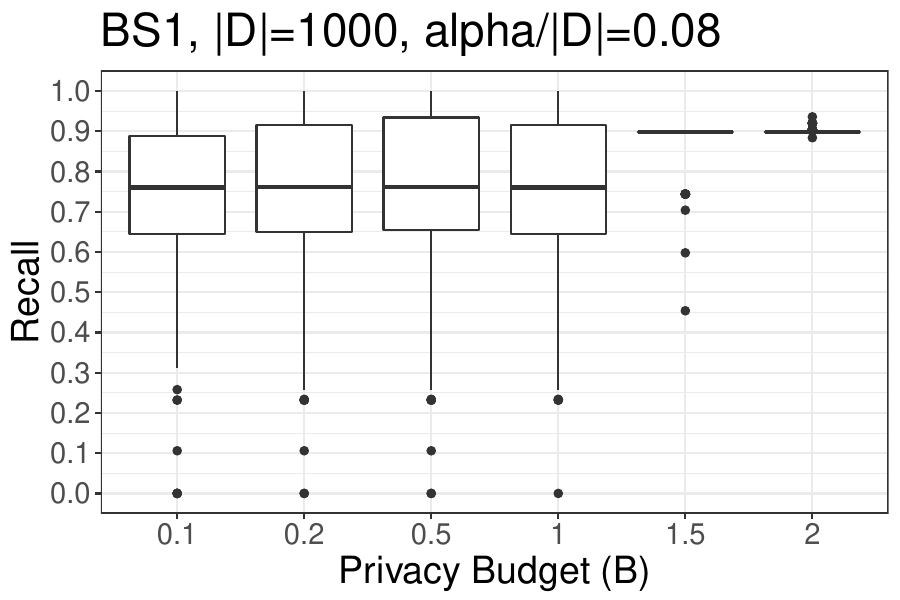}%
    \end{subfigure}%	
    \begin{subfigure}[h]{0.23\textwidth}%
        \includegraphics[width=\textwidth]{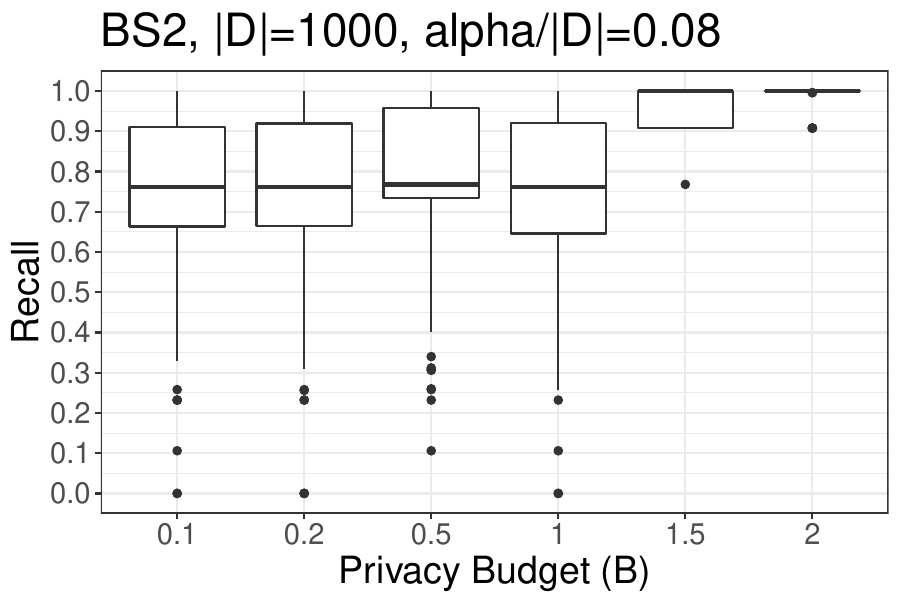}%
    \end{subfigure}%
    \begin{subfigure}[h]{0.23\textwidth}%
    	\includegraphics[width=\textwidth]{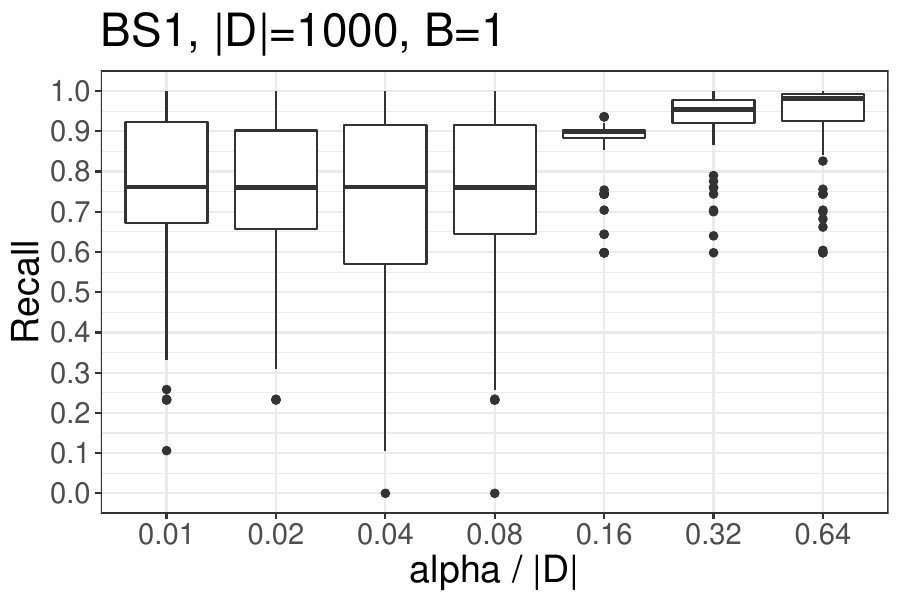}%
    \end{subfigure}%
    \begin{subfigure}[h]{0.23\textwidth}%
    	\includegraphics[width=\textwidth]{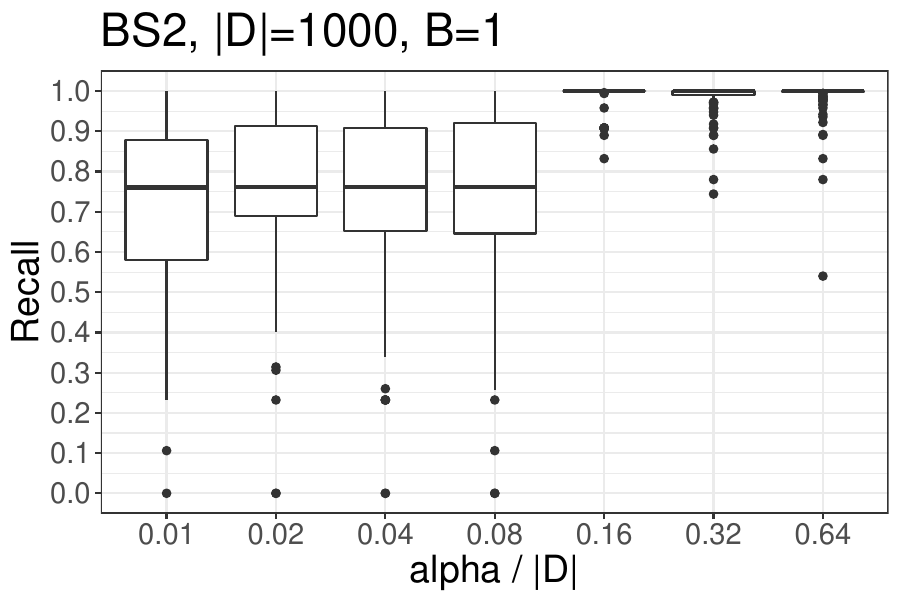}%
    \end{subfigure}%
    \vspace{-.75em}
    \caption{Performance of \system for blocking (BS1, BS2) tasks at $|D|=1000$. Compared with Figure~\ref{fig:budget_fixt} where $\alpha=0.08|D|$, the privacy budget that needs to achieve good recall increases when data size is smaller. Compared with Figure~\ref{fig:tolerance_finiteb} where $B=1$, the optimal $\alpha$ actually increases. }
    \label{fig:vary_dsize}
\end{figure*}
}
\fi

\subsection{End-to-End Task Evaluation}
We report the end-to-end performance of \system on the four exploration strategies.

\stitle{Vary Privacy Constraint.}
Given an exploration strategy from {BS1, BS2, MS1, MS2} on training data $D$, a sequence of queries is generated based on its interaction with \system until the privacy loss exceeds the privacy constraint $B$ specified by the data owner. The accuracy requirement for this set of experiments is fixed to be $\alpha=0.008|D_t|,\beta=0.0005$. The privacy constraint varies from $0.1$ to $2$. Each exploration strategy is repeated $100$ times under a given privacy constraint and we report the output quality of these $100$ runs at different privacy constraint.

Figure~\ref{fig:budget_fixt} shows the exploration quality (recall for blocking task and F1 for matching task) of $100$ runs of each exploration strategy at $B=\{0.1, 0.2, 0.5, 1, 1.5, 2\}$. We observe that the expected task quality (median and variance) improves as the budget constraint increases and gets stable after, for example reaching $B \geq 1$ for MS1.  Since we fixed $\alpha$, the privacy loss for each exploration query is also fixed. Thus, $B$ directly controls the number of queries \system answers before halting. For small $B<0.2$, only a few queries are answered and the cleaning quality is close to random guessing. As $B$ increases, more queries can be learned about the dataset. After $B$ reaches a certain value around $1.0$, \system can answer sufficient number of queries; therefore obtaining high accuracy.  For MS2, when it reaches good F1 score at $B=1$, more noisy answers can mislead the MS2 strategy to add more predicates to the blocking conjunction function, and decrease the quality.

The privacy cost used by each \icq and \tcq is generally less than what has spent on the corresponding \wcq, as less information is shown to the data analyst. Given the same privacy budget, e.g. when $B=0.5$, more queries can be answered in BS2 than BS1, results in a 25\% better recall. 
%Moreover, BS2 reaches a recall of 1.0 when $B=0.5$ earlier than BS1 when $B=1.5$. 
We observe the same trend in MS1 and MS2. 
This shows that it is important for \system to support translation for different types of queries for better end-to-end accuracy.

\stitle{Vary Accuracy Requirement.}
This section shows the task quality of each exploration strategy at different accuracy requirements under a fixed privacy constraint $B=1.0$. For each $\alpha \in \{0.01, 0.02, 0.04, 0.08, 0.16, 0.32, 0.64\}|D|$, the exploration strategy interacts with \system until the privacy loss exceeds $B=1.0$. Each exploration is repeated $100$ times and we report the quality of the constructed boolean formula of these $100$ runs. As shown in Figure~\ref{fig:tolerance_finiteb}
 the quality of the four exploration strategies all improves first as accuracy requirement relaxes and then degrades again under $B=1.0$. This is because when there is a privacy constraint, \system only allows a limited number of queries to be answered. Given this fixed privacy budget $B=1$ and a fixed accuracy requirement $\alpha$, the number of queries can be answered is bounded and related to $\alpha$. When $\alpha$ first increases from $0.01|D|$ to $0.08|D|$, more queries can be answered and give more information on which predicates to choose. However, as $\alpha$ keeps growing, the answers get noisier and misleading, resulting in the drop in quality, even though more questions can be answered.
 \ifpaper
Hence, choosing the optimal $\alpha$ for different queries in an exploration process is an interesting direction for future work.
\else
\full{

\stitle{Vary Data Size.} 
We also experimented the same strategies with a different data size $|D|=1000$, using blocking strategies as examples to study the effects of data size, as shown in Figure~\ref{fig:vary_dsize}. Comparing to Figure~\ref{fig:budget_fixt} where $\alpha$ is fixed at $0.08|D|$, the privacy cost to achieve optimal recall is larger when $|D|$ is smaller. BS1 and BS2 can achieve very good recall when $|D|=4000$ given a privacy budget $B=1$, but they require at least $B=1.5$ to reach similar quality when $|D|=1000$. On the other hand, comparing to Figure~\ref{fig:tolerance_finiteb} where the privacy budget is fixed $B=1$, the optimal $\alpha$ for smaller $|D|$ is very close. BS1 and BS2 reach the highest recall at about $\alpha=0.16\cdot 1000=160$ which is close to the optimal $\alpha=0.8\cdot 4000 =320$ when $|D|=4000$. This observation suggests an interesting direction for future work: choosing the optimal $\alpha$ for different queries in an exploration process.
}
\fi
%\vspace{-5mm}

%!TEX root=./main.tex
\section{Conclusions}\label{sec:discussion}
We proposed \system, a framework that allows data analysts to interact with sensitive data
while ensuring that their interactions satisfy differential privacy.
Using experiments with query benchmarks and entity resolution application,
we established that \system allows high exploration quality with a reasonable privacy loss.

\system opens many interesting future research directions.
First, more functionalities can be added to \system:
(a) a recommender which predicts the subsequent interesting queries
and advises the privacy cost of these queries to the analyst;
(b) an inferencer which uses historical answers to reduce the privacy cost
of a new incoming query; and
(c) a sampler which incorporate approximate query processing to
have 3-way trade-off between efficiency, accuracy, and privacy.
More translation algorithms from accuracy to privacy,
especially for data-dependent mechanisms and non-linear queries
can be implemented in \system
to further save privacy budget throughout the exploration.
We show the exploration queries in \system to support entity resolution tasks.
These exploration queries can be extended to support other exploration tasks,
e.g., schema matching, feature engineering, tuning machine learning workloads.
This would also require extending our system to handle relations with multiple tables,
constraints like functional dependencies, etc.
\system turns the differentially private algorithm design problem on its head
-- it minimizes privacy loss given an accuracy constraint.
This new formulation has applications on sensitive data exploration
and can trigger an entirely new line of research.

\stitle{Acknowledgments:} This work was supported by the National Science Foundation under grants 1253327, 1408982; by DARPA and SPAWAR under contract N66001-15-C-4067; by NSERC Discovery grant; and by Thomson Reuters-NSERC IRC grant.

\eat{
\ifpaper
\eat{
\system opens many interesting future research directions:
(1) more functionalities can be added to \system; 
(2) more translation algorithms especially for data-dependent mechanisms to further save privacy budget;
(3) extend \system to handle relations with multiple tables anbd constraints.
}\else
\full{
\system opens many interesting future research directions.
First, more functionalities can be added to \system:
(a) a recommender which predicts the subsequent interesting queries
and advises the privacy cost of these queries to the analyst;
(b) an inferencer which uses historical answers to reduce the privacy cost
of a new incoming query;
(c) a sampler which incorporate approximate query processing to
have 3-way trade-off between efficiency, accuracy, and privacy.
In terms of translation from accuracy to privacy,
more translation algorithms from accuracy to privacy,
especially for data-dependent mechanisms can be implemented in \system
to further save privacy budget throughout the exploration.
Extending \system to handle non-linear query types
is also desired as discussed in Appendix~\ref{app:otherqueries}.
We show the exploration queries in \system to support entity resolution tasks.
These exploration queries can be extended to support other exploration tasks,
e.g., schema matching, feature engineering, tuning machine learning workloads.
This would also require extending our system to handle relations with multiple tables,
constraints like functional dependencies, etc.
\system turns the differentially private algorithm design problem on its head
-- it minimizes privacy loss given an accuracy constraint.
This problem has applications on sensitive data exploration
and can trigger an entirely new line of research.
}\fi
}

\bibliographystyle{abbrv}
\bibliography{sigproc}

%!TEX root=./main.tex
\appendix

\section{Theorems and Proofs}\label{app:proofs}

\subsection{Laplace Mechanism (Theorem~\ref{theorem:pc_lm})}
\label{app:lm_proof}

\eat{
First, it is easy to see that this mechanism also satisfies $\epsilon = \|\workload \|_1/b$-differential privacy by the property of Laplace mechanism and the post-processing property of differential privacy (Theorem~\ref{theorem:post}). We next show how the resulting mechanism satisfies the accuracy bounds specified in each query.
}

\change{

\noindent\textbf{Privacy Proof for \wcq}\\
{\sc Proof:}  For any pair of neighbouring databases $D$ and $D^\prime$ such that $| D \backslash D^\prime \cup D^\prime \backslash D| = 1$,
given a \wcq query $q_{W}$,
LM adds noise drawn from $Lap(b)$ to query $c_{\phi_i}$, where $b=\| \workload \|_1 / \epsilon$. Consider a column vector of counts $y = [y_1, \cdots, y_L]$,
\ifpaper
\conf{
\begin{eqnarray*}
\frac{\Pr[q_{\workload}(D) + Lap(b_{\workload})^L = y]}{\Pr[q_{\workload}(D^\prime) + Lap(b_{\workload})^L = y]}  \leq exp(\epsilon) 
\end{eqnarray*}
}
\else
\full{
\begin{eqnarray*}
\lefteqn{ \frac{\Pr[q_{\workload}(D) + Lap(b_{\workload})^L = y]}{\Pr[q_{\workload}(D^\prime) + Lap(b_{\workload})^L = y]}  } \\
&=& \frac{\prod_1^L \Pr[c_{\phi_i}(D) + \eta_i = y_i ]}{\prod_1^L \Pr[c_{\phi_i}(D^\prime) + \eta_i = y_i ]} = \prod_1^L \frac{exp(\frac{-\epsilon |y_i-c_{\phi_i}(D)|}{\| \workload \|_1})}{ exp(\frac{-\epsilon |y_i-c_{\phi_i}(D^\prime)|}{\| \workload \|_1})} \\
&=& exp(\frac{\epsilon}{\|\workload \|_1} \sum_1^L (| y_i - c_{\phi_i}(D^\prime)| - | y_i - c_{\phi_i}(D)|)) \\
&\leq & exp( \frac{\epsilon}{\| \workload \|_1} \sum_1^L  (|c_{\phi_i}(D) - c_{\phi_i}(D^\prime)|)) = exp(\epsilon) 
\end{eqnarray*}
}\fi
Therefore, it satisfies  $\epsilon = \|\workload \|_1/b$-differential privacy. \qed

\noindent\textbf{Privacy Proof for \icq and \tcq}\\
{\sc Proof:} For \icq (and \tcq, respectively), Line~\ref{alg:lm:icq} (Line~\ref{alg:lm:tcq}) post-processes the noisy answers only without accessing the true data. By the post-processing property of differential privacy (Theorem~\ref{theorem:post}), Laplace mechanism for \icq (\tcq) satisfies $\epsilon = \|\workload \|_1/b$-differential privacy. \qed
}

\noindent\textbf{Accuracy Proof for \wcq}\\
%Given a query $q$ where $q$.type $=$\wcq, Laplace mechanism (Algorithm~\ref{algo:lm}) denoted by $LM(q,\alpha,\beta,D)$ can achieve $(\alpha,\beta)$-\wcq accuracy by executing the function $\runFunc(q,\alpha,\beta,D)$ for any $D\in \tabledomain$, and satisfy differential privacy with a minimal cost returned by the function $\epsilon=\transFunc(q,\alpha,\beta)$.
%\begin{proof}
{\sc Proof:} Give a \wcq $q_W$, for any $D\in \tabledomain$,
setting $b\leq \frac{\alpha}{\ln(1/(1-(1-\beta)^{1/L}))}$ bounds the failing probability, i.e.,
\begin{eqnarray*}
\lefteqn{\Pr[\|LM(\workload,\data)-q_W(D)\|_{\infty} \geq \alpha] \nonumber} \\
&=& 1-\prod_{i\in[1,L]} (1-\Pr[|\eta_i| > \alpha]) = 1-(1- e^{-\alpha/b})^L \leq \beta
\qed
\end{eqnarray*}
%\end{proof}

\noindent\textbf{Accuracy Proof for \icq}\\
%Given a query $q$ where $q$.type $=\icq$, Laplace mechanism (Algorithm~\ref{algo:lm}) denoted by $LM(q,\alpha,\beta,D)$ can achieve $(\alpha,\beta)$-\icq accuracy by executing the function $\runFunc(q,\alpha,\beta,D)$ for any $D\in \tabledomain$, and satisfy differential privacy with a minimal cost returned by the function $\epsilon=\transFunc(q,\alpha,\beta)$.
%\begin{proof}
{\sc Proof:} Given \icq $q_{W,>c}$, for any $D\in \tabledomain$,
setting $b\leq \frac{\alpha}{\ln(1/(1-(1-\beta)^{1/L}))-\ln2}$ bounds the failure probability, i.e.,
\begin{eqnarray*}
\lefteqn{\Pr[|\{\phi \in M(D) ~|~ c_{\phi}(D) < c-\alpha\}|>0] \nonumber} \\
%&=& 1- \Pr[|\{\phi \in M(D) ~|~ c_{\phi}(D) < c-\alpha\}|=0] \nonumber \\
&=& 1- \prod_{\phi\in W: c_{\phi}(D)-c+\alpha<0} (1-\Pr[c_{\phi}(D)-c+\eta>0]) \nonumber\\
&\leq & 1- \prod_{\phi\in W: c_{\phi}(D)-c+\alpha<0} (1-\Pr[\eta>\alpha]) \nonumber\\
%&=& 1- \prod_{\phi\in W: c_{\phi}(D)- c+\alpha<0} (1-e^{-\alpha/b}/2) \nonumber \\
&\leq& 1-(1-e^{-\alpha/b}/2)^L < \beta \nonumber
\end{eqnarray*}
The proof for the other condition is analogous. $\qed$
\eat{and similarly,
\begin{eqnarray}
&& \Pr[|\{\phi \in (W-M(D)) ~|~ c_{\phi}(D) > c+\alpha\}|>0] \nonumber \\
&=& 1- \Pr[|\{\phi \in (W-M(D)) ~|~ c_{\phi}(D) > c+\alpha\}|=0] \nonumber \\
&=& 1 - \prod_{\phi \in W: c_{\phi}(D)- c-\alpha>0} (1-\Pr[c_{\phi}(D)-c+\eta<0]) \nonumber\\
&\leq & 1-  \prod_{\phi \in W: c_{\phi}(D)- c-\alpha>0} (1-\Pr[\eta<-\alpha]) \nonumber\\
&=& 1- \prod_{\phi\in W: c_{\phi}(D)- c-\alpha >0} (1-e^{-\alpha/b}/2) \nonumber \\
&\leq& 1-(1-e^{-\alpha/b}/2)^L < \beta \nonumber
\end{eqnarray}}
%\end{proof}

\noindent\textbf{Accuracy Proof for \tcq}\\ %\label{app:lm_tcq_proof}
%Given a query $q$ where $q$.type $=\tcq$, Laplace mechanism (Algorithm~\ref{algo:lm}) denoted by $LM(q,\alpha,\beta,D)$ can achieve $(\alpha,\beta)$-\tcq accuracy by executing the function $\runFunc(q,\alpha,\beta,D)$ for any $D\in \tabledomain$, and satisfy differential privacy with a minimal cost returned by the function $\epsilon=\transFunc(q,\alpha,\beta)$.
%\begin{proof}
{\sc Proof:}
Given a \tcq $q_{W,k}$, for any $D\in \tabledomain$,
W.L.O.G. let $c_{\phi_1}(D)\geq \cdots \geq c_{\phi_k}(D) \cdots \geq c_{L}(D)$
and the noise added to these counts be $\eta_1,\ldots,\eta_L$ respectively.
Let $c_k = c_{\phi_k}(D)$ the $k^{th}$ largest counting value.
Setting noise parameter $b\leq \frac{\alpha}{2\ln(L/(2\beta))}$
bounds the failing probability, i.e.,
\begin{eqnarray}
&& \Pr[ |\{\phi \in M(D) ~|~ c_{\phi}(D) < c_k-\alpha \}| > 0] \nonumber \\
&\leq&1-\Pr[(\max_{i>k,c_{\phi_i}(D)<c_k-\alpha} \eta_i <\frac{\alpha}{2})\wedge (\min_{i\leq k} \eta_i > - \frac{\alpha}{2})] \nonumber \\
&=& \Pr[(\max_{i>k,c_{\phi_i}(D)<c_k-\alpha} \eta_i \geq \frac{\alpha}{2})\vee (\min_{i\leq k} \eta_i \geq - \frac{\alpha}{2})] \nonumber \\
&\leq & (L-k)e^{-\alpha/(2b)}/2 + ke^{-\alpha/(2b)}/2 = L e^{\alpha/(2b)}/2 \leq \beta
\end{eqnarray}
The proof for $\Pr[ |\{\phi \in (\Phi-M(D)) ~|~ c_{\phi}(D) > c_k + \alpha \}| > 0 ] \leq \beta$ is analogous. \qed

\eat{
\todo{FIX proof as discussed} The output of Algorithm~\ref{algo:lm} for \tcq are
$\{\phi_{i_1},\ldots, \phi_{i_k}\}$
which have the highest noisy counts.
Suppose their noisy counts are in the order of
$\tilde{x}_{i_1} \geq \tilde{x}_{i_2} \geq \cdots \geq \tilde{x}_{i_k}$.
As $c_k$ is the true answer to the $k$th largest linear counting query of all $L$ linear counting queries,
the largest answer to the smallest $L-k+1$ queries should be no less than $c_k$. Hence, we have
%\begin{equation}
$\max_{j\in \{i_{k},\ldots,i_{L}\}} c_{\phi_{j}}(D) \geq c_k$.
%\end{equation}
Similarly, the smallest of the $k$ largest counting queries should be no greater than $c_k$. That is, 
$\min_{j\in \{i_{1},\ldots,i_{k}\}} c_{\phi_{j}}(D) \leq c_k$.

First we show that when
$b\leq \frac{\alpha}{2(\ln L + \ln(k/\beta))}$
if a count $c_{\phi}(D)$ is too small,
$\phi$ will not be included in the answer $a$ with high probability:
\begin{eqnarray}
\lefteqn{\Pr[\phi \in a ~|~ c_{\phi}(D) < c_k-\alpha] \nonumber} \\
&=& \Pr[c_{\phi}(D)+\eta \geq \max_{j \in \{i_{k},\ldots,i_L\}} (c_{\phi_j}(D) + \eta_j) ~|~ c_{\phi}(D) < c_k-\alpha] \nonumber \\
&\leq & \Pr[c_{\phi}(D)+\eta \geq \max_{j \in \{i_{k},\ldots,i_L\}} (c_k + \eta_j) ~|~ c_{\phi}(D) < c_k-\alpha] \nonumber \\
&= & \Pr[c_{\phi}(D) - c_k \geq \max_{j \in \{i_{k},\ldots,i_L\}} \eta_j -\eta ~|~ c_{\phi}(D) < c_k-\alpha] \nonumber \\
&<& \Pr[\max_{j \in \{i_{k},\ldots,i_L\}} \eta_j -\eta < -\alpha] \nonumber \\
&\leq & \Pr[\max_{j \in \{i_{k},\ldots,i_L\}} \eta_j < -\alpha/2] + \Pr[\eta >\alpha/2] \nonumber \\
&\leq & (L-k+1) e^{-\frac{\alpha}{2b}}/2 + e^{-\frac{\alpha}{2b}}/2 \nonumber 
\ < \ L e^{-\frac{\alpha}{2b}} \leq \beta/k \nonumber
\end{eqnarray}
Hence, every selected $\phi\in a$ has $c_{\phi}(D) > c-\alpha$ with probability $1-\beta$ for $|a|=k$.

Then, we would like to show that if a count $c_{\phi}(D)$ is sufficiently large, then the probability of missing it in the output is small.
\begin{eqnarray}
&&\Pr[\phi \notin a ~|~ c_{\phi}(D) > c_k+\alpha] \nonumber \\
&=& \Pr[c_{\phi}(D)+\eta < \min_{j \in \{i_{1},\ldots,i_{k}\}} (c_{\phi_j}(D) + \eta_j) ~|~ c_{\phi}(D) > c_k+\alpha] \nonumber \\
&\leq & \Pr[c_{\phi}(D)+\eta < \min_{j \in \{i_{1},\ldots,i_k\}} (c_k + \eta_j) ~|~ c_{\phi}(D) > c_k+\alpha] \nonumber \\
&= & \Pr[c_{\phi}(D) - c_k < \min_{j \in \{i_{1},\ldots,i_k\}} \eta_j -\eta ~|~ c_{\phi}(D) > c_k+\alpha] \nonumber \\
&<& \Pr[\min_{j \in \{i_{1},\ldots,i_k\}} \eta_j -\eta > \alpha] \nonumber \\
&\leq & \Pr[\min_{j \in \{i_{1},\ldots,i_k\}} \eta_j >\alpha/2]
+ \Pr[\eta <-\alpha/2] \nonumber \\
&\leq & k e^{-\frac{\alpha}{2b}}/2 + e^{-\frac{\alpha}{2b}}/2 \nonumber \\
&<& k e^{-\frac{\alpha}{2b}} \leq \beta/L
\end{eqnarray}
Hence, all boolean formulae $\phi\in \{\phi_1,\ldots,\phi_L\}$ with $c_{\phi}(D)>c+\alpha$ are selected with probability $1-\beta$. $\qed$
%With probability $1-\beta$, every linear counting query with true answer at least $c_k+\alpha$
%is selected, and every selected linear counting query has a true answer at least $c_k-\alpha$.
%\todo{double check \cite{Li:2012:PFI:2350229.2350251}}
%\end{proof}
}

\subsection{Strategy-based Mechanism for \wcq}\label{app:strategy_proof}

\begin{theorem}\label{theorem:strategy_upperbound}
Given a \wcq $q_W:\tabledomain \rightarrow \mathbb{R}^L$,
for a table $D\in \tabledomain$, let $(\workload,\data) = \transform(W,D)$.
Let $\strategy$ be a strategy used in Algorithm~\ref{algo:wcq-sm} to answer $q_W$.
When $\epsilon\geq  \frac{\|\strategy\|_1\|\workload\strategy^+\|_F}{\alpha\sqrt{\beta/2}}$,
%When $b\leq \frac{\alpha\sqrt{\beta/2}}{\|\workload\strategy^+\|_F}$,
$\strategy$-strategy mechanism achieves $(\alpha,\beta)$-\wcq accuracy.
\end{theorem}

\begin{proof}
The noise vector added to the final query answer of $q_W(\cdot)$ using $\strategy$-strategy mechanism
is $\hat{\noisev}=[\hat{\eta}_1,\ldots,\hat{\eta}_L] = (\workload \strategy^+) \noisev$.
Each noise variable $\hat{\eta}_i$ has a mean of 0, and a variance of $\sigma_i^2 = c_i \cdot (2b^2)$,
where $c_i = \sum_{j=1}^l (\workload\strategy^+)[i,j]^2$,
and hence $\Pr[|\hat{\eta}_i| \geq \alpha] \leq \frac{2c_ib_{\strategy}^2}{\alpha^2}$ by Chebyshev's inequality.
By union bound, the failing probability is bounded by
\begin{eqnarray}
&&\Pr[\|LM(\workload,\data)-q_W(D)\|_{\infty} \geq \alpha] \nonumber \\
&=& \Pr[\cup_{i\in[1,L]}|\eta_i| \geq \alpha] \leq \sum_{i\in[1,L]}\frac{2c_ib^2}{\alpha^2} \leq \beta \nonumber
\end{eqnarray}
It requires $b\leq \frac{\alpha\sqrt{\beta/2}}{\|\workload\strategy^+\|_F}$, hence $\epsilon \geq \|A\|_1/b \geq \frac{\|\strategy\|_1\|\workload\strategy^+\|_F}{\alpha\sqrt{\beta/2}}$.
\end{proof}

\stitle{Proof for Theorem~\ref{theorem:strategy_tightbound}}
\begin{proof}
Given an $\epsilon$, the simulation in the function $\textsc{estimateFailingRateMC}()$ ensures that
with high probability $1-p$, the true failing probability $\beta_t$ to
bound $\| (\workload\strategy^+)\eta\|_{\infty}$ by $\alpha$
lies within $\beta_e\pm \delta \beta$.
The failing probability to bound $\beta_t <\beta_e+\delta\beta$ is $p/2$.
By union bound, $\epsilon$ ensures $(\alpha,\beta')$-\wcq accuracy, where $\beta'< \beta+\delta\beta+p/2$.
If $(\beta+\delta\beta)(1-p/2)+p/2 < \beta+\delta\beta +p/2 <\beta$, then this $\epsilon$ ensures $(\alpha,\beta)$-\wcq accuracy.
Beside this estimation, in the binary search of $\transFunc()$,
we stop when $\epsilon_{\min}$ and $\epsilon_{\max}$ are sufficiently close.
Hence, the privacy cost returned by $\transFunc()$ is an approximated
minimal privacy cost required for $(\alpha,\beta)$-\wcq accuracy.
\end{proof}

\eat{
We would like to show Theorem~\ref{theorem:strategy_tightbound}:
given a workload counting query $q_W:\tabledomain \rightarrow \mathbb{R}^L$,
answering $q_W$ with $\strategy$-strategy mechanism
by executing $\runFunc(q_W,\alpha,\beta,D)$ in Algorithm~\ref{algo:wcq-sm}
achieves $(\alpha,\beta)$-\wcq accuracy for any $D\in \tabledomain$
with (close to) minimal privacy cost of $\epsilon$ as returned by $\transFunc(q_W,\alpha,\beta)$.
}

\subsection{Multi-Poking Mechanism for \icq}\label{app:mpm_proof}
We would like to show Theorem~\ref{theorem:pc_mpm} that
given a query $q_{W,>c}$,  multi-poking mechanism (Algorithm~\ref{algo:mpm}),
achieves $(\alpha,\beta)$-\icq accuracy by executing function $\runFunc(q_{W,>c}, \alpha,\beta,D)$,
with differential privacy cost of $\epsilon$ returned by function $\transFunc(q_{W,>c},\alpha,\beta)$.
\begin{proof}
For each $\phi\in W$, 
(i) when $q_{\phi}(D)<c-\alpha$,
\begin{eqnarray}
&& \Pr[\phi \in MPM^{\alpha,\beta}_{q_{\phi,>c}}(D) ~|~ c_{\phi}(D)<c-\alpha] \nonumber \\
&=& \sum_{i=0}^{m-1} \Pr[c_{\phi}(D)-c +\eta -\alpha_{i} + \alpha >0 ~|~c_{\phi}(D)-c +\alpha <0]\nonumber \\
&<& \sum_{i=0}^{m-1} \Pr[\eta_i >\alpha_{i}] = \sum_{i=0}^{m-1} e^{-\alpha_{i}\epsilon_i/\|\workload\|_1}/2 =m \cdot \beta/(mL) = \beta/L \nonumber
%&<& \Pr[\eta >\alpha_{0}] + \beta/2 = e^{-\alpha_{0}\epsilon_0}/2 +\beta/2 = \beta.
\end{eqnarray}
and (ii) when $q_{\phi}(D)<c+\alpha$,
\begin{eqnarray}
&& \Pr[\phi \notin MPM^{\alpha,\beta}_{\phi,>c}(D) ~|~ c_{\phi}(D) >c+\alpha] \nonumber \\
&=& \sum_{i=0}^{m-1} \Pr[c_{\phi}(D)-c +\eta +\alpha_{i} - \alpha<0 ~|~ c_{\phi}(D)-c -\alpha >0] \nonumber \\
&<& \sum_{i=0}^{m-1} \Pr[\eta_i <-\alpha_{i}] = \sum_{i=0}^{m-1} e^{-\alpha_{i}\epsilon_i/\|\workload\|_1}/2 =m \cdot \beta/(mL) = \beta/L \nonumber
\end{eqnarray}
As $|W|=L$, the failing probability is bounded by $\beta$.

The $\relaxprivacy$ Algorithm \protect\cite{DBLP:journals/corr/KoufogiannisHP15a}
correlates the new noise $\eta_{i+1}$ with noise $\eta_i$ from the previous iteration
In this way, the composition of the first $i+1$ iterations is $\epsilon_{i+1}$,
and the noise added in the $i+1$th iteration is equivalent to
a noise generated with Laplace distribution with privacy budget $\epsilon_{i+1}$
and the first $i$ iterations also satisfy $\epsilon_i$-DP for $i=0,1,\ldots,m-1$
(Theorem~9 for single-dimension and Theorem~10 for high-dimension~ \cite{DBLP:journals/corr/KoufogiannisHP15a}).
%This approach allows data cleaner to learn the query answer with a gradual relaxation of privacy cost.
\end{proof}

\eat{
\begin{algorithm}[t]
\caption{$\relaxprivacy(\eta,\epsilon,\epsilon')$ \protect\cite{DBLP:journals/corr/KoufogiannisHP15a}}
\label{algo:noisedown}
\begin{algorithmic}[1]
\Require Old noise sample $\eta$, old privacy budget $\epsilon$, new privacy budget $\epsilon'>\epsilon$
\Ensure  New noise $\eta_{new}$
%\Procedure {NoiseDown}{$\eta,\epsilon,\epsilon'$}
\State $pdf = [\frac{\epsilon}{\epsilon'}e^{-(\epsilon'-\epsilon)|\eta|}, \frac{\epsilon'-\epsilon}{2\epsilon'},\frac{\epsilon'+\epsilon}{2\epsilon'}(1-e^{-(\epsilon'-\epsilon)|\eta|})]$
\State $p = random.double()$
\If{$p\leq pdf[0]$}
    \State \textbf{return} $\eta_{new}=\eta$
\ElsIf{$p \leq sum(pdf[0:1])$}
    \State
    \[
      \text{Draw }   z =
      \begin{cases}
        e^{(\epsilon'+\epsilon)z},& \text{if } z\leq 0\\
        0,              & \text{otherwise}
      \end{cases}
    \]
   \State \textbf{return} $\eta_{new} = sgn(\eta)z$
\ElsIf{$p \leq sum(pdf[0:2])$}
    \State
    \[
      \text{Draw }   z =
      \begin{cases}
        e^{-(\epsilon'-\epsilon)z},& \text{if } 0 \leq z \leq |\eta|\\
        0,              & \text{otherwise}
      \end{cases}
    \]
   \State \textbf{return} $\eta_{new} = sgn(\eta)z$
\Else
     \State
    \[
      \text{Draw }   z =
      \begin{cases}
        e^{-(\epsilon'+\epsilon)z},& \text{if } z\geq |\eta|\\
        0,              & \text{otherwise}
      \end{cases}
    \]
   \State \textbf{return} $\eta_{new} = sgn(\eta)z$
\EndIf
%\EndProcedure
\end{algorithmic}
\end{algorithm}
}

\subsection{Laplace Top-$k$ Mechanism for \tcq}\label{app:ltm}
We would like show the privacy and tolerance requirement
of Laplace Top-$k$ Mechanism stated in Theorem~\ref{theorem:pc_ltm}.
The proof for accuracy is similar to the accuracy proof of Laplace mechanism
for \tcq (Appendix~\ref{app:lm_proof}).
\eat{
\begin{proof}
The output of Algorithm~\ref{algo:ltm} are $\{q_{\phi_{i_1}},\ldots,q_{\phi_{i_k}}\}$ which have the highest noisy counts. Suppose their noisy counts are in the order of $\tilde{x}_{i_1} \geq \tilde{x}_{i_2} \geq \cdots \geq \tilde{x}_{i_k}$. As $c_k$ is the true answer to the $k$th largest linear counting query of all $L$ linear counting queries, the largest answer to any $L-k+1$ linear counting queries should be no less than $c_k$. Hence, we have
\begin{equation}
\max_{j\in \{i_{k},\ldots,i_{L}\}} q_{\phi_{j}}(D) \geq c_k.
\end{equation}
Similarly, the smallest answer to any $k$ linear counting queries should be no greater than $c_k$. Hence, we have
\begin{equation}
\min_{j\in \{i_{1},\ldots,i_{k}\}} q_{\phi_{j}}(D) \leq c_k.
\end{equation}

First we would like to show that by setting
$b\leq \frac{\alpha}{2(\ln(L/(2\beta)))}$
if a count $q_{\phi}(D)$ is too small,
$\phi$ will be included in the answer $a$ with small probability:
\begin{eqnarray}
&&\Pr[\phi \in a ~|~ q_{\phi}(D) < c_k-\alpha] \nonumber \\
&=& \Pr[q_{\phi}(D)+\eta \geq \max_{j \in \{i_{k},\ldots,i_L\}} (q_{\phi_j}(D) + \eta_j) ~|~ q_{\phi}(D) < c_k-\alpha] \nonumber \\
&\leq & \Pr[q_{\phi}(D)+\eta \geq \max_{j \in \{i_{k},\ldots,i_L\}} (c_k + \eta_j) ~|~ q_{\phi}(D) < c_k-\alpha] \nonumber \\
&= & \Pr[q_{\phi}(D) - c_k \geq \max_{j \in \{i_{k},\ldots,i_L\}} \eta_j -\eta ~|~ q_{\phi}(D) < c_k-\alpha] \nonumber \\
&<& \Pr[\max_{j \in \{i_{k},\ldots,i_L\}} \eta_j -\eta < -\alpha] \nonumber \\
&\leq & \Pr[\max_{j \in \{i_{k},\ldots,i_L\}} \eta_j < -\alpha/2] + \Pr[\eta >\alpha/2] \nonumber \\
&\leq & (L-k+1) e^{-\frac{\alpha}{2b}}/2 + e^{-\frac{\alpha}{2b}}/2 \nonumber \\
&<& L e^{-\frac{\alpha}{2b}} \leq \beta/k
\end{eqnarray}
Hence, every selected $\phi\in a$ has $q_{\phi}(D) > c-\alpha$ with probability $1-\beta$ for $|a|=k$.

Then, we would like to show that if a count $q_{\phi}(D)$ is sufficiently large, then the probability of missing it in the output is small.
\begin{eqnarray}
&&\Pr[\phi \notin a ~|~ q_{\phi}(D) > c_k+\alpha] \nonumber \\
&=& \Pr[q_{\phi}(D)+\eta < \min_{j \in \{i_{1},\ldots,i_{k}\}} (q_{\phi_j}(D) + \eta_j) ~|~ q_{\phi}(D) > c_k+\alpha] \nonumber \\
&\leq & \Pr[q_{\phi}(D)+\eta < \min_{j \in \{i_{1},\ldots,i_k\}} (c_k + \eta_j) ~|~ q_{\phi}(D) > c_k+\alpha] \nonumber \\
&= & \Pr[q_{\phi}(D) - c_k < \min_{j \in \{i_{1},\ldots,i_k\}} \eta_j -\eta ~|~ q_{\phi}(D) > c_k+\alpha] \nonumber \\
&<& \Pr[\min_{j \in \{i_{1},\ldots,i_k\}} \eta_j -\eta > \alpha] \nonumber \\
&\leq & \Pr[\min_{j \in \{i_{1},\ldots,i_k\}} \eta_j >\alpha/2]
+ \Pr[\eta <-\alpha/2] \nonumber \\
&\leq & k e^{-\frac{\alpha}{2b}}/2 + e^{-\frac{\alpha}{2b}}/2 \nonumber \\
&<& k e^{-\frac{\alpha}{2b}} \leq \beta/L
\end{eqnarray}
Hence, all boolean formulae $\phi\in \{\phi_1,\ldots,\phi_L\}$ with $q_{\phi}(D)>c+\alpha$ are selected with probability $1-\beta$.
%With probability $1-\beta$, every linear counting query with true answer at least $c_k+\alpha$
%is selected, and every selected linear counting query has a true answer at least $c_k-\alpha$.
%\todo{double check \cite{Li:2012:PFI:2350229.2350251}}
\end{proof}
}%%%end of eat for accuracy proof
Then we show that Algorithm~\ref{algo:tcq-ltm} achieves
minimal $\epsilon$-differential privacy cost,
where $\epsilon=k/b=\frac{2k\ln (L/(2\beta))}{\alpha}$.
\begin{proof}
Fix $D=D'\cup \{t\}$. Let $(x_1,\ldots,x_L)$, respectively $(x'_1,\ldots,x'_L)$, denote the vector of answers to the set of linear counting queries $q_{\phi_1},\ldots,q_{\phi_L}$ when the table is $D$, respectively $D'$. Two properties are used: (1) Monotonicity of Counts: for all $j\in [L]$, $x_j \geq x'_j$; and (2) Lipschitz Property: for all $j\in [L]$, $1+x'_j \geq x_j$.

Fix any $k$ different values $(i_1,\ldots,i_k)$ from $[L]$, and fix noises $(\eta_{i_{k+1}},\ldots,\eta_{i_L})$ drawn from $Lap(k/\epsilon)^{L-k}$ used for the remaining linear counting queries.
Given these fixed noises, for $l\in \{i_1,\ldots,i_k\}$, we define
\begin{equation}
\eta^*_{l} = \min_{\eta}: (x_{l} + \eta > (\max_{j\in {i_{k+1},\ldots,i_L}} x_{j} + \eta_{j}))
\end{equation}
For each $l\in \{i_1,\ldots,i_k\}$, we have
\begin{eqnarray}
\lefteqn{x'_{l} + (1+\eta_{l}^*) = (1+x'_{l}) + \eta_{l}^* \geq x_{l}+\eta_{l}^* } \nonumber \\
&>&  \max_{j\in {i_{k+1},\ldots,i_L}} x_j+\eta_j \geq \max_{j\in {i_{k+1},\ldots,i_L}} x'_j+\eta_j \nonumber
\end{eqnarray}
Hence, if $\eta_{l}\geq r_{l}^*+1$ for all $l\in \{i_1,\ldots,i_k\}$, then $(i_1,\ldots,i_k)$ will be the output when the table is $D'$ and the noise vector is $(\eta_{i_1},\ldots,\eta_{i_k},\ldots,\eta_L)$. The probabilities below are over the choices of $(\eta_{i_1},\ldots,\eta_{i_k})\sim Lap(k/\epsilon)^k$.
\begin{eqnarray}
\lefteqn{\Pr[(i_1,\ldots,i_k) ~|~ D', \eta_{i_{i+1}},\ldots,\eta_{i_L}]} \nonumber \\
&\geq& \prod_{l\in \{ i_1,\ldots,i_k\}} \Pr[\eta_l \geq 1+\eta_l^*]  \
\geq \ \prod_{l\in \{ i_1,\ldots,i_k\}} e^{-k/\epsilon}\Pr[\eta_l \geq \eta_l^*] \nonumber \\
&\geq& e^{-\epsilon} \Pr[(i_1,\ldots,i_k) ~|~ D, \eta_{i_{i+1}},\ldots,\eta_{i_L}]
\end{eqnarray}
Proof of the other direction follows analogously. 

\eat{%%%%% eating the reverse direction of the proof
For the other direction, given these fixed noises, for $l\in \{i_1,\ldots,i_k\}$, we define
\begin{equation}
\eta^*_{l} = \min_{\eta}: (x_{l} + \eta > (\max_{j\in {i_{k+1},\ldots,i_L}} x'_{j} + \eta_{j}))
\end{equation}
For each $l\in \{i_1,\ldots,i_k\}$, we have
\begin{eqnarray}
&&  x_{l} + (1+\eta_{l}^*) \geq x'_{l} + (1+ \eta_{l}^*)   \nonumber \\
&>&  \max_{j\in {i_{k+1},\ldots,i_L}} (1+x'_j)+\eta_j \geq \max_{j\in {i_{k+1},\ldots,i_L}} x_j+\eta_j \nonumber
\end{eqnarray}
Hence, if $\eta_{l}\geq r_{l}^*+1$ for all $l\in \{i_1,\ldots,i_k\}$, then $(i_1,\ldots,i_k)$ will be the output when the table is $D$ and the noise vector is $(\eta_{i_1},\ldots,\eta_{i_k},\ldots,\eta_L)$. The probabilities below are over the choices of $(\eta_{i_1},\ldots,\eta_{i_k})\sim Lap(k/\epsilon)^k$.
\begin{eqnarray}
&&\Pr[(i_1,\ldots,i_k) ~|~ D, \eta_{i_{i+1}},\ldots,\eta_{i_L}] \nonumber \\
&\geq& \prod_{l\in \{ i_1,\ldots,i_k\}} \Pr[\eta_l \geq 1+\eta_l^*] \nonumber \\
&\geq& \prod_{l\in \{ i_1,\ldots,i_k\}} e^{-k/\epsilon}\Pr[\eta_l \geq \eta_l^*] \nonumber \\
&\geq& e^{-\epsilon} \Pr[(i_1,\ldots,i_k) ~|~ D', \eta_{i_{i+1}},\ldots,\eta_{i_L}]
\end{eqnarray}}
Therefore, $\epsilon$-differential privacy is guaranteed.
\end{proof}

\eat{
\subsection{\system Privacy Guarantee}\label{app:apex:proof}
\stitle{Proof for Theorem~\ref{thm:privacy}}
\begin{proof}
(1) directly follows from the definition of a valid transcript and these are the only kinds of transcripts an analyst sees when interacting with \system.

(2) can be shown as follows using induction.

\noindent{\underline{Base Case:}} When the transcript is empty, $Pr[\emptyset | D] \leq e^0 Pr[\emptyset | D']$.

\noindent{\underline{Induction step:}}
Now suppose for all $\trans_{i-1}$ of that encode valid \system transcripts of length $i-1$, $Pr[\trans_{i-1}| D] \leq e^{B_{i-1}} Pr[\trans_{i-1} | D']$. Let $\trans_i = \trans_{i-1} || [(q_i, \alpha_i, \beta_i), (\omega_i, \epsilon_i)]$ be a valid \system transcript of length $i$. Then:
\begin{eqnarray*}
\lefteqn{Pr[\trans_i | D] = Pr[\trans_{i-1} | D] Pr[[(q_i, \alpha_i, \beta_i), (\omega_i, \epsilon_i)] | D, \trans_{i-1}]}\\
& = & Pr[\trans_{i-1} | D] Pr[\mathbb{C}(\trans_{i-1}) = (q_i, \alpha_i, \beta_i)] Pr[M_i(D) = (\omega_i, \epsilon_i)]
\end{eqnarray*}
Note that the analyst's choice of query $q_i$ and accuracy requirement depends only on the transcript $\trans_{i-1}$ and not the sensitive database, and thus incurs no privacy loss (from Theorem~\ref{theorem:post}). Thus, it is enough to show that
\[Pr[M_i(D) = (\omega_i, \epsilon_i)] \leq e^{\epsilon_i} Pr[M_i(D') = (\omega_i, \epsilon_i)]\]
\noindent{\underline{Case 1:}}
When $\omega_i\neq \bot$ and $M_i$ is LM, WCQ-SM, ICQ-SM, or TCQ-LTM, the mechanism satisfies $\upperbound_i$-DP and $\epsilon_i = \upperbound_i$. Therefore, $Pr[M_i(D) = (\omega_i, \epsilon_i)] \leq e^{\epsilon_i} Pr[M_i(D') = (\omega_i, \epsilon_i)]$.

\noindent{\underline{Case 2:}}
When $\omega_i \neq \bot$ and $M_i$ is ICQ-MPM, the mechanism satisfies $\upperbound_i$-DP across all outputs. However, when either mechanism outputs $(\omega_i, \epsilon_i)$, for $\epsilon_i < \upperbound_i$, we can show that $Pr[M_i(D) = (\omega_i, \epsilon_i)] \leq e^{\epsilon_i} Pr[M_i(D') = (\omega_i, \epsilon_i)]$. In the case of ICQ-MPM, if the algorithm returns in Line~11 after $i$ iterations of the loop, the noisy answer is generated by a DP algorithm with privacy loss $\epsilon_i = \frac{j}{m}\upperbound_i$.

\noindent{\underline{Case 3:}}
Finally, when $\omega_i = \bot$  (i.e., the query is declined), the decision to decline depends on $\upperbound_i$ of all mechanism applicable to the query (which is independent of the data) rather than $\epsilon_i$ (which could depend on the data in the case of ICQ-MPM). Therefore, $Pr[M_i(D) = (\omega_i, \epsilon_i)]  = Pr[M_i(D') = (\omega_i, \epsilon_i)]$ for all $D, D'$. The proof would fail if the decision to deny a query depends on $\epsilon_i$.
\end{proof}
}

\eat{
\subsection{Data Dependent Translation for \lcc}\label{app:lccproof}
We first provide full proof for Theorem~\ref{theorem:poking_lcm}:
given a \lcc query $q_{\phi,>c}$,
for any table $D\in \tabledomain$, LCM with Poking
(Algorithm~\ref{algo:poking_lcm}),
denoted by $LCMP_{q_{\phi,>c}}^{\alpha,\beta}$
achieves $(\alpha,\beta)$-\lcc tolerance,
and satisfies $\epsilon$-differential privacy cost,
where $\epsilon=\frac{(1+f)\ln(1/\beta)}{\alpha}$.
\begin{proof}
The probability to fail the tolerance requirement is
(i) when $q_{\phi}(D)<c-\alpha$,
\begin{eqnarray}
&& \Pr[LCMP^{\alpha,\beta}_{q_{\phi,>c}}(D)=\text{True} ~|~ q_{\phi}(D)<c-\alpha] \nonumber \\
&=& \Pr[q_{\phi}(D)-c +\eta -\alpha_{0} + \alpha >0 ~|~q_{\phi}(D)-c +\alpha <0]\nonumber \\
&& + \Pr[LCM^{\alpha,\beta/2}_{q_{\phi,>c}}(D)=\text{True} ~|~ q_{\phi}(D)<c-\alpha] \nonumber \\
&<& \Pr[\eta >\alpha_{0}] + \beta/2 = e^{-\alpha_{0}\epsilon_0}/2 +\beta/2 = \beta.
\end{eqnarray}
and (ii) when $q_{\phi}(D)<c+\alpha$,
\begin{eqnarray}
&& \Pr[LCMP^{\alpha,\beta}_{\phi,>c}(D)=\text{False} ~|~ q_{\phi}(D) >c+\alpha] \nonumber \\
&=& \Pr[q_{\phi}(D)-c +\eta +\alpha_{0} - \alpha<0 ~|~ q_{\phi}(D)-c -\alpha >0] \nonumber \\
&& + \Pr[LCM^{\alpha,\beta/2}_{q_{\phi,>c}}(D)=\text{False} ~|~ q_{\phi}(D)>c+\alpha] \nonumber \\
&<& \Pr[\eta_{pp} <-\alpha_{0}]  + \beta/2 = e^{-\alpha_{0}\epsilon_0}/2 + \beta/2 =\beta
\end{eqnarray}
Hence, LCMP achieves $\alpha$-\lcc with probability $1-\beta$.
The poking part spends a privacy budget of $\epsilon_{0}$
and the second part using LCM spends a budget of $\epsilon_{LCM}$ Theorem~\ref{theorem:pc_lcm}.
By composition of differential privacy (Theorem~\ref{theorem:seq}),
LCMP satisfies $(\epsilon_0+\epsilon_{LCM})$-differential privacy.
\end{proof}

For LCM with Multi-Poking, we show proof sketch for Theorem~\ref{theorem:mpoking_lcm}
that given a \lcc query $q_{\phi,>c}$,
for any table $D\in \tabledomain$, LCM with Multi-Poking (Algorithm~\ref{algo:mpoking_lcm}), denoted by $LCMMP^{\alpha,\beta}_{q_{\phi,>c}}$,
achieves $(\alpha,\beta)$-\lcc tolerance
with $\epsilon$-differential privacy,
where $\epsilon=\frac{\ln(m/(2\beta))}{\alpha}$.
\begin{proof}(sketch)
The NoiseDown Algorithm shown in Algorithm~\ref{algo:noisedown}
correlates the new noise $\eta_{i+1}$ with noise $\eta_i$ from the previous iteration
In this way, the composition of the first $i+1$ iterations is $\epsilon_{i+1}$,
and the noise added in the $i+1$th iteration is equivalent to
a noise generated with Laplace distribution with privacy budget $\epsilon_{i+1}$
and the first $i$ iterations also satisfy $\epsilon_i$-DP for $i=0,1,\ldots,m-1$
(Theorem~9 \cite{DBLP:journals/corr/KoufogiannisHP15a}).
This approach allows data cleaner to learn the query answer with a gradual relaxation of privacy cost.
At $i$th iteration, the probability to fail is $\beta/m$.
Hence, when outputting an answer at $i$th iteration,
the probability to fail the requirement is $i\beta/m <\beta$.
The proof for the tolerance is similar to the proof of Theorem~\ref{theorem:poking_lcm}.
\end{proof}
}

\section{Composition Theorems}\label{sec:seq}
The sequential composition theorem helps assess the privacy loss of multiple differentialy private  mechanisms.
\vspace{-.1em}
\begin{theorem}[Sequential Composition  \cite{Dwork:2014:AFD:2693052.2693053}] \label{theorem:seq}
	Let $M_1(\cdot)$ and $M_2(\cdot, \cdot)$ be algorithms with independent sources of randomness that ensure $\epsilon_1$- and $\epsilon_2$-differential privacy respectively.
	An algorithm that outputs both $M_1(D) = O_1$ and $M_2(O_1, D) = O_2$
	ensures $(\epsilon_1 + \epsilon_2)$-differential privacy.
\end{theorem}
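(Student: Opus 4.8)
The plan is to verify the differential-privacy inequality directly for the composed mechanism $M(D) := \bigl(M_1(D),\, M_2(M_1(D),D)\bigr)$. Fix a pair of neighboring databases $D,D'$ and a measurable event $E\subseteq \mathcal{O}_1\times\mathcal{O}_2$ in the joint output space, and for each first-stage output $o_1$ let $E_{o_1}:=\{o_2:(o_1,o_2)\in E\}$ be its section. Since the coins of $M_1$ and $M_2$ are independent, conditioning on the first output gives
\begin{equation}
\Pr[M(D)\in E]=\int_{\mathcal{O}_1}\mu_D(o_1)\,\Pr[M_2(o_1,D)\in E_{o_1}]\,do_1,
\end{equation}
where $\mu_D,\mu_{D'}$ denote the output densities (or pmfs) of $M_1(D),M_1(D')$. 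Everything then reduces to bounding this integral by $e^{\epsilon_1+\epsilon_2}\Pr[M(D')\in E]+\delta_1+\delta_2$.

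I would first settle the pure case $\delta_1=\delta_2=0$, which carries the conceptual content (and is the case of immediate relevance here, since the default translated mechanisms use pure-DP Laplace noise). For every fixed $o_1$ the map $M_2(o_1,\cdot)$ is $(\epsilon_2,0)$-differentially private, so $\Pr[M_2(o_1,D)\in E_{o_1}]\le e^{\epsilon_2}\Pr[M_2(o_1,D')\in E_{o_1}]$ pointwise. Substituting, it remains to show $\int\mu_D(o_1)\psi(o_1)\,do_1\le e^{\epsilon_1}\int\mu_{D'}(o_1)\psi(o_1)\,do_1$ for the nonnegative bounded test function $\psi(o_1):=e^{\epsilon_2}\Pr[M_2(o_1,D')\in E_{o_1}]$. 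This is the routine upgrade of event-wise $(\epsilon_1,0)$-privacy to an inequality on expectations of nonnegative functions: write $\psi=\int_0^{\infty}\mathbf{1}[\psi>t]\,dt$, apply $(\epsilon_1,0)$-privacy to each super-level set $\{\psi>t\}$, and integrate in $t$. Chaining the two bounds gives $\Pr[M(D)\in E]\le e^{\epsilon_1+\epsilon_2}\Pr[M(D')\in E]$; induction on the number of stages then yields pure $k$-fold composition, since $(M_1,\dots,M_{k-1})$ is itself $\bigl(\sum_{i<k}\epsilon_i,0\bigr)$-private and $M_k$ is run on its output together with $D$.

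For general $\delta_1,\delta_2$ I would reduce to the pure case through the standard "overflow-mass" description of approximate privacy: a mechanism is $(\epsilon,\delta)$-private iff the hockey-stick divergence $\int\max\{0,\mu_D(o)-e^{\epsilon}\mu_{D'}(o)\}\,do$ between its two output laws is at most $\delta$ for every neighboring pair. Splitting the first-stage space at the threshold $\mu_D(o_1)=e^{\epsilon_1}\mu_{D'}(o_1)$, and each second-stage comparison likewise, the "good" part reproduces the pure argument verbatim and delivers the $e^{\epsilon_1+\epsilon_2}\Pr[M(D')\in E]$ term, while the mass outside the good part is charged to the first-stage overflow ($\le\delta_1$) and the second-stage overflow ($\le\delta_2$, and this is exactly where we need $M_2(o_1,\cdot)$ to be private for \emph{every} fixed $o_1$ so the bound is uniform in $o_1$). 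I expect the genuine obstacle to be precisely this last accounting: a one-line split of the overflow charges a spurious multiplicative $e^{\epsilon_2}$ to $\delta_1$ (or $e^{\epsilon_1}$ to $\delta_2$), and trimming the additive slack down to the stated $\delta_1+\delta_2$ requires the careful coupling/hockey-stick calculus of \cite{Dwork:2014:AFD:2693052.2693053} rather than a naive union bound --- the setup, the pure case, and the induction are all straightforward by comparison.
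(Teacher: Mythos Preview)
The paper does not prove this theorem at all: it is stated with a citation to \cite{Dwork:2014:AFD:2693052.2693053} and then used as a black-box tool in the privacy analyzer, so there is no ``paper's own proof'' to compare your proposal against.

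That said, your proposal is the standard proof of sequential composition and is correct. The pure case is exactly as you describe: condition on the first-stage output, apply $M_2$'s guarantee pointwise in $o_1$, then lift $M_1$'s event-wise bound to an expectation bound via the layer-cake decomposition of the nonnegative test function $\psi$. For the approximate case, the hockey-stick / overflow-mass framing is the right one, and you have correctly identified the only genuine subtlety: a naive split does pick up a factor $e^{\epsilon_2}$ on $\delta_1$, and getting the clean $\delta_1+\delta_2$ requires first carving out the ``bad'' set $B_1=\{o_1:\mu_D(o_1)>e^{\epsilon_1}\mu_{D'}(o_1)\}$ for $M_1$, charging its total overflow mass (at most $\delta_1$) directly and \emph{before} touching $M_2$, and only on the complement $B_1^c$ invoking $M_2$'s $(\epsilon_2,\delta_2)$ bound, which then contributes at most $\delta_2$ uniformly in $o_1$. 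Your sketch flags exactly this as the step needing care; the rest is routine.
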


%Given a sequence of data independent translation mechanisms $M_1,M_2,\ldots,M_i$ shown in Section~\ref{sec:translation}
%that satisfy differential privacy with cost of $\epsilon_1, epsilon_2,\ldots,\epsilon_i$ respectively,
%$B_i$ are $(\epsilon_1+\cdots+\epsilon_i)$.
%For data dependent translation mechanisms,
%\system  considers the worst privacy loss of $M_i$.
%For instance, if $M_i$ is multi-poking mechanism (Algorithms~\ref{algo:mpm}),
%the estimated privacy is obtained by incrementing the privacy loss $B_{i-1}$ by the worst privacy cost outputted from $\transFunc().\upperbound$
%while the actual privacy loss $B_i$ is obtained by increasing $B_{i-1}$ by the outputted privacy cost $\epsilon_i$ of $\runFunc()$, which lies between the upper bound and lower bound specified by $\transFunc()$.

Another important property of differential privacy
is that  postprocessing the outputs  does not degrade privacy.
\begin{theorem}[Post-Processing \cite{Dwork:2014:AFD:2693052.2693053}]\label{theorem:post}
	Let $M_1(\cdot)$ be an algorithm that satisfies $\epsilon$-differential privacy. If applying an algorithm $M_2$ the output of $M_1(\cdot)$, then the overall mechanism $M_2 \circ M_1(\cdot)$ also satisfies $\epsilon$-differential privacy.
\end{theorem}
All steps in the post-processing algorithm do not access the raw data, so they do not affect the privacy analysis.
Hence, any smart choices made by the data analyst after receiving the noisy answers from the privacy engine
are considered as post-processing steps and do not affect the privacy analysis.

\eat{
\subsection{Vary ICQ-MPM Poking Steps}\label{app:icq_mpm}

\begin{figure}[h]
	\centering
	\includegraphics[width=0.3\textwidth]{../../figure/experiments/cost_mpm}%
	\caption{Incraseing ICQ-MPM max poking step increases the estimated privacy cost from \system, but the real privacy cost in fact decreases because on average, per poking costs less. \am{Can cut this for space if needed.}}\label{fig:cost_mpm}
\end{figure}

Figure~\ref{fig:cost_mpm} studies the privacy cost as a factor of max poking steps allowed in ICQ-MPM using $QI4$. 
With the increasing of max poking steps, on the one side, the estimated cost from privacy translator increases, obviously because allowing more poking implies more privacy leakage. On the other side, the average cost per poking becomes smaller; therefore, given a fixed workload and ICQ counting threshold, the real cost per poking in fact decreases.
}

\ifpaper
%omit cleaner model
\else
\full{
\section{Cleaner Model}\label{sec:cleaner_model}
The cleaning engineer typically narrows down the choices through a sequence of actions, consisting of issuing queries and making choices based on the answer of issued query. We use a {\em strategy} to denote a class of actions that use the same set of queries but make different choices.
Figure~\ref{fig:blockstrategies} shows the strategies for the blocking task. These two strategies are different as they use different query types, though they share the same criteria of decision choices. In particular, the queries $q1,q5$ in the strategy shown in Figure~\ref{fig:blockstrategy1}
are \wcq, while the queries $q1',q5'$ in Figure~\ref{fig:blockstrategy2} are \tcq and \icq respectively. This case exemplifies how a cleaning engineer constructs a single path (disjunction) of predicates to form a blocking function, though a real function can be more complex~\cite{DBLP:journals/tkde/ElmagarmidIV07}. Similarly, Figure~\ref{fig:matchstrategies} illustrates two matching strategies using \wcq (Figure~\ref{fig:matchstrategy1}) and \icq/\tcq (Figure~\ref{fig:matchstrategy2}), respectively, where the matching function is formed as a conjunction of predicates.

\begin{figure}[t]
\centering
\begin{subfigure}[t]{0.5\textwidth}
        \includegraphics[width=\textwidth]{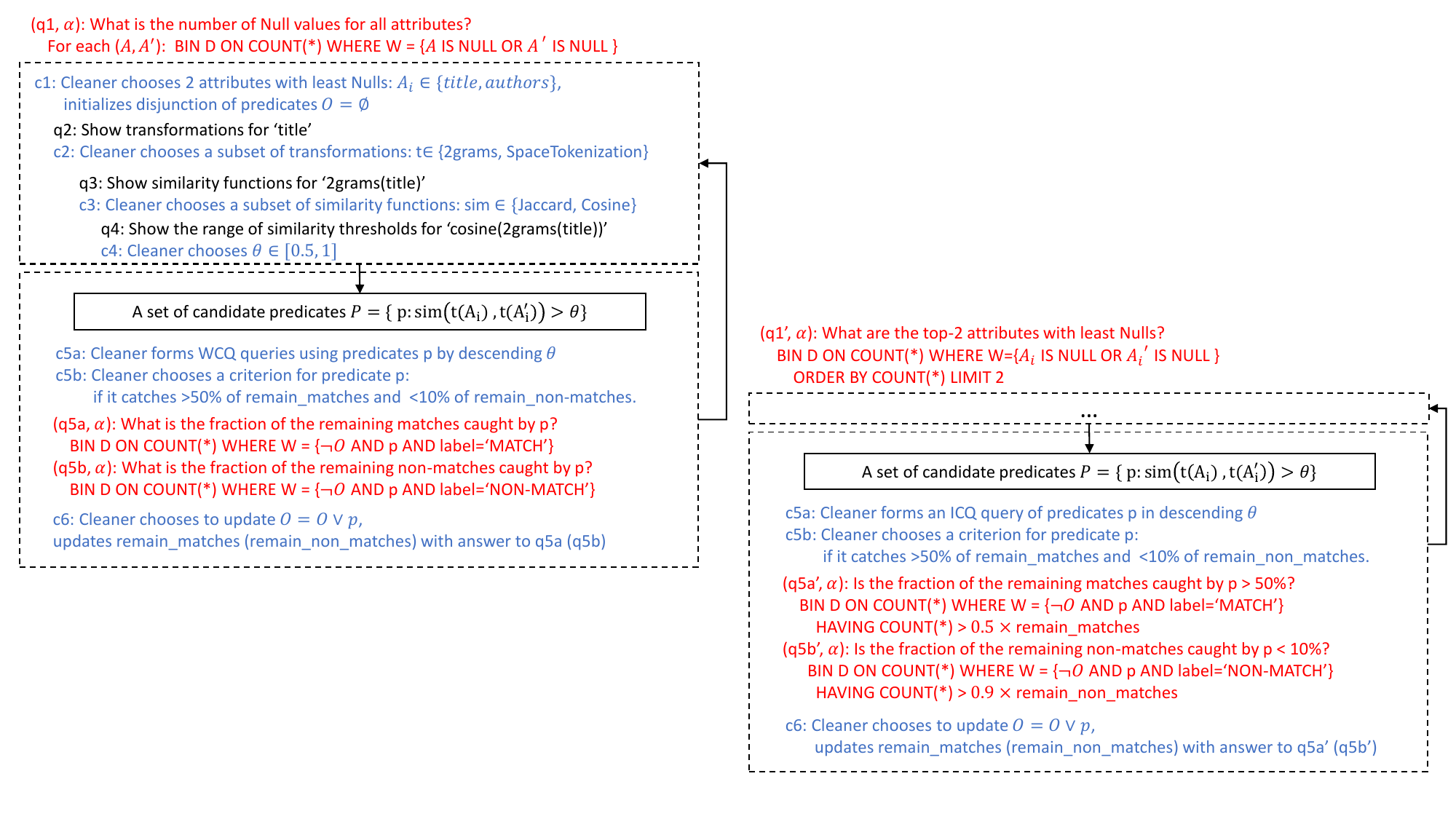}
        \caption{Strategy instance 1 for blocking}\label{fig:blockstrategy1}
\end{subfigure}
\begin{subfigure}[t]{0.5\textwidth}
        \includegraphics[width=\textwidth]{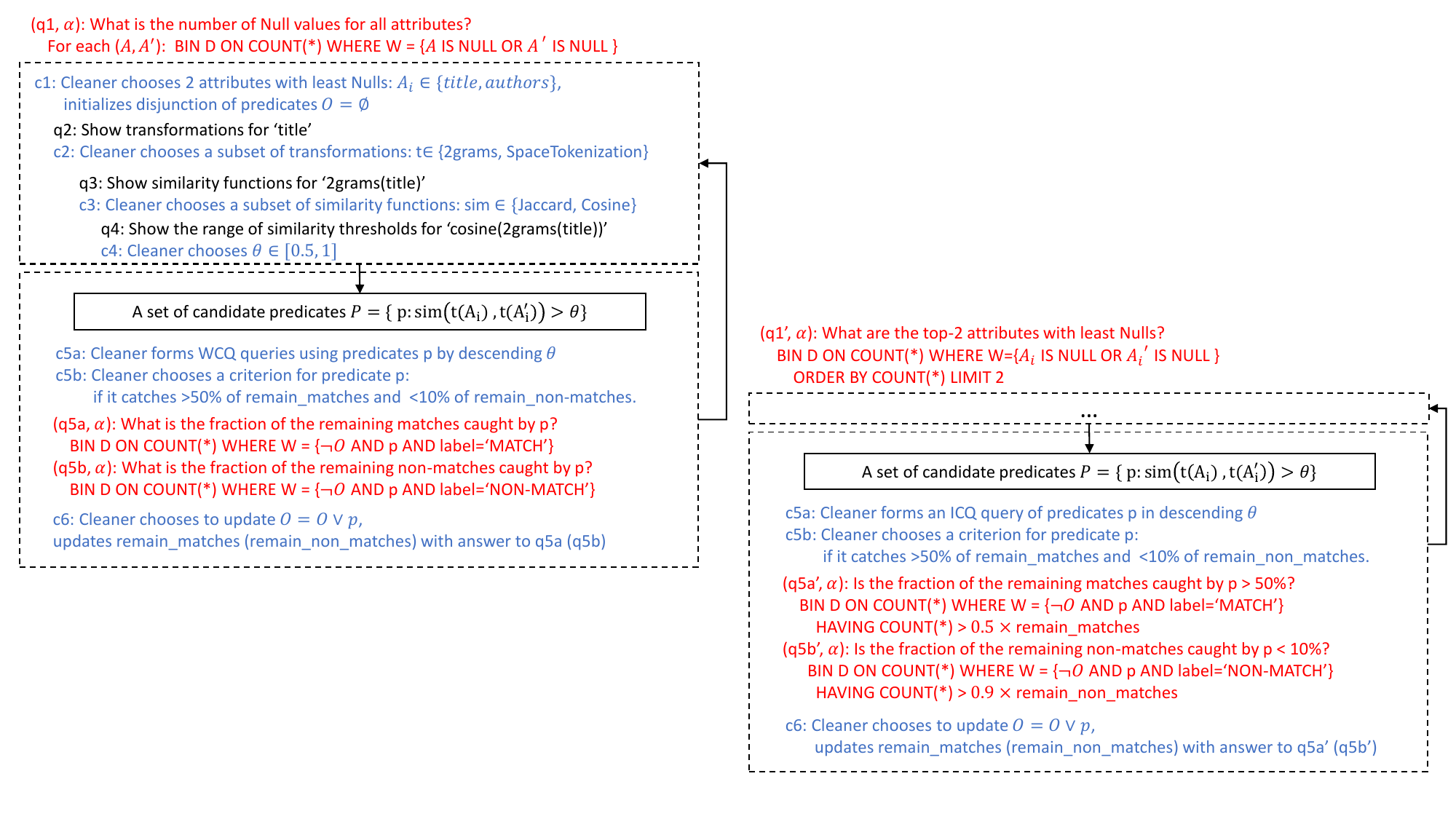}
        \caption{Strategy instance 2 for blocking}\label{fig:blockstrategy2}
\end{subfigure}
\caption{Two strategies for blocking}\label{fig:blockstrategies}
\end{figure}

\begin{figure}[h]
\centering
\begin{subfigure}[t]{0.5\textwidth}
        \includegraphics[width=\textwidth]{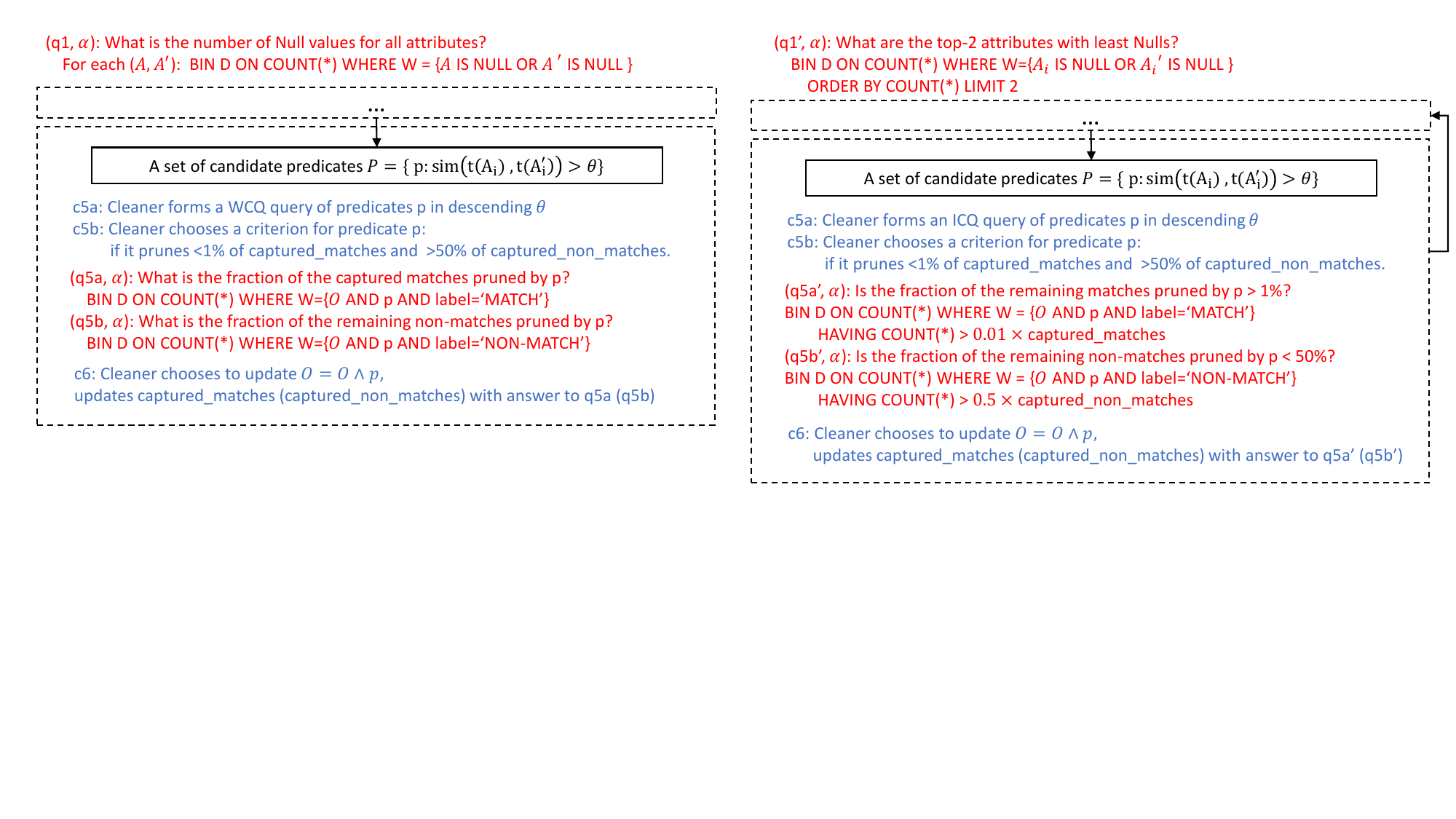}
        \caption{Strategy instance 1 for matching (MS1)}\label{fig:matchstrategy1}
\end{subfigure}
\begin{subfigure}[t]{0.5\textwidth}
        \includegraphics[width=\textwidth]{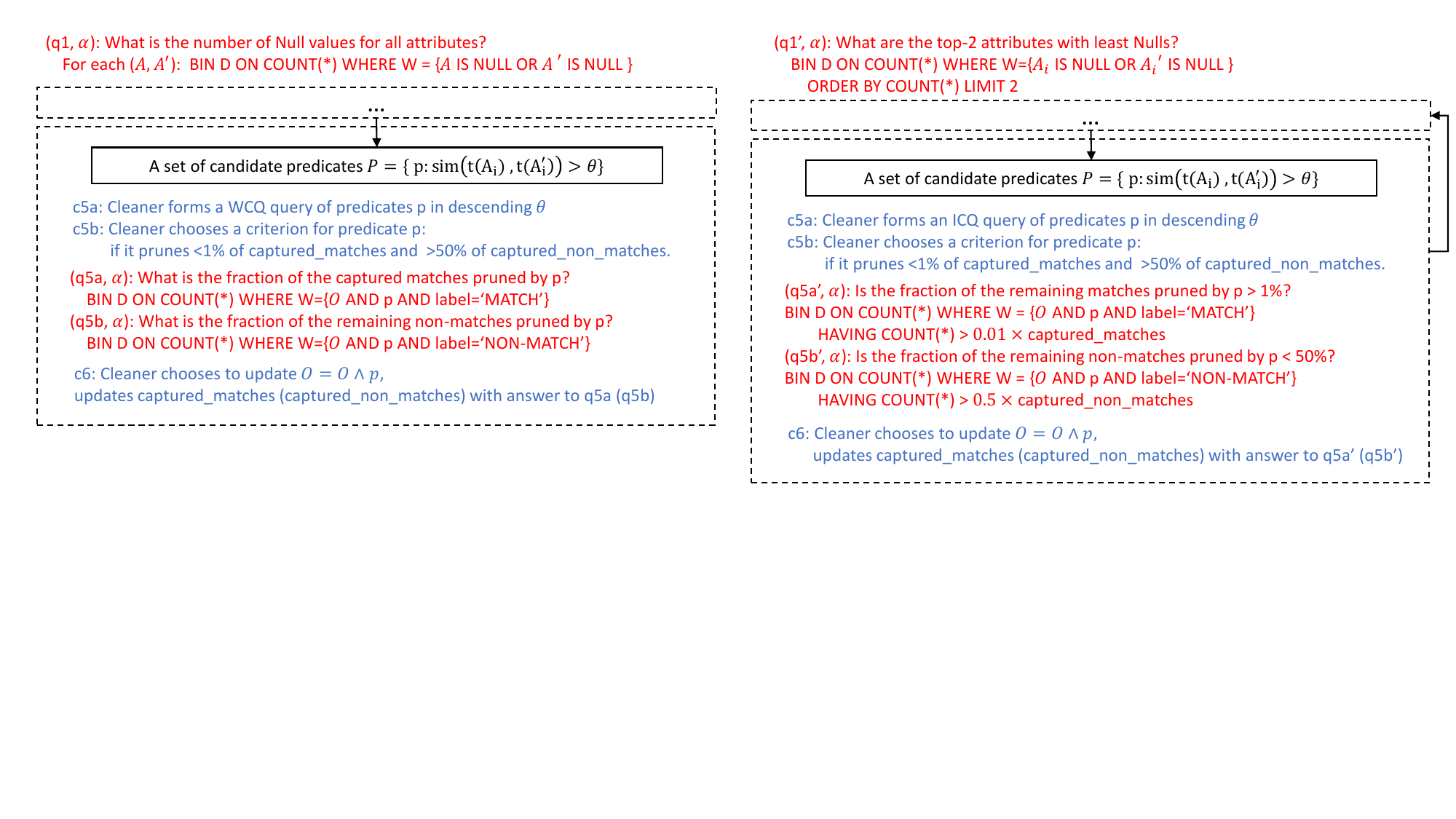}
        \caption{Strategy instance 2 for matching (MS2)}\label{fig:matchstrategy2}
\end{subfigure}
\caption{Two strategies for matching}\label{fig:matchstrategies}
\end{figure}

The cleaner model encodes the space of all the parameters involved in the cleaner's decisions $c1$-$c6$. Table~\ref{tab:cleanermodel1} summarizes the space of all the parameters for $c1$-$c6$ in blocking strategy 1. From $c1$ to $c4$, the program chooses 
(1) a subset of  attributes $x_1$ of size ranging from $2$ up to the total number of attributes $|\attrlist|$, 
(2) a subset of transformations $x_2$ from  $\trans = \{2grams, 3grams, SpaceTokenization\}$, 
(3) a subset of similarity functions $x_3$ from $\simf = \{Edit, SmithWater, Jaro, Cosine, Jaccard, Overlap, Diff\}$, and 
(4) $x_6$ thresholds from the range of $[x_4,x_5]$, where $x_4\in (0,0.5)$, $x_5\in(0.5,1)$, and $x_6\in \{2,3,4,5,6\}$. 
The cross product of these choices forms a set of predicates $P$, and $c5a$ picks an ordering $x_7$, one of the permutation of $P$. In $c5b$, the model sets the criterion for pruning or keeping a predicate $p$ from the top list of $P$. In particular, the model sets $x_8$ and $x_9$ as the minimum fraction of the remaining matches caught and the maximum fraction of the remaining non-matches caught by  $p\lor O$ respectively, where $x_8\in[0.2,0.5]$ and $x_9\in [0.1,0.2]$. These values are reset as $x_8=x_8/x_{10}$ and $x_9=x_9x_{10}$ where $x_{10}\in \{2,3\}$ if all predicates have been checked but $O=\emptyset$. In $c6$, the model considers three possible styles of cleaners on trusting the noisy answers:  \emph{neutral} style corresponds to trust the noisy answers; for \emph{optimistic} (\emph{pessimistic}) style, the cleaner trusts the values by adding (subtracting) $\alpha/5$ to (from) the noisy answers. 
If the criterion is met and the blocking cost over $D_t$ is less than a fixed cutoff threshold (e.g., a hardware constraint, we set $550$ for {\tt citations} dataset). An instance of all variables $\cleanermodel = (x_1,\ldots,x_{11})$ in Table~\ref{tab:cleanermodel1} forms a concrete cleaner. The model for other strategies is similarly constructed.

\begin{table}[t]
  	\center
	\caption{A Cleaner Model for Blocking Strategy 1}
	\label{tab:cleanermodel1}
	\begin{tabularx}{\columnwidth}{l|X} \hline
   	 \multicolumn{2}{c}{A cleaner model $\cleanermodel=\{x_1,\ldots,x_{11}\}$ for c1-c6 in Figure~\ref{fig:blockstrategy1}}  \\ \hline
 	$c1$ & Choose $x_1$, an ordered subset of attributes with least Nulls, where $x_1\in \{2, 3, |\attrlist|\}$ \\ \hline
	$c2$ & Choose $x_2$, an ordered subset of transformations from $\trans$, where $|x_2|\in \{1,2,3\}$  \\ \hline
	$c3$ & Choose $x_3$, an ordered subset of similarity functions from $\simf$, where $|x_3|\in \{2,3,4,5,6\}$ \\ \hline
	$c4$ & Choose  a lower bounds $x_4$ and a upper bound $x_5$ of threshold range, where $x_5\in (0.5,1)$, $x_4\in (0,0.5)$, and evenly choose $x_6$ thresholds in order of either ASC or DSC, and $x_6 \in \{2,3,4,5,6\}$\\ \hline
   $c5a$ & Predicate $p$ is sequentially selected based the order of $x_{7} \in \text{Permute}(x_1 \times x_2 \times x_3 \times x_6)$
        \\ \hline
       $c5b$ & Chooses a criterion for predicate $p$: if it catches $<x_8$ fraction of the remaining matches and $>x_9$ fraction of remaining non-matches, where $x_8 \in [0.2,0.5]$ and $x_9\in [0.1,0.2]$. Reset $x_8=x_8/x_{10}$ and $x_9=x_{10}x_9$, $x_{10} \in \{2,3\}$ if all queries have been asked but $O=\emptyset$.
       \\ \hline
	$c6$ & Choose method $x_{11}\in\{$neutral, optimistic, pessimistic$\}$ to take tolerance into account. If conditions are met, and the blocking cost is less than cutoff threshold, add $p$ to the output $O$ and remove it from $P$.
  \\ \hline
	\end{tabularx}
\end{table}
}\fi

%\section{Relationship to Accuracy-Constrained ERM}
%\label{app:erm}

\eat{The accuracy-first approach in accuracy-constrained ERM~\cite{accuracyfirst:nips17}
solves the translation problem
by first computing the empirical error in the output of a mechanism
and then differentially private testing if the empirical error $\hat{\alpha}$ is bounded by the accuracy constraint $\alpha$.
The problems considered by this approach typically have an empirical error with small sensitivity.
This is not true for many queries we considered.
For example, for queries with prefix workload $\workload$, the empirical error $\hat{\alpha}$ has the same high sensitivity as $\|\workload\|_1$.
Similarly, for an \icq, the sensitivity to the empirical error in the output can be as large as the workload size.
Hence, we choose to use the theoretical property of the noise
to translate accuracy requirement to privacy cost in \system.}

\ifpaper
\conf{
\section{Additional Evaluation}\label{sec:f1}

}\else
%%f1 part is in main body
\fi

%!TEX root=./main.tex
\section{Related Work}\label{sec:related}

\stitle{DP in practice.} Differential privacy~\cite{DBLP:conf/tcc/DworkMNS06} has emerged as 
a popular standard in real-world analysis products~\cite{Erlingsson14Rappor, Greenberg16Apples, haney17:census, Johnson:2013:PDE:2487575.2487687, Li:2014:DWA:2732269.2732271, machanavajjhala08onthemap}. There are well known techniques for special tasks such as answering linear counting queries\cite{Hay:2016:PED:2882903.2882931, Li:2015:MMO:2846574.2846647, hdmm18}, top-k queries~\cite{Lee:2014:TFI:2623330.2623723} and frequent itemset mining~\cite{Li:2012:PFI:2350229.2350251,Zeng:2012:DPF:2428536.2428539}. However, all the proposed work focuses on a particular type of query and there is no clear winner between these proposed algorithms for a given query. \system considers the key state-of-the-art differentially private algorithms~\cite{Li:2015:MMO:2846574.2846647, Dwork06differentialprivacy, Li:2014:DWA:2732269.2732271} and makes decision on the algorithms on behalf of the data analyst.

General purpose frameworks like PINQ \cite{McSherry:2009:PIQ:1559845.1559850}, wPINQ \cite{Proserpio:2014:CDS:2732296.2732300} and $\epsilon$ktelo \cite{ektelo} allow users to write programs in higher level languages for various tasks. These systems automatically prove that every program written in this framework ensures differential privacy. FLEX \cite{DBLP:journals/corr/JohnsonNS17} allows users to answer a SQL query under a specified privacy budget. Unlike all these systems, \system is the first that allows analysts ask a sequence of queries in high level language and specify only an accuracy bound, rather than a privacy level. \system automatically computes the privacy level to match the accuracy level.

\eat{
Differential privacy has been widely adopted in organizations~\cite{haney17:census,machanavajjhala08onthemap,Vilhuber17Proceedings, Erlingsson14Rappor, Greenberg16Apples, Johnson:2013:PDE:2487575.2487687}, and is perfectly compatible with privacy laws (e.g., GDPR~\cite{GDPR}) which can be used to quantify the privacy loss that users only need to consent, or as a maximum budget to be spent without asking users consent.
}

\balance
\stitle{Accuracy constraints in DP.} Our problem is similar in spirit to Ligett et al.~\cite{accuracyfirst:nips17}, which also considers analysts who specify accuracy constraints.
%While their work is theoretical with a focus on private machine learning, our focus is to build a query answering system for exploring private data with accuracy guarantees. 
Rather than concentrating on private machine learning theoretically, our focus is to explore private data with accuracy guarantees. 
The main technical differences are highlighted below.

%First, unlike Ligett et al., \system does not need to take a set of privacy budgets as an input. For the exploration queries in \system, there are more than just one differentially private mechanisms for each query type, and \system automatically picks the optimal. Second, for the queries and mechanisms supported by \system, using Ligett et al's methods would be overkill due to (a) the extra privacy cost $\epsilon_0$ to test the empirical error, and (b) high sensitivity in many exploration queries.

First, the end-to-end problems in \system and the approach in Ligett et al.~\cite{accuracyfirst:nips17} are different. \system aims to translate a \emph{given query with accuracy bound}, $(q,\alpha,\beta)$, to a differentially private mechanism that achieves this accuracy bound with the minimal privacy cost, $(M, \epsilon)$. On the other hand, Ligett et al.~\cite{accuracyfirst:nips17} takes as input a \emph{given mechanism with accuracy bound and a set of privacy budgets}, $(M, \alpha,\beta, \{\epsilon_1<\epsilon_2<\cdots <\epsilon_T\})$, and outputs the minimal privacy cost (epsilon) for $M$ to achieve the accuracy bound. Unlike Ligett et al., \system does not need to take a set of privacy budgets as an input. For the exploration queries in \system, there are more than just one differentially private mechanisms for each query type, and none of these mechanisms dominate the others. Thus, in this sense, the problem solved by \system is more general than the one solved by Ligett et al.

The second key difference between the approaches is the following. \system currently only supports mechanisms for which the relationship between the accuracy bound and the privacy loss epsilon can be established analytically. On the other hand, Ligett et al. can handle arbitrarily complex mechanisms, and use empirical error of the mechanisms to pick the epsilon. In this way, the solution proposed by Ligett et al. applies a larger class of mechanisms. Nevertheless, for the queries and mechanisms supported by \system, using Ligett et al.'s methods would be overkill in two ways: (1) there is an extra privacy cost $\epsilon_0$ to test the empirical error, (2) since the exploration queries are variations of counts, the sensitivity of the error will be so high that the noise introduced to the empirical error could limit our ability to distinguish between different epsilon values. For example, for a simple counting query of size 1, the maximum privacy cost required in \system is $\ln(1/\beta)/\alpha$ while the privacy cost with Ligett et al. is more than $\epsilon_0=16(\ln(2T/\beta))/\alpha$. When \system applies a data-dependent approach, the privacy cost can be even smaller than $\ln(1/\beta)/\alpha$. For a prefix counting query of size $L$ and sensitivity of $L$, the best privacy cost achieved in \system is $O(\log L)$, but $\epsilon_0$ in [22] is $O(L)$ as the sensitivity of the error to the final query answer is $L$ (for both Laplace and strategy-based mechanisms).  Similarly, for iceberg counting queries (\icq), the sensitivity of error to the final query answer is large and results in a large $\epsilon_0$ for the differentially private testing. Thus, using the method in Ligett et al. would not help the currently supported mechanisms and queries in \system. In the future, we will add more complex mechanisms into \system (like DAWA~\cite{Li:2014:DWA:2732269.2732271}, MWEM~\cite{Hardt:2012:SPA:2999325.2999396} that add data dependent noise) and study whether the methods of Ligett et al. can be adapted to our setting.

%%%% EATING CLEANING RELATED WORK
\eat{
%Existing data exploration tools and research mainly focus
%improving the effectiveness and efficiency of data exploration.
To support data exploration, many useful tools such as
Trifacta~\cite{DBLP:conf/chi/KandelPHH11} and Tableau~\cite{polaris02}
have been developed and followed by a number of research directions.
These research work mainly focus on (i) enhancing the effectiveness of data exploration
in identifying interesting or relevant data items or queries~\cite{DBLP:conf/cidr/SellamK13,Fan:2011:ISQ:2004686.2005644}
or (ii) improving the performance of data exploration in answering exploratory queries efficiently
via redesigning database systems with adaptive storage or indexes~\cite{Alagiannis:2014:HHA:2588555.2610502,Halim:2012:SDC:2168651.2168652}
or via building query optimizations over the database engine~\cite{Cormode:2012:SMD:2344400.2344401,Agarwal:2014:KYW:2588555.2593667,
Agarwal:2013:BQB:2465351.2465355,aqpplus16,revisitaqp17, Tauheed:2012:SPL:2350229.2350267, Kalinin:2014:IDE:2588555.2593666}.
In particular, approximate query processing techniques~\cite{Cormode:2012:SMD:2344400.2344401,Agarwal:2014:KYW:2588555.2593667,
Agarwal:2013:BQB:2465351.2465355,aqpplus16,revisitaqp17} trade-off query accuracy for better performance,
where the analyst can explore the data with tolerable noisy answers.
However, these approaches do not support data exploration over sensitive data,
where individual records in the data needs privacy protection (e.g. health records in a hospital).
These settings either force the analyst to give up the opportunity of exploring the data by treating the dataset as a black box,
or force the data owner to give up the privacy even if the records are encrypted.
}

\eat{Differential privacy~\cite{DBLP:conf/tcc/DworkMNS06} has emerged as the state-of-the-art privacy guarantee
that permits noisy but accurate answers to aggregate queries, while provably bounding the information leakage about any single record in the database.
This model allows the data owner to easily collaborate with external data analysts on exploring data.
However, existing approaches using differential privacy for data exploration
focus on one specific domain~\cite{Johnson:2013:PDE:2487575.2487687, visdpt16} and hence are not extensible to general domains.
There are general systems for differentially private query answering over relations such as
PINQ \cite{McSherry:2009:PIQ:1559845.1559850}, wPINQ \cite{Proserpio:2014:CDS:2732296.2732300} and
elastic sensitivity \cite{DBLP:journals/corr/JohnsonNS17},
but it is also hard for data owners or the data analysts who are non-privacy experts to use them.
However, \system is the first system where the data analyst specifies accuracy constraints on queries that are translated into privacy losses by the system.
There is concurrent work by Ligett et al \cite{accuracyfirst:nips17}
that considers analysts who specify accuracy constraints for machine learning tasks.
\system uses theoretical properties of the mechanisms to perform the error to accuracy translation,
while Ligett et al develop a mechanism to choose the privacy level based on the actual error.}

\stitle{Data cleaning on private data.} In our case study, we use \system as a means to help tune a cleaning workflow (namely entity resolution) on private data.  Amongst prior work on cleaning private data \cite{Chiang:2018:IPS:3208074.3190577, private_clean}, the most relevant is PrivateClean \cite{private_clean} as it uses differential privacy too. However, similarities to \system stop with that. First, PrivateClean assumes a different setting, where no active data cleaner is involved. The data owner perturbs the dirty data without cleaning it, while the data analyst who wants to obtain some aggregate queries will clean the perturbed dirty data. Moreover, all the privacy perturbation techniques in PrivateClean are based on record-level perturbation, which (1) only work well for attributes with small domain sizes, and (2) has asymptotically poorer utility for aggregated queries. At last, the randomized response technique used for sampling string attributes in PrivateClean does not satisfy differential privacy -- it leaks the active domain of the attribute.

%These cleaning steps can be considered as post-processing steps,
%and hence incur no further privacy loss.
\eat{However, to clean the private dirty data,
the data analyst should have the prior knowledge on
the suitable set of transformations required for the given data.
In our setting, the set of transformations for cleaning the given data
are the output of the exploration process by the cleaning engineer,
and hence should not be known in advance.}
\eat{On the other hand, though InfoClean~\cite{Chiang:2018:IPS:3208074.3190577}
shows better cleaning quality than PrivateClean,
this approach uses a bounded information disclosure across attributes as a privacy measure,
which is different from differential privacy used by PrivateClean or \system.
There is also no guarantee on the overall privacy loss if
two cleaning algorithms with different objective functions are applied the same data,
and is susceptible to reconstruction attacks.}

\eat{
Tools for data cleaning and data integration
\cite{DBLP:conf/chi/KandelPHH11,DBLP:conf/cidr/StonebrakerBIBCZPX13,
DBLP:conf/vldb/RamanH01, DBLP:conf/cidr/AbedjanMGISPO15, DBLP:conf/sigmod/Galhardas00} have been developed for enterprises to unify, clean, and transform their data. These tools are inapplicable for cleaning private data.

In our setting, the set of transformations for cleaning the given data
are the output of the exploration process by the cleaning engineer,
and hence should not be known in advance.
Moreover, all the privacy perturbation techniques in PrivateClean
are based on record-level perturbation,
which (a) only work well for attributes with small domain sizes, and (b)
has asymptotically poorer utility for aggregated queries.
Moreover, the randomized response technique used for sampling string attributes in PrivateClean does not satisfy differential privacy -- it leaks the active domain of the attribute. In experiments not shown, we reimplement PrivateClean so that it satisfies DP. Its quality was consistently poor (0.5 recall for Blocking and $<0.8$ F-1 for Matching on {\tt restaurants}) for all privacy levels considered in this paper, while \system is able to achieve high quality for reasonable privacy levels.
}

\stitle{Relationship to Privacy Laws.} \system requires the data owner to only specify the overall privacy budget $B$. This setup may support and realize privacy laws in the future (e.g. GDPR~\cite{GDPR}) to manage the privacy loss of users where $B$ can be seen as the maximum budget to be spent without asking users' consent.

\section{Queries Supported by \system}\label{app:otherqueries}
Besides supporting linear counting queries, our algorithms in \system can be easily extended to support SUM() queries. Queries for MEDIAN() (and percentile) can be supported by first querying the CDF (using a \wcq), and finding the median from that. GROUPBY can be expressed as a sequence of two queries: first query for values of an attribute with COUNT(*) > 0 (using \icq), and then query the noisy counts for these bins (using \wcq). Similarly, if the aggregated function $f()$ in HAVING differs from the aggregated function $g()$, \system can express it as a sequence of two queries: first query for bins having $f()>c$ (using \icq), and then apply $g()$ on these bins (using \wcq).

However, no differentially private algorithms can accurately report MAX() and MIN(). Non-linear queries such as AVG() and STD() can have very sensitive error bounds to noise when computed on a small number of rows (like the measures discussed in Section~\ref{sec:accuracy}). Hence, supporting non-linear queries would need new translation mechanisms. Moreover, \system can support queries with foreign key joins (that do not amplify the contribution of one record), but designing accurate differentially private algorithms for queries with general joins is still an active area of research.

\end{document}